\newenvironment{breakablealgorithm}
  {
   \begin{flushleft}
     \refstepcounter{algorithm}
     \hrule height.8pt depth0pt \kern2pt
     \renewcommand{\caption}[2][\relax]{
       {\raggedright\textbf{\ALG@name~\thealgorithm} ##2\par}%
       \ifx\relax##1\relax 
         \addcontentsline{loa}{algorithm}{\protect\numberline{\thealgorithm}##2}%
       \else 
         \addcontentsline{loa}{algorithm}{\protect\numberline{\thealgorithm}##1}%
       \fi
       \kern2pt\hrule\kern2pt
     }
  }{
     \kern2pt\hrule\relax
   \end{flushleft}
  }
\DeclareMathOperator*{\E}{\mathbb{E}}
\let\Pr\relax
\DeclareMathOperator*{\Pr}{\mathbb{P}}
\DeclareMathOperator{\tr}{tr}
\DeclareMathOperator{\fd}{FreqDirUpdate}
\DeclareMathOperator{\re}{ApproximateRidgeScores}
\DeclareMathOperator{\colspan}{span}
\newcommand{\R}{\mathbb{R}}
\DeclareMathOperator*{\argmin}{arg\,min}
\newcommand{\poly}{\mathop\mathrm{poly}}
\DeclareMathOperator{\nnz}{nnz}
\newcommand{\eqdef}{\mathbin{\stackrel{\rm def}{=}}}
\newcommand{\norm}[1]{\|#1\|}
\newcommand{\bs}[1]{\boldsymbol{#1}}
\newcommand{\bv}[1]{\mathbf{#1}}
\newcommand{\algoname}[1]{\textnormal{\textsc{#1}}}
\newtheorem*{rep@theorem}{\rep@title}
\newcommand{\newreptheorem}[2]{%
\newenvironment{rep#1}[1]{%
 \def\rep@title{#2 \ref{##1}}%
 \begin{rep@theorem}}%
 {\end{rep@theorem}}}
\newtheorem{theorem}{Theorem}
\newtheorem{lemma}[theorem]{Lemma}
\newtheorem*{lemma*}{Lemma}
\newtheorem{definition}[theorem]{Definition}
\title{Input Sparsity Time Low-Rank Approximation \\ via Ridge Leverage Score Sampling 
}
\author{Michael B. Cohen \hspace{4em} Cameron Musco  \hspace{4em} Christopher Musco\thanks{Part of this work was completed while the author interned at Yahoo Labs, NYC.}
\vspace{1em}
\\ \textit{Massachusetts Institute of Technology, EECS}
\\ \textit{Cambridge, MA 02139, USA}
\\ Email: \{micohen,cnmusco,cpmusco\}@mit.edu}
\begin{document}
\maketitle

%
%

\begin{abstract}
We present a new algorithm for finding a near optimal low-rank approximation of a matrix $\bv{A}$ in $O(\nnz(\bv{A}))$ time. Our method is based on a recursive sampling scheme for computing a representative subset of $\bv{A}$'s columns, which is then used to find a low-rank approximation.

This approach differs substantially from prior $O(\nnz(\bv{A}))$ time algorithms, which are all based on fast Johnson-Lindenstrauss random projections. It matches the guarantees of these methods while offering a number of advantages.

Not only are sampling algorithms faster for sparse and structured data, but they can also be applied in settings where random projections cannot. For example, we give new single-pass streaming algorithms for the column subset selection and projection-cost preserving sample problems. Our method has also been used to give the fastest algorithms for provably approximating kernel matrices \cite{kernelStuff}.
\end{abstract}

\thispagestyle{empty}
\clearpage
\setcounter{page}{1}


\section{Introduction}
Low-rank approximation is a fundamental task in statistics, machine learning, and computational science. The goal is to find a rank $k$ matrix that is as close as possible to an arbitrary input matrix $\bv{A} \in \R^{n\times d}$, with distance typically measured using the spectral or Frobenius norms.

Traditionally, the problem is solved using the singular value decomposition (SVD), which takes $O(nd^2)$ time to compute. 
This high cost can be reduced using iterative algorithms like the power method or Krylov methods, which require just $O(\nnz(\bv{A})\cdot k)$ time per iteration, where $\nnz(\bv{A})$ is the number of non-zero entries in $\bv{A}$\footnote{The number of iterations depends on the accuracy $\epsilon$ and/or spectral gap conditions. See \cite{blanczos} for an overview.}.

More recently, the cost of low-rank approximation has been reduced even further using sketching methods based on Johnson-Lindenstrauss random projection \cite{sarlos2006improved}. Remarkably, so-called ``sparse random projections''  \cite{clarkson2013low,meng2013,osnap, bourgain,Cohen:2016} give algorithms that run in time\footnote{$\tilde{O}(\cdot)$ hides logarithmic factors, including a failure probability dependence.}:
\begin{align*}
O(\nnz(\bv{A})) + \tilde{O}\left(n\cdot \poly(k,\epsilon)\right).
\end{align*}
These methods output a low-rank approximation within a $(1+\epsilon)$ factor of optimal when error is measured using the Frobenius norm. They are typically referred to as running in ``input sparsity time'' since the $O(\nnz(\bv{A}))$ term is considered to dominate the runtime. 

Input sparsity time algorithms for low-rank approximation were an important theoretical achievement and have also been influential in practice. Implementations are now available in a variety of languages and machine learning libraries \cite{matlabrandsvd,cpprandsvd,libskylark,scikit-learn,scalarandsvd,martinsonImp}. 

\subsection{Our Contributions}
We give an entirely different approach to obtaining input sparsity time algorithms for low-rank approximation. Random projection methods are based on multiplying $\bv{A}$ by a sparse random matrix $\bs{\Pi} \in \R^{d\times poly(k,\epsilon)}$ to form a smaller matrix $\bv{A}\bs{\Pi}$ that contains enough information about $\bv{A}$ to compute a near optimal low-rank approximation. Our techniques on the other hand are based on \emph{sampling} $\tilde O(k/\epsilon)$ columns from $\bv{A}$, and computing a low-rank approximation using this sample.

Sampling itself is simple and extremely efficient. However, to obtain a good approximation to $\bv{A}$, columns must be sampled with non-uniform probabilities, carefully chosen to reflect their relative importance. It is known that variations on the standard statistical leverage scores give probabilities that are provably sufficient for low-rank approximation \cite{sarlos2006improved,betterSubspaceScores,kmeansPaper}.

Unfortunately, computing any of these previously studied ``low-rank leverage scores'' is \emph{as difficult as low-rank approximation itself}, so sampling did not yield fast algorithms\footnote{$\ell_2$ norm sampling does yield very fast algorithms \cite{Frieze:2004,Drineas:2006:FMC2,bhojanapalli2014tighter}, but cannot give relative error guarantees matching those of random projection or leverage score methods without additional assumptions on $\bv{A}$.}. 

We address this issue for the first time by introducing new importance sampling probabilities which can be approximated efficiently using a simple recursive algorithm. In particular, we adapt the so-called \emph{ridge leverage scores} to low-rank matrix approximation. These scores have been used as sampling probabilities in the context of linear regression and spectral approximation \cite{pengV2,kapralov2014single,alaoui2014fast} but never for low-rank approximation. 

By showing that ridge leverage scores display a unique monotonicity property under perturbations to $\bv{A}$, we are able to prove that, unlike any prior low-rank leverage scores, they can be approximated using a relatively large \emph{uniform subsample} of $\bv{A}$'s columns. While too large to use directly, the size of this subsample can be reduced recursively to give an overall fast algorithm. This approach resembles work on recursive methods for computing standard leverage scores, which were recently used to give the first $O(\nnz(\bv{A}))$ time sampling algorithms for linear regression \cite{pengV2, cohen2015uniform}.

Our main algorithmic result, which nearly matches the state-of-the-art in \cite{osnap} follows:

\begin{theorem}\label{nnza_intro} For any $\theta \in (0,1]$, there exists a recursive column sampling algorithm that, in time $O\left (\theta^{-1}\nnz(\bv{A})\right ) + \tilde O \left (\frac{n^{1+\theta}k^{2}}{\epsilon^4}\right)$, returns $\bv{Z} \in \mathbb{R}^{n\times k}$ satisfying:
\begin{align}\label{low_rank_gurantee_orig}
\norm{\bv{A}-\bv{ZZ}^T \bv{A}}_F^2 \le (1+\epsilon) \norm{\bv A - \bv{A}_k}_F^2.
\end{align}
\end{theorem}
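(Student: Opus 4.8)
The plan is to separate correctness from efficiency. For correctness, fix the regularization parameter $\lambda \eqdef \frac1k\norm{\bv A - \bv A_k}_F^2$ and work with the ridge leverage scores $\tau_i \eqdef \bv a_i^T(\bv A \bv A^T + \lambda \bv I)^{-1}\bv a_i$, whose sum equals $\sum_j \sigma_j^2/(\sigma_j^2+\lambda) \le 2k$ by the choice of $\lambda$. (Computing $\lambda$ exactly is as hard as the problem itself, so I would instead carry a constant-factor overestimate of $\norm{\bv A-\bv A_k}_F^2$; a crude estimate only inflates the sampling bounds by a constant and can be produced cheaply, e.g.\ from the recursion's own coarse sketch.) First I would record a standard matrix-Chernoff statement: given any $\tilde\tau_i \ge \tfrac12\tau_i$ with $\sum_i \tilde\tau_i = s$, sampling $m = O(\epsilon^{-2} s \log s)$ columns with probabilities $p_i\propto \tilde\tau_i$ and rescaling the $i$-th sampled column by $(m p_i)^{-1/2}$ produces $\bv C$ with $(1-\epsilon)(\bv A\bv A^T + \lambda\bv I) \preceq \bv C\bv C^T + \lambda\bv I \preceq (1+\epsilon)(\bv A\bv A^T + \lambda\bv I)$ with high probability. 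Then I would invoke the fact (a projection-cost-preservation argument, and the reason $\lambda$ is chosen as above) that any such ridge-spectral approximation of $\bv A\bv A^T$ is a rank-$k$ projection-cost preserving sketch of $\bv A$; hence taking $\bv Z$ to be an orthonormal basis for the best rank-$k$ approximation of $\bv A$ restricted to $\colspan(\bv C)$ (read off from the SVD of $\bv C$, or of the projection of $\bv A$ onto $\colspan(\bv C)$) yields $\norm{\bv A - \bv Z\bv Z^T\bv A}_F^2 \le (1+\epsilon)\norm{\bv A-\bv A_k}_F^2$. Since the final sample $\bv C$ has $O(\epsilon^{-2}k\log k)$ columns, forming and diagonalizing its Gram matrix costs $\tilde O(n k^2/\epsilon^4)$, accounting for the second term.

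The algorithmic core is producing the overestimates $\tilde\tau_i$ in input-sparsity time, and here two facts drive the argument. \textbf{Monotonicity:} if $\bv B$ consists of a subset of $\bv A$'s columns, then $\bv B\bv B^T \preceq \bv A\bv A^T$, so $(\bv B\bv B^T+\lambda\bv I)^{-1} \succeq (\bv A\bv A^T+\lambda\bv I)^{-1}$ and therefore $\bv a_i^T(\bv B\bv B^T+\lambda\bv I)^{-1}\bv a_i \ge \tau_i$ for \emph{every} column $i$ of $\bv A$; moreover $\sum_{i\in\bv B}\bv a_i^T(\bv B\bv B^T+\lambda\bv I)^{-1}\bv a_i = \sum_l \sigma_l(\bv B)^2/(\sigma_l(\bv B)^2+\lambda) \le 2k$ since $\sigma_l(\bv B)\le\sigma_l(\bv A)$. \textbf{Recursive approximation:} uniformly subsample the columns of $\bv A$ to get $\bv B$ with far fewer columns; recursively compute for $\bv B$ a small weighted column sample $\bv C'$ with $\bv C'\bv C'^T+\lambda\bv I$ a constant-factor spectral approximation of $\bv B\bv B^T+\lambda\bv I$; then set $\tilde\tau_i = \min\!\big(1,\ \bv a_i^T(\bv C'\bv C'^T+\lambda\bv I)^{-1}\bv a_i\big)$ for all $i$. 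By monotonicity (together with the spectral bound on $\bv C'$) these cap-at-one values overestimate the true $\tau_i$, and the key lemma — that a uniform subsample loses only a bounded factor against the ridge scores, which is \emph{false} for ordinary leverage scores but true here because $\lambda$ supplies a floor — gives $\sum_i\tilde\tau_i = \tilde O(k)$. A final pass then refines the coarse scores by sampling $\bv A$'s columns proportional to $\tilde\tau_i$, getting a constant-factor sketch $\bv C$ of $\bv A$ with $\tilde O(k)$ columns, and recomputing scores against $\bv C$; all inner products $\bv a_i^T(\bv C\bv C^T+\lambda\bv I)^{-1}\bv a_i$ are evaluated through the $\tilde O(k)\times\tilde O(k)$ Gram matrix of $\bv C$ via Woodbury and a further dimension reduction to $O(\epsilon^{-2}\log n)$, so each costs $O(\nnz(\bv a_i)\cdot\plog)$. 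This is the subroutine $\re$.

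The remaining work is to choose the recursion so the costs balance, and this is where $\theta$ enters: shrinking the column count by a factor $n^{\theta}$ per level yields $O(\theta^{-1}\log n)$ levels, each comprising one pass over (a submatrix of) $\bv A$ — costing $O(\nnz(\bv A))$ amortized since the submatrices' nonzero counts decay geometrically — plus the $\poly(n,k)$ work of building that level's sketch, which the accumulated $n^{\theta}$ slack in the per-level sample sizes bounds by $\tilde O(n^{1+\theta}k^2/\epsilon^4)$; summing gives $O(\theta^{-1}\nnz(\bv A)) + \tilde O(n^{1+\theta}k^2/\epsilon^4)$, with the base case (a submatrix with few enough columns) handled by a direct computation. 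The main obstacle I expect is precisely the monotonicity-plus-subsampling analysis: beyond the easy overestimate direction, one must show that the sum of the recursively produced scores stays $\tilde O(k)$ and that the whole pipeline is stable to \emph{approximate} scores and an approximate $\lambda$ propagating through $O(\theta^{-1}\log n)$ levels without the sketch sizes or the score sums blowing up, and then that the final constant-factor ridge-spectral approximation can be boosted to a $(1+\epsilon)$ Frobenius low-rank guarantee via projection-cost preservation — the step that forces $\lambda$ to track $\norm{\bv A-\bv A_k}_F^2/k$ and where, if handled loosely, spurious factors of $k$ or $\epsilon$ appear. Conditioning the high-probability sampling events over all levels and columns is a routine union bound by comparison (and is the source of the $\plog$ factors hidden in $\tilde O$).
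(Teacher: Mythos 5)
Your architecture matches the paper's (ridge scores with $\lambda=\norm{\bv A-\bv A_k}_F^2/k$, monotonicity under column removal, uniform subsample plus recursion, JL-accelerated score evaluation, and a final projection-cost preserving sample from which $\bv Z$ is read off), but two steps as you state them do not go through. First, the claim that ``any such ridge-spectral approximation of $\bv A\bv A^T$ is a rank-$k$ projection-cost preserving sketch'' is false. The two-sided bound $(1\pm\epsilon)(\bv A\bv A^T+\lambda\bv I)$ controls $\tr(\bv Y(\bv C\bv C^T-\bv A\bv A^T)\bv Y)$ only up to $\epsilon\tr(\bv Y\bv A\bv A^T\bv Y)+\epsilon\lambda\tr(\bv Y^2)$, and since $\bv Y=\bv I-\bv X$ has rank $n-k$, the second term is $\epsilon(n-k)\norm{\bv A_{\setminus k}}_F^2/k$, which dwarfs the allowed error $\epsilon\norm{\bv A_{\setminus k}}_F^2$. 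The paper's Theorem \ref{pcp_intro} therefore cannot be deduced from Theorem \ref{matrix_chernoff} alone: it needs a separate scalar-Chernoff concentration of $\norm{\bv P_{\setminus m}\bv C}_F^2$ around $\norm{\bv A_{\setminus m}}_F^2$ (a property of the sampling process, Lemma \ref{trace:bound}), combined with a head/tail split at the singular value threshold $\norm{\bv A_{\setminus k}}_F^2/k$ and a Cauchy--Schwarz bound on the non-vanishing cross terms $\tr(\bv Y\bv P_m\bv C\bv C^T\bv P_{\setminus m}\bv Y)$. This is the technical core of the correctness proof and is missing. Relatedly, the bound $\sum_i\tilde\tau_i=\tilde O(k)$ for scores estimated from a uniform half-sample is not implied by ``$\lambda$ supplies a floor'': the $2k$ bound you cite is for the subsample's scores of \emph{its own} columns, whereas what is needed is a bound on $\sum_i\min\{1,\bar\tau_i^{\bv C_u}(\bv A)\}$ over \emph{all} columns of $\bv A$; the paper gets this via a reweighting-existence lemma (Lemma \ref{weighting_existance}) showing at most $O(k\log k)$ columns can have large estimated scores. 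You correctly flag this as the main obstacle but do not supply the argument.

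Second, the runtime accounting does not produce the claimed $O(\theta^{-1}\nnz(\bv A))$ term. With a Johnson--Lindenstrauss embedding of dimension $O(\epsilon^{-2}\log n)$ at each level, the per-level cost of scoring all columns is $O(\nnz\cdot\log n)$, and even with geometric decay of $\nnz$ across levels the total is $O(\nnz(\bv A)\log n)$ --- this is the paper's unoptimized Lemma \ref{lem:algo1}, and no choice of shrinkage rate per level removes the $\log$ factor, since it is incurred already at the top level. The paper's mechanism for trading $\log$ for $\theta^{-1}$ is different from yours: it uses a JL sketch with only $\theta^{-1}$ rows, which approximates each score only to within a factor $d^{\theta}$; this forces oversampling to $O(kd^{\theta}\log(k/\delta))$ columns, which are then resampled down to $O(k\log(k/\delta))$ using exact scores computed on the small intermediate sample, costing $\tilde O(n^{1+\theta}k^2)$. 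Your $n^{\theta}$ ``slack'' is attached to the recursion depth rather than to the fidelity of the score estimates, so it neither removes the $\log n$ from the $\nnz$ term nor explains why the second term carries an $n^{\theta}$ factor.
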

Here $\bv{A}_k$ is the best rank $k$ approximation to $\bv{A}$. 
To prove Theorem \ref{nnza_intro}, we show how to compute a sampling matrix $\bv{S}$ such that $\bv{AS} \in \R^{n\times \tilde{O}(k/\epsilon^2)}$ satisfies a \emph{projection-cost preservation} guarantee (formalized in Section \ref{sec:tech_intro}). This property ensures that it is possible to extract a near optimal low-rank approximation from the sample. It also allows $\bv{A}\bv{S}$ to be used to approximately solve a broad class of constrained low-rank approximation problems, including $k$-means clustering \cite{kmeansPaper}.

With a slightly smaller sample, we also prove that $\bv{AS}$ satisfies a standard $(1+\epsilon)$ error \emph{column subset selection} guarantee. Ridge leverage score sampling is the first algorithm, efficient or otherwise, that obtains both of these important approximation goals simultaneously.

\subsection{Why Sampling?}
Besides the obvious goal of obtaining alternative state-of-the-art algorithms for low-rank approximation, we are interested in sampling methods for a few specific reasons:

\medskip
\noindent \textbf{Sampling maintains matrix sparsity and structure.}

\noindent Without additional assumptions on $\bv{A}$, our recursive sampling algorithms essentially match random projection methods. However, they have the potential to run faster for sparse or structured data. Random projection linearly combines \emph{all} columns in $\bv{A}$ to form $\bv{A}\bv{\Pi} \in \R^{n \times poly(k,\epsilon)}$, so this sketched matrix is usually dense and unstructured. On the other hand, $\bv{A}\bv{S}$ will remain sparse or structured if $\bv{A}$ is sparse or structured, in which case it can be faster to post-process.
Potential gains are especially important when the $\tilde{O}(n\cdot \poly(k,\epsilon))$ runtime term is not dominated by $O(\nnz(\bv{A}))$.

We note that the ability to maintain sparsity and structure motivated similar work on recursive sampling algorithms for fast linear regression \cite{cohen2015uniform}. While these techniques only match random projection for general matrices, they have been important ingredients in designing faster algorithms for highly structured Laplacian and SDD matrices \cite{parallelSDD1, parallelSDD2,otherParallel}. We hope our sampling methods for low-rank approximation will provide a foundation for similar contributions.

\medskip
\noindent \textbf{Sampling techniques lead to natural streaming algorithms.} 

\noindent 
In data analysis, sampling itself is often \emph{the primary goal}. The idea is to select a subset of columns from $\bv{A}$ whose span contains a good low-rank approximation for the matrix and hence represents important or influential features \cite{paschou2007pca,mahoney2009cur}.

Computing this subset in a streaming setting is of both theoretical and practical interest \cite{Strauch2014}.
Unfortunately, while  random projection methods adapt naturally to data streams \cite{clarkson:2009}, importance sampling is more difficult. The leverage score of one column depends on every other column,  including those that have not yet appeared in the stream. While random projections can be used to approximate leverage scores, this approach inherently requires two passes over the data. 

Fortunately, the same techniques used in our recursive algorithms apply naturally in the streaming setting. We can compute coarse approximations to the ridge leverage scores using just a small number of columns and refine these approximations as the stream is revealed. By rejection sampling columns as the probabilities are adjusted, we obtain the first space efficient single-pass streaming algorithms for both column subset selection and projection-cost preserving sampling (Section \ref{sec:streaming}).

\medskip
\noindent \textbf{Sampling offers additional flexibility for non-standard matrices.}

\noindent
In recent follow up work, the techniques in this paper are adapted to give the most efficient, provably accurate algorithms for kernel matrix approximation \cite{kernelStuff}. In nearly all settings this well-studied problem cannot be solved efficiently by random projection methods.

The goal in kernel approximation is to replace an $n \times n$ positive semidefinite kernel matrix $\bv{K}$ with a low-rank approximation that takes less space to represent \cite{ams01,originalNystrom,fine2002efficient,mahoneyDrineasNystrom,rahimi2007random,belabassNystrom2,belabassNystrom1,bach2012sharp,gittens2013revisiting,ipsen,chengtaodpp}. 
However, unlike in the standard low-rank approximation problem, $\bv{K}$ is not represented explicitly. Its entries can only be accessed by evaluating a ``kernel function'' between each pair of the $n$ points in a data set. 

Sketching $\bv{K}$ using random projection requires computing the full matrix first, using $\Theta(n^2)$ kernel evaluations. On the other hand, with recursive ridge leverage score sampling this is not necessary -- it is possible to compute entries of $\bv{K}$ `on the fly', only when they are required to compute ridge scores with respect to a subsample. \cite{kernelStuff}  shows that this technique 
gives the first provable algorithms for approximating kernel matrices that only require time \emph{linear in $n$}. In other words, the methods only evaluate a tiny fraction of the dot products required to build $\bv{K}$. Notably they do not require any coherence or regularity assumptions to achieve this runtime.

Aside from kernel approximation, we note that in \cite{bhojanapalli2014tighter} the authors present a low-rank approximation algorithm based on elementwise sampling that they show can be applied to the product of two matrices without ever forming this product explicitly. This result again highlights the flexibility of sampling-based methods for non-standard matrices. Without access to an efficiently computable leverage score distribution for elementwise sampling,  \cite{bhojanapalli2014tighter} applies an approximation based on $\ell_2$ sampling. This approximation only performs well under additional assumptions on $\bv{A}$ and an interesting open question is to see if our techniques can be adapted to their framework in order to eliminate such assumptions.


\subsection{Techniques and Paper Layout}

\medskip
\noindent \textbf{Sampling Bounds (Sections \ref{sec:tech_intro}, \ref{sec:prelims})}:
In Section \ref{sec:tech_intro} we review technical background and introduce ridge leverage scores. In Section \ref{sec:prelims} we prove that sampling by ridge leverage scores gives solutions to the projection-cost preserving sketch and column subset selection problems. These sections do not address algorithmic considerations. 

While the proofs are technical, we reduce both problems to a simple ``additive-multiplicative spectral guarantee,'' which resembles the ubiquitous subspace embedding guarantee \cite{sarlos2006improved}. This approach greatly simplifies prior work on low-rank approximation bounds for sampling methods \cite{kmeansPaper} and we hope that it will prove generally useful in studying future sketching methods.

\medskip
\noindent \textbf{Ridge Leverage Score Monotonicity (Section \ref{sec:mono})}: In Section \ref{sec:mono} we prove a basic theorem regarding the stability of ridge leverage scores. Specifically, we show that the ridge leverage score of a column cannot increase if another column is added to the matrix. This fact, which do not hold for any prior ``low-rank leverage scores'', is essential in proving the correctness of our recursive sampling procedure and streaming algorithms. 

\medskip
\noindent \textbf{Recursive Sampling Algorithm (Section \ref{sec:nnza})}:
In Section \ref{sec:nnza}, we describe and prove the correctness of our main sampling algorithm. We show how a careful implementation of the algorithm gives $O(\nnz(\bv{A}))$ running time for computing ridge leverage scores, and accordingly for solving the low-rank approximation problem.

\medskip
\noindent \textbf{Application to Streaming (Section \ref{sec:streaming})}:
We conclude with an application of our results to low-rank sampling algorithms for single-pass column streams that are only possible thanks to the stability result of Section \ref{sec:mono}.

\section{Technical Background}
\label{sec:tech_intro}

\subsection{Low-rank Approximation}

Using the singular value decomposition (SVD), any rank $r$ matrix $\bv{A} \in \R^{n\times d}$ can be factored as $\bv{A} = \bv{U}\bs{\Sigma}\bv{V}^T$. $\bv{U} \in \R^{n\times r}$ and $\bv{V} \in \R^{d\times r}$ are orthonormal matrices whose columns are the left and right singular vectors of $\bv{A}$. $\bv{\Sigma}$ is a diagonal matrix containing $\bv{A}$'s singular values $\sigma_1 \geq \sigma_2 \geq \ldots \geq \sigma_r > 0$ in decreasing order from top left to bottom right. When quality is measured with respect to the Frobenius norm, the best low-rank approximation for $\bv{A}$ is given by $\bv{A}_k = \bv{U}_k\bs{\Sigma}_k\bv{V}_k^T$ where $\bv{U}_k \in \mathbb{R}^{n\times k}$, $\bv{V}_k \in \mathbb{R}^{d\times k}$, and $\bs{\Sigma}_k \in \mathbb{R}^{k \times k}$ contain the just the first $k$ components of $\bv{U}$, $\bv{V}$, and $\bs{\Sigma}$ respectively. In other words,
\begin{align*}
\bv{A}_k = \argmin_{\bv{B}: rank(\bv{B}) \leq k} \|\bv{A} - \bv{B}\|_F.
\end{align*}

Since $\bv{U}$ has orthonormal columns, we can rewrite $\bv{A}_k = \bv{U}_k\bv{U}_k^T\bv{A}$. That is, the best rank $k$ approximation can be found by projecting $\bv{A}$ onto the span of its top $k$ singular vectors. Throughout, we will use the shorthand $\bv{A}_{\setminus k}$ to denote the residual $\bv{A} - \bv{A}_k$. $\bv{U}_{\setminus k} \in \mathbb{R}^{n\times r-k}$, $\bv{V}_{\setminus k} \in \mathbb{R}^{d \times r-k}$, and $\bv{\Sigma}_{\setminus k} \in \mathbb{R}^{r-k\times r-k}$ denote $\bv{U}$, $\bv{V}$, and $\bv{\Sigma}$ restricted to just their last $k$ components.

When solving the low-rank approximation problem approximately, our goal is to find an orthonormal span $\bv{Z} \in \R^{n\times k}$ satisfying $\|\bv{A} - \bv{ZZ}^T\bv{A}\|_F \leq (1+\epsilon)\|\bv{A} - \bv{U}_k\bv{U}_k^T\bv{A}\|_F$. 

\subsection{Sketching Algorithms}

Like many randomized linear algebra routines, our low-rank approximation algorithms are based on 
``linear sketching''. Sketching algorithms use a typically randomized procedure to compress $\bv{A} \in \R^{d\times n}$ into an approximation (or ``sketch'') $\bv{C} \in \R^{d'\times n}$ with many fewer columns ($d' \ll d$). Random projection algorithms construct $\bv{C}$ by forming $d'$ random linear combinations of the columns in $\bv{A}$. Random sampling algorithms construct $\bv{C}$ by selecting and possibly reweighting a $d'$ columns in $\bv{A}$.  

After compression, a post-processing routine, which is often deterministic, is used to solve the original linear algebra problem with just the information contained in $\bv{C}$. In our case, the post-processing step needs to extract an approximation to the span of $\bv{A}$'s top left singular vectors. If $\bv{C}$ is much smaller than $\bv{A}$, the cost of post-processing is typically considered a low-order term in comparison to the cost of computing the sketch to begin with.

When analyzing sketching algorithms it is common to separate the post-processing step from the dimensionality reduction step. Known post-processing routines give good approximate solutions to linear algebra problems under the condition that $\bv{C}$ satisfies certain approximation properties with respect to $\bv{A}$. The challenge then becomes proving that a specific dimensionality reduction algorithm produces a sketch satisfying these required guarantees.

\subsection{Sampling Guarantees for Low-rank Approximation}

For low-rank approximation, most algorithms aim for one of two standard approximation guarantees, which we describe below. Since we will be focusing on sampling methods, from now on we assume that $\bv{C}$ is a subset of $\bv{A}$'s columns. 

\begin{definition}[Rank $k$ Column Subset Selection]
\label{def:css}
For $d' < d$, a subset of $\bv{A}$'s columns $\bv{C}\in \R^{n\times d'}$ is a $(1+\epsilon)$ factor column subset selection if there exists a rank $k$ matrix $\bv{Q}\in \R^{d'\times d}$ with
\begin{align}
\label{eq:css}
\|\bv{A} - \bv{CQ}\|_F^2 \leq (1+\epsilon)\|\bv{A} - \bv{A}_k\|_F^2.
\end{align}
\end{definition}
In other words, the column span $\bv{C}$ contains a good rank $k$ approximation for $\bv{A}$. Algorithmically, we can recover this low-rank approximation via projection to the column subset  \cite{sarlos2006improved,clarkson2013low}.

Beyond sketching for low-rank approximation, the column subset selection guarantee is used as a metric in feature selection for high dimensional datasets \cite{paschou2007pca,mahoney2009cur}. With columns of $\bv{A}$ interpreted as features and rows as data points, \eqref{eq:css} ensures that we select $d'$ features that span the feature space nearly as well as the top $k$ principal components. 
The guarantee is also important in algorithms for CUR matrix decomposition \cite{Drineas:20063,mahoney2009cur,Boutsidis:2014} and Nystr\"{o}m approximation \cite{originalNystrom,mahoneyDrineasNystrom,belabassNystrom1,belabassNystrom2,gittens2013revisiting,ipsen,kernelStuff}.

In addition to Definition \ref{def:css}, we consider a stronger guarantee for  \emph{weighted} column selection, which has a broader range of algorithmic applications:

\begin{definition}[Rank $k$ Projection-Cost Preserving Sample]
\label{def:pcp}
For $d' < d$, a subset of rescaled columns $\bv{C}\in \R^{n\times d'}$ is a $(1+\epsilon)$ projection-cost preserving sample if, for all rank $k$ orthogonal projection matrices $\bv{X} \in \R^{n\times n}$,
\begin{align}
\label{eq:pcp}
(1-\epsilon)\|\bv{A} - \bv{X}\bv{A}\|^2_F \leq \|\bv{C} - \bv{X}\bv{C}\|^2_F\leq (1+\epsilon)\|\bv{A} - \bv{X}\bv{A}\|^2_F
\end{align}
\end{definition}

Definition \ref{def:pcp} is formalized in two recent papers \cite{feldman2013turning,kmeansPaper}, though it appears implicitly in prior work \cite{nphardkmeans,DBLP:journals/corr/abs-1110-2897}. It ensures that $\bv{C}$ approximates the cost of any rank $k$ column projection of $\bv{A}$. $\bv{C}$ can thus be used as a \emph{direct surrogate} of $\bv{A}$ to solve low-rank projection problems. Specifically, it's not hard to see that if we use a post-processing algorithm that sets $\bv{Z}$ equal to the top $k$ left singular vectors of $\bv{C}$, it will hold that $\|\bv{A} - \bv{ZZ}^T\bv{A}\|_F \leq (1+\epsilon)\|\bv{A} - \bv{U}_k\bv{U}_k^T\bv{A}\|_F$ \cite{kmeansPaper}. 
 
Definition \ref{def:pcp} also ensures that $\bv{C}$ can be used in approximately solving a variety of constrained low-rank approximation problems, including $k$-means clustering of $\bv{A}$'s rows (see \cite{kmeansPaper}).

\subsection{Leverage Scores}

It is well known that sketches satisfying Definitions \ref{def:css} and \ref{def:pcp} can be constructed via importance sampling routines which select columns using carefully chosen, non-uniform probabilities. Many of these probabilities
are modifications on traditional ``statistical leverage scores''.

The statistical leverage score of the $i^\text{th}$ column $\bv{a}_i$ of $\bv{A}$ is defined as\footnote{$+$ denotes the Moore-Penrose pseudoinverse of a matrix. When $\bv{AA}^T$ is full rank $(\bv{A}\bv{A}^T)^{+} = (\bv{A}\bv{A}^T)^{-1}$}: 
\begin{align}
\label{lev_definition}
\tau_i \eqdef \bv{a}_i^T (\bv{A}\bv{A}^T)^{+} \bv{a}_i.
\end{align}
$\tau_i$ measures how important $\bv{a}_i$ is in composing the range of $\bv{A}$. 
It is maximized at $1$ when $\bv{a}_i$ is linearly independent from $\bv{A}$'s other columns and decreases when many other columns approximately align with $\bv{a}_i$ or when $\|\bv{a}_i\|_2$ is small. 


Leverage scores are used in fast sketching algorithms for linear regression and matrix preconditioning
\cite{Drineas:2006,sarlos2006improved,cohen2015uniform}. They have also been applied to convex optimization \cite{cuttingPlane}, linear programming \cite{linearProg,inverseMain}, matrix completion \cite{wardCoherent}, multi-label classification \cite{bi2013efficient}, and graph sparsification, where they are known as \emph{effective resistances} \cite{Spielman:2008}.

\subsection{Existing Low-rank Leverage Scores}

For low-rank approximation problems, leverage scores need to be modified to only capture how important each column $\bv{a}_i$ is in composing the \emph{top few} singular directions of $\bv{A}$'s range.  

In particular, it is known that a sketch $\bv{C}$ satisfying Definition \ref{def:css} can be constructed by sampling $d' = O(k\log k + k/\epsilon)$ columns according to the so-called \emph{rank $k$ subspace scores} \cite{sarlos2006improved,betterSubspaceScores}:
\begin{align}
\label{eq:ss}
\text{$i^{\text{th}}$ rank $k$ subspace score:  \hspace{2em}}ss^{(k)}_i \eqdef \bv{a}_i^T (\bv{A}_k\bv{A}_k^T)^{+} \bv{a}_i.
\end{align}
These scores are exactly equivalent to standard leverage scores computed with respect to $\bv{A}_k$, an optimal low-rank approximation for $\bv{A}$.
The stronger projection-cost preservation guarantee of Definition \ref{def:pcp} can be achieved by sampling $O(k\log k/\epsilon^2)$ columns using a related, but somewhat more complex, leverage score modification \cite{kmeansPaper}.


\subsection{Ridge Leverage Scores}

Notably, prior low-rank leverage scores are defined in terms of $\bv{A}_k$, which is not always unique and regardless can be sensitive to matrix perturbations\footnote{It is often fine to use a near-optimal low-rank approximation in place of $\bv{A}_k$, but similar instability issues remain.}.
As a result, the scores can change drastically when $\bv{A}$ is modified slightly or when only partial information about the matrix is known. This largely limits the possibility of quickly approximating the scores with sampling algorithms, and motivates our adoption of a new leverage score for low-rank approximation.

Rather than use scores based on $\bv{A}_k$, we employ regularized scores called  \emph{ridge leverage scores}, which have been used for approximate kernel ridge regression \cite{alaoui2014fast} and in work on iteratively computing standard leverage scores \cite{pengV2,kapralov2014single}. We extend their applicability to low-rank approximation. 
For a given regularization parameter $\lambda$, define the $\lambda$-ridge leverage score as:
\begin{align}
\label{ridge_definition}
\tau^\lambda_i(\bv{A}) \eqdef \bv{a}_i^T (\bv{A A}^T + \lambda\bv{I})^+ \bv{a}_i.
\end{align}
We will always set $\lambda = \norm{\bv{A} - \bv{A}_k}_F^2/k$ and thus, for simplicity, use ``$i^\text{th}$ ridge leverage score'' to refer to $\bar \tau_i(\bv{A}) = \bv{a}_i^T \left (\bv{AA}^T + \frac{\norm{\bv{A} - \bv{A}_k}_F^2}{k} \bv{I} \right)^+ \bv{a}_i$. 

For prior low-rank leverage scores, $\bv{A}_k$ truncates the spectrum of $\bv{A}$, removing all but its top $k$ singular values. Regularization offers a smooth alternative: adding $\lambda\bv{I}$ to $\bv{AA}^T$ `washes out' small singular directions, causing them to be sampled with proportionately lower probability. 

This paper proves that regularization can not only replace truncation, but is more natural and stable.
In particular, while $\bar \tau_i$ depends on the \emph{value} of $\norm{\bv{A} - \bv{A}_k}_F^2$, it does not depend on a specific low-rank approximation. This is sufficient for stability since $\norm{\bv{A} - \bv{A}_k}_F^2$ changes predictably under matrix perturbations even when $\bv{A}_k$ itself does not.

Before showing our sampling guarantees for ridge leverage scores, we prove that the sum of these scores is not too large. Thus, when we use them for sampling, we will achieve column subsets and projection-cost preserving samples of small size.
Specifically we have:
\begin{lemma}\label{sumScores}
$\sum_{i=1}^n \bar \tau_i(\bv{A}) \le 2k.$
\end{lemma}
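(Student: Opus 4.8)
The plan is to rewrite the sum of ridge leverage scores as a trace and evaluate it via the SVD. Writing $\bv{a}_i = \bv{A}\bv{e}_i$ and $\bv{M} = (\bv{AA}^T + \lambda\bv{I})^+$ with $\lambda = \norm{\bv{A}-\bv{A}_k}_F^2/k$, we have $\bar\tau_i(\bv{A}) = \bv{e}_i^T \bv{A}^T\bv{M}\bv{A}\bv{e}_i = (\bv{A}^T\bv{M}\bv{A})_{ii}$, so summing the diagonal entries and using the cyclic property of trace gives $\sum_{i=1}^n \bar\tau_i(\bv{A}) = \tr(\bv{A}^T\bv{M}\bv{A}) = \tr(\bv{M}\bv{A}\bv{A}^T)$.

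Next I would diagonalize. If $\norm{\bv{A}-\bv{A}_k}_F^2 = 0$, i.e. $\rank(\bv{A}) \le k$, then $\lambda = 0$ and the $\bar\tau_i$ reduce to ordinary leverage scores, which sum to $\rank(\bv{A}) \le k \le 2k$; so assume $\lambda > 0$, in which case $\bv{AA}^T + \lambda\bv{I}$ is full rank and $\bv{M}$ is a genuine inverse. Taking the SVD $\bv{A} = \bv{U}\bs{\Sigma}\bv{V}^T$ with singular values $\sigma_1 \ge \cdots \ge \sigma_r > 0$ and extending $\bv{U}$ to an orthonormal basis of $\R^n$, both $\bv{AA}^T$ and $\bv{M}$ are diagonalized in this basis, and a short computation yields $\tr(\bv{M}\bv{A}\bv{A}^T) = \sum_{i=1}^r \frac{\sigma_i^2}{\sigma_i^2 + \lambda}$.

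Finally I would bound this sum by splitting off the top $k$ terms. Each of the first $k$ terms satisfies $\frac{\sigma_i^2}{\sigma_i^2+\lambda} \le 1$, contributing at most $k$ in total. For $i > k$ we use $\frac{\sigma_i^2}{\sigma_i^2+\lambda} \le \frac{\sigma_i^2}{\lambda}$, and since $\sum_{i>k}\sigma_i^2 = \norm{\bv{A}-\bv{A}_k}_F^2 = k\lambda$ by our choice of $\lambda$, these terms contribute at most $\frac{k\lambda}{\lambda} = k$. Adding the two bounds gives $\sum_{i=1}^n \bar\tau_i(\bv{A}) \le 2k$.

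There is no real obstacle here — the argument is a direct computation — but the one point requiring a little care is the pseudoinverse: one should either observe that $\lambda > 0$ makes $\bv{AA}^T+\lambda\bv{I}$ invertible, or, keeping rank-deficient $\bv{A}$ in mind, note that $\bv{M}$ and $\bv{AA}^T$ are simultaneously diagonalizable so the trace identity and the eigenvalue formula still go through. The factor of exactly $2$ is precisely what the choice $\lambda = \norm{\bv{A}-\bv{A}_k}_F^2/k$ buys: a larger $\lambda$ would shrink the second half of the bound while a smaller one would shrink the first, and this setting balances the two contributions at $k$ each.
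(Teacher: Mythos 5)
Your proof is correct and follows essentially the same route as the paper: express the sum as $\tr\bigl(\bv{\Sigma}^2(\bv{\Sigma}^2+\lambda\bv{I})^{-1}\bigr)=\sum_i \sigma_i^2/(\sigma_i^2+\lambda)$ via the SVD, bound the top $k$ terms by $1$ each, and bound the tail by $\sum_{i>k}\sigma_i^2/\lambda = k$. Your explicit handling of the $\lambda=0$ / pseudoinverse case is a minor tidiness improvement over the paper's treatment but does not change the argument.
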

\begin{proof}
We rewrite \eqref{ridge_definition}
using $\bv{A}$'s singular value decomposition:
\begin{align*}
\bar \tau_i(\bv{A}) &= \bv{a}_i^T \left (\bv{U} \bv{\Sigma}^2 \bv{U}^T + \frac{\norm{\bv{A}_{\setminus k}}_F^2}{k} \bv{U}  \bv{U}^T \right)^+ \bv{a}_i\\
&= \bv{a}_i^T \left (\bv{U} \bv{\bar \Sigma}^2 \bv{U}^T \right )^+ \bv{a}_i
=  \bv{a}_i^T \left (\bv{U} \bv{\bar \Sigma}^{-2} \bv{U}^T \right ) \bv{a}_i,
\end{align*}
where $\bv{\bar \Sigma}^2_{i,i} = \sigma^2_i(\bv{A}) + \frac{\norm{\bv{A}-\bv{A}_k}_F^2}{k}$.
We then have:
\begin{align*}
\sum_{i=1}^n \bar \tau_i(\bv{A}) = \tr\left (\bv{A}^T \bv{U}\bv{\bar \Sigma}^{-2}\bv{U}^T \bv{A}\right ) = \tr \left (\bv{V} \bv{\Sigma} \bv{\bar \Sigma}^{-2} \bv{\Sigma} \bv{V}^T \right) = \tr(\bv{\Sigma}^2 \bv{\bar \Sigma}^{-2})
\end{align*}
$(\bv{\Sigma}^2 \bv{\bar \Sigma}^{-2})_{i,i} = \frac{\sigma_i^2(\bv{A})}{ \sigma^2_i(\bv{A}) + \frac{\norm{\bv{A}-\bv{A}_k}_F^2}{k}}$. For $i \le k$ we simply upper bound this by $1$. So:
\begin{align*}
\tr(\bv{\Sigma}^2 \bv{\bar \Sigma}^{-2}) = k + \sum_{i=k+1}^n\frac{\sigma_i^2}{\sigma_i^2 + \frac{\norm{\bv{A}-\bv{A}_k}_F^2}{k}} \leq k +\sum_{i=k+1}^n\frac{\sigma_i^2}{\frac{\norm{\bv{A}-\bv{A}_k}_F^2}{k}} = k +\frac{\sum_{i=k+1}^n\sigma_i^2}{\frac{\norm{\bv{A}-\bv{A}_k}_F^2}{k}} \leq k+ k.
\end{align*}
\end{proof}


\section{Core Sampling Results}
\label{sec:prelims}

Before considering how to efficiently compute ridge leverage scores, we prove that they can be used to construct sketches satisfying the guarantees of Definitions \ref{def:css} and \ref{def:pcp}. To do so, we introduce a natural intermediate guarantee (Theorem \ref{matrix_chernoff}), from which our results on column subset selection and projection-cost preservation follow. This approach is the first to treat these guarantees in a unified way and we hope it will be useful in future work on sketching methods for low-rank approximation.

Specifically, we will show that our selected columns \emph{spectrally approximate} 
$\bv{A}$ up to additive error depending on the ridge parameter $\lambda = \norm{\bv{A}-\bv{A}_k}_F/k$. 
This approximation is akin to the ubiquitous subspace embedding guarantee \cite{sarlos2006improved} which is used as a primitive for full rank problems like linear regression and generally requires sampling $\Theta(d)$ columns.

Intuitively, sampling by ridge leverage scores is equivalent to sampling by the standard leverage scores of $[\bv{A}, \sqrt{\lambda} \bv{I}_{n\times n}]$. 
A matrix Chernoff bound can be used to show that sampling by these scores will yield $\bv{C}$ satisfying the subspace embedding property: $(1-\epsilon)\bv{CC}^T \preceq \bv{A}\bv{A}^T + \lambda \bv{I} \preceq (1+\epsilon) \bv{CC}^T$. (Recall that $\bv{M} \preceq \bv{N}$ indicates that $\bv{s}^T\bv{M}\bv{s} \leq \bv{s}^T\bv{N}\bv{s}$ for every vector $\bv{s}$.) 

However, we do not actually sample columns of the identify, only columns of $\bv{A}$. Subtracting off the identity yields the mixed additive-multiplicative bound of Theorem \ref{matrix_chernoff}.

\begin{theorem}[Additive-Multiplicative Spectral Approximation]
\label{matrix_chernoff}
For $i \in \{1,\ldots,d\}$, let $\tilde \tau_i \ge \bar \tau_i(\bv{A})$ be an overestimate for the $i^{th}$ ridge leverage score. Let $p_i = \frac{\tilde \tau_i}{\sum_i \tilde \tau_i}$. Let $t = \frac{c\log(k/\delta)}{\epsilon^2} \sum_i \tilde \tau_i$ for some sufficiently large constant $c$. Construct $\bv{C}$ by sampling $t$ columns of $\bv{A}$, each set to $\frac{1}{\sqrt{tp_i}}\bv{a}_i$ with probability $p_i$. With probability $1-\delta$, $\bv{C}$ satisfies:
\begin{align}\label{freq_dirs_two_sided}
(1-\epsilon) \bv{C} \bv{C}^T - \frac{\epsilon}{k} \norm{\bv{A} - \bv{A}_k}_F^2 \bv{I}_{n\times n} \preceq \bv{A}\bv{A}^T \preceq (1+\epsilon) \bv{C} \bv{C}^T + \frac{\epsilon}{k} \norm{\bv{A} - \bv{A}_k}_F^2 \bv{I}_{n\times n}
\end{align}
\end{theorem}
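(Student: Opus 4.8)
The plan is to reduce the mixed additive-multiplicative bound \eqref{freq_dirs_two_sided} to a clean multiplicative (subspace-embedding-style) statement about the augmented matrix $\bv{B} \eqdef [\bv{A}, \sqrt{\lambda}\,\bv{I}_{n\times n}]$ with $\lambda = \norm{\bv{A}-\bv{A}_k}_F^2/k$, and then apply a standard matrix Chernoff / matrix Bernstein argument. The key observation is that $\bv{B}\bv{B}^T = \bv{A}\bv{A}^T + \lambda\bv{I}$, and the statistical leverage score of column $\bv{a}_i$ of $\bv{A}$ \emph{when viewed as a column of $\bv{B}$} is exactly $\bv{a}_i^T(\bv{A}\bv{A}^T + \lambda\bv{I})^{-1}\bv{a}_i = \bar\tau_i(\bv{A})$, since $\bv{B}\bv{B}^T$ is full rank (so the pseudoinverse is an inverse). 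Thus our sampling of columns of $\bv{A}$ by overestimates $\tilde\tau_i$ is a partial leverage-score sample of $\bv{B}$ — partial because we never sample the $n$ identity columns.

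First I would set up the matrix Chernoff bound. Let $\bv{M} \eqdef (\bv{B}\bv{B}^T)^{-1/2} = (\bv{A}\bv{A}^T + \lambda\bv{I})^{-1/2}$ and consider the random matrix $\bv{X} \eqdef \bv{M}\bv{C}\bv{C}^T\bv{M}$, which is a sum of $t$ i.i.d. rank-one terms $\frac{1}{t}\bv{x}_j\bv{x}_j^T$ where $\bv{x}_j = \frac{1}{\sqrt{p_i}}\bv{M}\bv{a}_i$ with probability $p_i$. Each term satisfies $\E[\frac{1}{t}\bv{x}_j\bv{x}_j^T] = \frac{1}{t}\sum_i \bv{M}\bv{a}_i\bv{a}_i^T\bv{M} = \frac{1}{t}\bv{M}\bv{A}\bv{A}^T\bv{M}$, so $\E[\bv{X}] = \bv{M}\bv{A}\bv{A}^T\bv{M} \preceq \bv{I}$. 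The spectral norm of each summand is bounded by $\frac{1}{t}\cdot\frac{1}{p_i}\norm{\bv{M}\bv{a}_i}_2^2 = \frac{1}{t}\cdot\frac{\sum_j\tilde\tau_j}{\tilde\tau_i}\bar\tau_i(\bv{A}) \le \frac{\sum_j \tilde\tau_j}{t}$ using $\tilde\tau_i \ge \bar\tau_i(\bv{A})$; with $t = \frac{c\log(k/\delta)}{\epsilon^2}\sum_j\tilde\tau_j$ this is $\le \frac{\epsilon^2}{c\log(k/\delta)}$. The matrix Chernoff bound (e.g. Tropp) then gives, with probability $1-\delta$, that $\norm{\bv{X} - \E[\bv{X}]}_2 \le \epsilon$, i.e.
\begin{align*}
\E[\bv{X}] - \epsilon\bv{I} \preceq \bv{M}\bv{C}\bv{C}^T\bv{M} \preceq \E[\bv{X}] + \epsilon\bv{I}.
\end{align*}
One subtlety: the failure probability should be $k/\delta$ rather than $n/\delta$; this is handled because the "effective dimension" / intrinsic-dimension version of matrix Chernoff pays only $\tr(\E[\bv{X}])/\norm{\E[\bv{X}]}_2 \le \tr(\bv{M}\bv{A}\bv{A}^T\bv{M}) = \sum_i \bar\tau_i(\bv{A}) \le 2k$ by Lemma~\ref{sumScores}, rather than the ambient dimension $n$. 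This is exactly why the $\log(k/\delta)$ (not $\log(n/\delta)$) suffices, and it is the point where Lemma~\ref{sumScores} gets used.

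Next I would undo the conjugation. Since $\bv{M}\bv{A}\bv{A}^T\bv{M} = \E[\bv{X}]$, the displayed inequality reads $\bv{M}\bv{A}\bv{A}^T\bv{M} - \epsilon\bv{I} \preceq \bv{M}\bv{C}\bv{C}^T\bv{M} \preceq \bv{M}\bv{A}\bv{A}^T\bv{M} + \epsilon\bv{I}$. Conjugating by $\bv{M}^{-1} = (\bv{A}\bv{A}^T+\lambda\bv{I})^{1/2}$ on both sides (which preserves PSD order) turns $\epsilon\bv{I}$ into $\epsilon(\bv{A}\bv{A}^T + \lambda\bv{I})$ and yields
\begin{align*}
\bv{A}\bv{A}^T - \epsilon(\bv{A}\bv{A}^T + \lambda\bv{I}) \preceq \bv{C}\bv{C}^T \preceq \bv{A}\bv{A}^T + \epsilon(\bv{A}\bv{A}^T + \lambda\bv{I}).
\end{align*}
Rearranging the right inequality gives $\bv{A}\bv{A}^T \succeq \frac{1}{1+\epsilon}(\bv{C}\bv{C}^T - \epsilon\lambda\bv{I}) \succeq (1-\epsilon)\bv{C}\bv{C}^T - \epsilon\lambda\bv{I}$, and the left gives $\bv{A}\bv{A}^T(1-\epsilon) \preceq \bv{C}\bv{C}^T + \epsilon\lambda\bv{I}$, hence $\bv{A}\bv{A}^T \preceq (1+O(\epsilon))\bv{C}\bv{C}^T + O(\epsilon)\lambda\bv{I}$; absorbing constants into $c$ and plugging in $\lambda = \norm{\bv{A}-\bv{A}_k}_F^2/k$ gives exactly \eqref{freq_dirs_two_sided}. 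The main obstacle I anticipate is stating and citing the right form of the matrix Chernoff bound — specifically, the intrinsic-dimension variant that charges $\log(k/\delta)$ instead of $\log(n/\delta)$; everything else is bookkeeping (choosing the correct constant $c$, and being careful that the $(1\pm\epsilon)$ versus $1/(1\pm\epsilon)$ conversions only cost a constant factor in $c$). An alternative that avoids the intrinsic-dimension bound is to restrict attention to the range of a rank-$\tilde O(k)$ object, but I would prefer the cleaner intrinsic-dimension route.
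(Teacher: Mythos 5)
Your proposal is correct and follows essentially the same route as the paper: the paper also conjugates $\bv{C}\bv{C}^T-\bv{A}\bv{A}^T$ by $(\bv{A}\bv{A}^T+\lambda\bv{I})^{-1/2}$ (written via the SVD as $\bv{\bar\Sigma}^{-1}\bv{U}^T$), bounds each summand's norm by $\epsilon^2/(c\log(k/\delta))$ using $\frac{1}{\bar\tau_i}\bv{a}_i\bv{a}_i^T \preceq \bv{A}\bv{A}^T+\lambda\bv{I}$, and applies the intrinsic-dimension matrix Bernstein inequality (Tropp, Thm.\ 7.3.1) with intrinsic dimension $\le 2k$ from Lemma \ref{sumScores}, exactly as you anticipate. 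The only detail you elide is the explicit variance proxy $\E(\bv{Y}^2)\preceq \frac{\epsilon^2}{c\log(k/\delta)}\bv{D}$ that the Bernstein inequality requires, but it follows by the same one-line leverage-score argument as your summand bound, so this is not a gap.
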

By Lemma \ref{sumScores}, if each $\tilde \tau_i$ is within a constant factor of $\bar\tau_i(\bv{A})$ then $\bv{C}$ has $O\left(\frac{k\log(k/\delta)}{\epsilon^2}\right)$ columns. Note that Theorem \ref{matrix_chernoff} and our other sampling results hold for independent sampling without replacement. A proof is included in Appendix \ref{app:without_replacement}.
\begin{proof}
Following Lemma \ref{sumScores}, we have 
$\bar \tau_i(\bv{A}) =  \bv{a}_i^T \left (\bv{U} \bv{\bar \Sigma}^{-2} \bv{U}^T \right ) \bv{a}_i,
$
where $\bv{\bar \Sigma}^2_{i,i} = \sigma^2_i(\bv{A}) + \frac{\norm{\bv{A}_{\setminus k}}_F^2}{k}$.

Let $\bv{Y} = \bv{\bar \Sigma}^{-1}\bv{U}^T\left (\bv{CC}^T - \bv{AA}^T \right)\bv{U} \bv{\bar \Sigma}^{-1}$. We can write 
$$\bv{Y} = \sum_{j=1}^t \left [ \bv{\bar \Sigma}^{-1}\bv{U}^T\left (\bv{c}_j \bv{c}_j^T-\frac{1}{t}\bv{AA}^T\right )\bv{U} \bv{\bar \Sigma}^{-1}\right ]\eqdef \sum_{j=1}^t \left [ \bv{X}_j \right ].$$
For each $j \in 1,\ldots,t$, $\bv{X}_j$ is given by:
\begin{align*}
\bv{X}_j = 
\frac{1}{t} \cdot \bv{\bar \Sigma}^{-1}\bv{U} ^T\left (\frac{1}{p_i} \bv{a}_i\bv{a}_i^T - \bv{A}\bv{A}^T\right )\bv{U} \bv{\bar \Sigma}^{-1} \text{ with probability } p_i.
\end{align*}

$\E \bv{Y} = \bv{0}$ since $\E \left [ \frac{1}{p_i} \bv{a}_i\bv{a}_i^T - \bv{A}\bv{A}^T\right ] = \bv{0}$. Furthermore, $\bv{CC}^T = \bv{U}\bv{\bar \Sigma} \bv{Y} \bv{\bar \Sigma} \bv{U} + \bv{AA}^T$. Showing $\norm{\bv{Y}}_2 \le \epsilon$ gives $-\epsilon \bv{I} \preceq \bv{Y} \preceq  \epsilon \bv{I}$, and since $\bv{U}\bv{\bar \Sigma}^2 \bv{U}^T = \bv{AA}^T +\frac{ \norm{\bv{A}_{\setminus k}}_F^2}{k} \bv{I}$ would give:
\begin{align*}
(1-\epsilon)\bv{AA}^T - \frac{\epsilon \norm{\bv{A}_{\setminus k}}_F^2}{k} \bv{I} \preceq \bv{C}\bv{C}^T \preceq (1+\epsilon)\bv{AA}^T + \frac{\epsilon \norm{\bv{A}_{\setminus k}}_F^2}{k} \bv{I}.
\end{align*}
After rearranging and adjusting constants on $\epsilon$, this statement is equivalent to \eqref{freq_dirs_two_sided}.

To prove that $\norm{\bv{Y}}_2$ is small with high probability we use a stable rank (intrinsic dimension) matrix Bernstein inequality from \cite{tropp2015introduction} that was first proven in \cite{minsker} following work in \cite{stablechernoff}. This inequality requires upper bounds on the spectral norm of each $\bv{X}_j$ and on variance of $\bv{Y}$.

We use the fact that, for any $i$, $\frac{1}{\bar \tau_i(\bv{A})} \bv{a}_i \bv{a}_i^T \preceq \bv{A}\bv{A}^T + \frac{ \norm{\bv{A}_{\setminus k}}_F^2}{k} \bv{I}$. This is a well known property of leverage scores, shown for example in the proof of Lemma 11 in \cite{cohen2015uniform}. It lets us bound:
\begin{align*}
\frac{1}{\bar \tau_i(\bv{A})} \cdot \bv{\bar \Sigma}^{-1} \bv{U}^T\bv{a}_i \bv{a}_i^T \bv{U} \bv{\bar \Sigma}^{-1} \preceq \bv{\bar \Sigma}^{-1} \bv{U}^T\left(\bv{A}\bv{A}^T + \frac{ \norm{\bv{A}_{\setminus k}}_F^2}{k} \bv{I}\right) \bv{U} \bv{\bar \Sigma}^{-1} = \bv{I}.
\end{align*}
 So we have:
 \begin{align*}
\bv{X}_j + \frac{1}{t} \bv{\bar \Sigma}^{-1} \bv{U}^T\bv{AA}^T \bv{U} \bv{\bar \Sigma}^{-1} \preceq\frac{1}{tp_i} \cdot \bar \tau_i(\bv{A}) \cdot \bv{I} \preceq \frac{\epsilon^2}{c\log(k/\delta)\sum_i \tilde \tau_i} \cdot \frac{\sum_i \tilde \tau_i}{\tilde \tau_i} \cdot \bar \tau_i(\bv{A}) \cdot \bv{I} \preceq \frac{\epsilon^2}{c\log(k/\delta)} \bv{I}.
\end{align*}
Additionally,
\begin{align*}
\frac{1}{t} \bv{\bar \Sigma}^{-1} \bv{U}^T\bv{AA}^T \bv{U}\bv{\bar \Sigma}^{-1} = \frac{\epsilon^2}{c\log(k/\delta)\sum_i \tilde \tau_i}  \cdot \bv{\bar \Sigma}^{-2}\bv{\Sigma}^2 \preceq \frac{\epsilon^2}{c\log(k/\delta)} \bv{I},
\end{align*}
where the inequality follows from the fact that: $$\sum_i \tilde \tau_i \ge \sum_i \bar \tau_i(\bv{A}) = \tr \left (\bv{A}^T \bv{U}\bv{\bar \Sigma}^{-2} \bv{U}^T \bv{A} \right ) = \tr \left ( \bv{U} \bv{\bar \Sigma}^{-2}\bv{\Sigma}^2 \bv{U}^T\right) = \tr \left (\bv{\bar \Sigma}^{-2}\bv{\Sigma}^2 \right ) \ge \norm{\bv{\bar \Sigma}^{-2}\bv{\Sigma}^2}_2.$$
Overall this gives $\norm{\bv{X}_j}_2 \le \frac{\epsilon^2}{c\log(k/\delta)}$.
Next we bound the variance of $\bv{Y}$.
\begin{align}
\label{eq:var_bound}
\E (\bv{Y}^2) &= t \cdot \E (\bv{X}_j^2 ) = \frac{1}{t} \sum p_i \cdot  \left (\frac{1}{p_i^2} \bv{\bar \Sigma}^{-1} \bv{U}^T\bv{a}_i \bv{a}_i^T \bv{U} \bv{\bar \Sigma}^{-2} \bv{U}^T\bv{a}_i \bv{a}_i^T \bv{U} \bv{\bar \Sigma}^{-1} \right. \nonumber\\
&\left. - 2\frac{1}{p_i} \bv{\bar \Sigma}^{-1} \bv{U}^T\bv{a}_i \bv{a}_i^T \bv{U} \bv{\bar \Sigma}^{-2} \bv{U}^T\bv{AA}^T \bv{U} \bv{\bar \Sigma}^{-1} + \bv{\bar \Sigma}^{-1} \bv{U}^T\bv{AA}^T \bv{U} \bv{\bar \Sigma}^{-2} \bv{U}^T\bv{AA}^T \bv{U} \bv{\bar \Sigma}^{-1} \right) \nonumber\\
&\preceq \frac{1}{t}\sum \left [ \frac{\sum \tilde\tau_i}{\tilde \tau_i} \cdot  \bar \tau_i(\bv{A}) \cdot \bv{\bar \Sigma}^{-1} \bv{U}^T\bv{a}_i \bv{a}_i^T \bv{U} \bv{\bar \Sigma}^{-1}\right ] - \frac{1}{t}\bv{\bar \Sigma}^{-1} \bv{U}^T\bv{AA}^T \bv{U} \bv{\bar \Sigma}^{-2} \bv{U}^T\bv{AA}^T \bv{U} \bv{\bar \Sigma}^{-1}\nonumber\\
&\preceq \frac{\epsilon^2}{c\log(k/\delta)} \bv{\bar \Sigma}^{-1} \bv{U}^T\bv{A}\bv{A}^T \bv{U} \bv{\bar \Sigma}^{-1}\nonumber\\
&\preceq  \frac{\epsilon^2}{c\log(k/\delta)} \bv \Sigma^2\cdot \bv{\bar \Sigma}^{-2} 
\preceq \frac{\epsilon^2}{c\log(k/\delta)} \bv{D},
\end{align}
where we set $\bv{D}_{i,i} = 1$ for $i \in 1,\ldots,k$ and $\bv{D}_{i,i} = \frac{\sigma_i^2}{\sigma_i^2 + \frac{\norm{\bv{A}_{\setminus k}}_F^2}{k}}$ for all $i \in k+1,...,n$. By the stable rank matrix Bernstein inequality given in Theorem 7.3.1 of \cite{tropp2015introduction}, for $\epsilon < 1$,
\begin{align}
\label{eq:almost_chernoff}
\Pr \left [\norm{\bv Y} \ge \epsilon \right ] &\le \frac{4\tr(\bv D) }{\norm{\bv{D}}_2} \cdot e^{\frac{-\epsilon^2/2}{\left (\frac{\epsilon^2}{c\log(k/\delta)}(\norm{\bv{D}}_2+\epsilon/3)\right )}}.
\end{align}
Clearly $\norm{\bv D}_2 = 1$. Furthermore, following Lemma \ref{sumScores}, $\tr(\bv{D}) \le 2k$.
Plugging into \eqref{eq:almost_chernoff}, we see that
\begin{align*}
\Pr \left [\norm{\bv Y} \ge \epsilon \right ] &\le 8k e^{- \frac{c\log(k/\delta)}{2}})\le \delta/2,
\end{align*}
if we choose the constant $c$ large enough. 
So we have established \eqref{freq_dirs_two_sided}. 
\end{proof}

\subsection{Projection-Cost Preserving Sampling}

\label{sec:pcp}
We now use Theorem \ref{matrix_chernoff} to prove that sampling by ridge leverage scores is sufficient for constructing projection-cost preserving samples. The following theorem is a basic building block in our $O(\nnz(\bv{A}))$ time low-rank approximation algorithm.
\begin{theorem}[Projection-Cost Preservation]\label{pcp_intro} 
For $i \in \{1,\ldots,d\}$, let $\tilde \tau_i \ge \bar \tau_i(\bv{A})$ be an overestimate for the $i^{th}$ ridge leverage score. Let $p_i = \frac{\tilde \tau_i}{\sum_i \tilde \tau_i}$. Let $t = \frac{c\log(k/\delta)}{\epsilon^2} \sum_i \tilde \tau_i$ for any $\epsilon < 1$ and some sufficiently large constant $c$. Construct $\bv{C}$ by sampling $t$ columns of $\bv{A}$, each set to $\frac{1}{\sqrt{tp_i}}\bv{a}_i$ with probability $p_i$. With probability $1-\delta$, for any rank $k$ orthogonal projection $\bv{X}$,
\begin{align*}
(1-\epsilon)\|\bv{A} - \bv{X}\bv{A}\|^2_F \leq \|\bv{C} - \bv{X}\bv{C}\|^2_F\leq (1+\epsilon)\|\bv{A} - \bv{X}\bv{A}\|^2_F.
\end{align*}
\end{theorem}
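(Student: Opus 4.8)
Since $\bv I-\bv X$ is an orthogonal projection, $\|\bv M-\bv X\bv M\|_F^2=\tr((\bv I-\bv X)\bv M\bv M^T)$ for any $\bv M$, so the two-sided statement is equivalent to $\bigl|\tr\bigl((\bv I-\bv X)(\bv{CC}^T-\bv{AA}^T)\bigr)\bigr|\le O(\epsilon)\,\|\bv A-\bv X\bv A\|_F^2$ for every rank-$k$ projection $\bv X$, and here I will freely use $\|\bv A-\bv A_k\|_F^2\le\|\bv A-\bv X\bv A\|_F^2$. Write $\bv E:=\bv{CC}^T-\bv{AA}^T$ and $\lambda:=\|\bv A-\bv A_k\|_F^2/k$. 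Theorem~\ref{matrix_chernoff} (really its proof, via the matrix $\bv Y$) gives $\bv E=\bv N^{1/2}\bv Y\bv N^{1/2}$ where $\bv N=\bv{AA}^T+\lambda\bv U\bv U^T=\bv U\bv{\bar\Sigma}^2\bv U^T$, $\bv Y$ is supported on $\colspan(\bv U)$ and $\norm{\bv Y}_2\le\epsilon$; in particular $-\epsilon\bv N\preceq\bv E\preceq\epsilon\bv N$, and the same holds after conjugating by any projection onto a set of singular directions of $\bv A$.

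\textbf{Head/tail split.} Partition the singular directions of $\bv A$ at the threshold $\lambda$: let $\bv P_H$ project onto those with $\sigma_i^2\ge\lambda$, put $\bv A_H=\bv P_H\bv A$, $\bv G_H=\bv A_H\bv A_H^T=\bv P_H\bv{AA}^T\bv P_H$, and let $\bv P_T=\bv U\bv U^T-\bv P_H$, $\bv A_T=\bv P_T\bv A$. Decompose $\bv E=\bv E_{HH}+\bv E_{HT}+\bv E_{TH}+\bv E_{TT}$, with $\bv E_{HT}=\bv P_H\bv E\bv P_T=\bv N_H^{1/2}\bv Y\bv N_T^{1/2}$ etc. The facts I will use are: (i) since $\bv P_H$ commutes with $\bv{AA}^T$, $\bv G_H\preceq\bv{AA}^T$ and $\lambda\bv P_H\preceq\bv G_H$, hence $\bv N_H\preceq 2\bv G_H$; (ii) $\norm{\bv N_T}_2<2\lambda$ (the largest squared tail singular value is $<\lambda$); (iii) head and tail are orthogonal, so $\bv N_T^{1/2}\bv N_H^{1/2}=0$; (iv) $\|\bv A_T\|_F^2<2\|\bv A-\bv A_k\|_F^2$ (at most $k$ tail directions have index $\le k$, each contributing $<\lambda$). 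For the \emph{head block}, (i) gives $-2\epsilon\bv G_H\preceq\bv E_{HH}\preceq 2\epsilon\bv G_H$, so $|\tr((\bv I-\bv X)\bv E_{HH})|\le 2\epsilon\,\tr((\bv I-\bv X)\bv G_H)\le 2\epsilon\,\tr((\bv I-\bv X)\bv{AA}^T)=2\epsilon\|\bv A-\bv X\bv A\|_F^2$. For the \emph{tail block}, split $\tr((\bv I-\bv X)\bv E_{TT})=\tr(\bv E_{TT})-\tr(\bv X\bv E_{TT})$: by Hölder for Schatten norms and (ii), $|\tr(\bv X\bv E_{TT})|=|\tr(\bv N_T^{1/2}\bv X\bv N_T^{1/2}\,\bv Y)|\le\epsilon\,\tr(\bv X\bv N_T)\le\epsilon k\norm{\bv N_T}_2<2\epsilon\|\bv A-\bv A_k\|_F^2$, using $\tr(\bv X\bv M)\le k\norm{\bv M}_2$ for a rank-$k$ projection; and $\tr(\bv E_{TT})=\|\bv P_T\bv C\|_F^2-\|\bv P_T\bv A\|_F^2$ is a sum of $t$ nonnegative summands, each at most $2\lambda\cdot\frac{\epsilon^2}{c\log(k/\delta)}$ because $\bar\tau_i(\bv A)\ge\|\bv P_T\bv a_i\|^2/(2\lambda)$, with mean $\|\bv P_T\bv A\|_F^2<2\|\bv A-\bv A_k\|_F^2$ by (iv), so a Bernstein bound makes $|\tr(\bv E_{TT})|\le\epsilon\|\bv A-\bv A_k\|_F^2$ except with probability $\delta/2$ (folded into the theorem's $\delta$).

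\textbf{The cross block — the main obstacle.} The blocks $\bv E_{HT},\bv E_{TH}$ are the delicate part: their operator norm involves $\sigma_1(\bv A)$, so a crude spectral bound is useless, and a naive Cauchy--Schwarz on the columns of $\bv C$ overcounts badly (it fails to see that the cross contribution vanishes whenever $\bv X$ is spanned by singular directions of $\bv A$). The way out: $\tr(\bv E_{HT})=\tr(\bv P_T\bv P_H\bv E)=0$ (and likewise $\tr(\bv E_{TH})=0$), so only $\tr\!\bigl((\bv I-\bv X)(\bv E_{HT}+\bv E_{TH})\bigr)=-2\tr(\bv X\bv E_{HT})$ survives. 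Now $\tr(\bv X\bv E_{HT})=\tr(\bv N_T^{1/2}\bv X\bv N_H^{1/2}\,\bv Y)$, so $|\tr(\bv X\bv E_{HT})|\le\epsilon\,\norm{\bv N_T^{1/2}\bv X\bv N_H^{1/2}}_*$ by Hölder. The matrix $\bv N_T^{1/2}\bv X\bv N_H^{1/2}$ has rank at most $k$ (from $\bv X$), so its nuclear norm is $\le\sqrt k$ times its Frobenius norm; and by (iii) it also equals $-\bv N_T^{1/2}(\bv I-\bv X)\bv N_H^{1/2}$, whose Frobenius norm is $\le\norm{\bv N_T}_2^{1/2}\cdot\tr\!\bigl((\bv I-\bv X)\bv N_H\bigr)^{1/2}$. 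Finally $\tr((\bv I-\bv X)\bv N_H)=\|\bv A_H-\bv X\bv A_H\|_F^2+\lambda\,\tr((\bv I-\bv X)\bv P_H)\le 2\|\bv A_H-\bv X\bv A_H\|_F^2\le 2\|\bv A-\bv X\bv A\|_F^2$, where the first inequality is $\lambda\,\tr((\bv I-\bv X)\bv P_H)\le\tr((\bv I-\bv X)\bv G_H)$, valid since $\bv G_H\succeq\lambda\bv P_H$ and $\bv I-\bv X\succeq 0$. Combining, $|\tr(\bv X\bv E_{HT})|\le\epsilon\sqrt k\cdot\sqrt{2\lambda}\cdot\sqrt{2}\,\|\bv A-\bv X\bv A\|_F=2\epsilon\sqrt{k\lambda}\,\|\bv A-\bv X\bv A\|_F=2\epsilon\|\bv A-\bv A_k\|_F\|\bv A-\bv X\bv A\|_F\le 2\epsilon\|\bv A-\bv X\bv A\|_F^2$. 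Adding the head, cross and tail contributions bounds $|\tr((\bv I-\bv X)\bv E)|$ by an absolute constant times $\epsilon\|\bv A-\bv X\bv A\|_F^2$; rescaling $\epsilon$ (equivalently enlarging the constant $c$ in $t$) and union-bounding the single extra Bernstein failure event into $\delta$ completes the proof.
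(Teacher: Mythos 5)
Your proof is correct, and while it uses the same head/tail split at the threshold $\sigma_i^2 \ge \norm{\bv{A}_{\setminus k}}_F^2/k$ and the same two probabilistic events as the paper (the bound $\norm{\bv{Y}}_2\le\epsilon$ from the proof of Theorem \ref{matrix_chernoff}, and a scalar Chernoff bound on $\norm{\bv{P}_{\setminus m}\bv{C}}_F^2$, which is exactly Lemma \ref{trace:bound}), your handling of the two delicate pieces is genuinely different and cleaner. For the tail term $\tr(\bv{X}\bv{E}_{TT})$, the paper derives the pointwise additive PSD bound \eqref{eq:pure_add} and sums it over the $k$ orthonormal columns of $\bv{Z}$, whereas you apply the von Neumann/H\"older trace inequality to $\tr(\bv{N}_T^{1/2}\bv{X}\bv{N}_T^{1/2}\,\bv{Y})$ and use $\tr(\bv{X}\bv{N}_T)\le k\norm{\bv{N}_T}_2 < 2\norm{\bv{A}_{\setminus k}}_F^2$; same constant-order bound, one line. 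The real divergence is the cross term: the paper bounds $|\tr(\bv{Y}\bv{P}_m\bv{C}\bv{C}^T\bv{P}_{\setminus m}\bv{Y})|$ via Cauchy--Schwarz in the $(\bv{A}\bv{A}^T)^+$ semi-inner product followed by a per-direction test-vector argument (the vectors $\bv{m}=\sigma_i^{-1}\bv{u}_i+\tfrac{\sqrt{k}}{\norm{\bv{A}_{\setminus k}}_F}\bv{p}_i$ plugged into \eqref{freq_dirs_two_sided}), while you observe that $\tr(\bv{E}_{HT})=0$ so only $-2\tr(\bv{X}\bv{E}_{HT})$ survives, then use H\"older together with the rank-$k$ nuclear-norm bound $\norm{\bv{M}}_*\le\sqrt{k}\norm{\bv{M}}_F$ and the identity $\bv{N}_T^{1/2}\bv{X}\bv{N}_H^{1/2}=-\bv{N}_T^{1/2}(\bv{I}-\bv{X})\bv{N}_H^{1/2}$. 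Your route makes it transparent that the cross contribution vanishes when $\bv{X}$ aligns with singular directions of $\bv{A}$, and it unifies all three blocks as instances of H\"older applied to the single normalized error matrix $\bv{Y}$; the cost is that you must invoke the internal structure of Theorem \ref{matrix_chernoff}'s proof (the factorization $\bv{E}=\bv{N}^{1/2}\bv{Y}\bv{N}^{1/2}$) rather than only its stated conclusion, which you acknowledge and which is harmless. The bookkeeping (per-summand bound $2\lambda\epsilon^2/(c\log(k/\delta))$ via $\bar\tau_i(\bv{A})\ge\norm{\bv{P}_T\bv{a}_i}_2^2/(2\lambda)$, the bound $\norm{\bv{A}_T}_F^2<2\norm{\bv{A}_{\setminus k}}_F^2$, and the final rescaling of $\epsilon$) all check out.
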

Note that the theorem also holds for independent sampling without replacement, as shown in Appendix \ref{app:without_replacement}.
By Lemma \ref{sumScores}, when each approximation $\tilde\tau_i$ is within a constant factor of the true ridge leverage score $\bar \tau_i(\bv{A})$, we obtain a projection-cost preserving sample with $t = O(k\log(k/\delta)/\epsilon^2)$.

To simplify bookkeeping, we only worry about proving a version of Theorem \ref{pcp_intro} with $(1\pm a\epsilon)$ error for some constant $a$, and assume $\epsilon \le 1/2$. By simply adjusting our constant oversampling parameter, $c$, we can recover the result as stated.

The challenge in proving Theorem \ref{pcp_intro} comes from the mixed additive-multiplicative error of Theorem \ref{matrix_chernoff}. Pure multiplicative error, e.g. from a subspace embedding, or pure additive error, e.g. from a ``Frequent Directions'' sketch \cite{ghashami2015frequent}, are easily converted to projection-cost preservation results \cite{camsThesis}, but merging the analysis is intricate. To do so,
we split $\bv{A}\bv{A}^T$ and $\bv{C}\bv{C}^T$ into their projections onto the top ``head'' singular vectors of $\bv{A}$ and onto the remaining ``tail'' singular vectors. 
Restricted to the span of $\bv{A}$'s top singular vectors, Theorem \ref{matrix_chernoff} gives a purely multiplicative bound. Restricted to vectors spanned by $\bv{A}$'s lower singular vectors, the bound is purely additive.

\begin{proof}
For notational convenience, let $\bv{Y}$ denote $\bv{I}-\bv{X}$, so $\|\bv{A} - \bv{X}\bv{A}\|^2_F = \tr(\bv{Y}\bv{A}\bv{A}^T\bv{Y})$ and $\|\bv{C} - \bv{X}\bv{C}\|^2_F = \tr(\bv{Y}\bv{C}\bv{C}^T\bv{Y})$.

\subsubsection{Head/Tail Split}
Let $m$ be the index of the smallest singular value of $\bv{A}$ such that $\sigma_m^2 \geq \|\bv{A}-\bv{A}_{ k}\|_F^2/k$. 
Let $\bv{P}_m$ denote $\bv{U}_m\bv{U}_m^T$ and $\bv{P}_{\setminus m}$ denote $\bv{U}_{\setminus m}\bv{U}_{\setminus m}^T = \bv{I} - \bv{P}_m$. We split:
\begin{align} 
\tr(\bv{Y}\bv{A}\bv{A}^T\bv{Y}) &= \tr(\bv{Y}\bv{P}_m\bv{A}\bv{A}^T\bv{P}_m\bv{Y}) + \tr(\bv{Y}\bv{P}_{\setminus m}\bv{A}\bv{A}^T\bv{P}_{\setminus m}\bv{Y}) + 2\tr(\bv{Y}\bv{P}_m\bv{A}\bv{A}^T\bv{P}_{\setminus m}\bv{Y})
\nonumber\\
&= \tr(\bv{Y}\bv{A}_m\bv{A}_m^T\bv{Y}) + \tr(\bv{Y}\bv{A}_{\setminus m}\bv{A}_{\setminus m}^T\bv{Y}).  
\label{eq:a_split}
\end{align}
The ``cross terms'' involving $\bv{P}_m\bv{A}$ and $\bv{P}_{\setminus m}\bv{A}$ equal $0$ since the two matrices have mutually orthogonal rows (spanned by $\bv{V}_m^T$ and $\bv{V}_{\setminus m}^T$, respectively). Additionally, we split:
\begin{align} 
\tr(\bv{Y}\bv{C}\bv{C}^T\bv{Y}) &= \tr(\bv{Y}\bv{P}_m\bv{C}\bv{C}^T\bv{P}_m\bv{Y}) + \tr(\bv{Y}\bv{P}_{\setminus m}\bv{C}\bv{C}^T\bv{P}_{\setminus m}\bv{Y})  + 2\tr(\bv{Y}\bv{P}_{m}\bv{C}\bv{C}^T\bv{P}_{\setminus m}\bv{Y}) \label{eq:c_split}
\end{align}
In \eqref{eq:c_split} cross terms do not cancel because, in general, $\bv{P}_{ m}\bv{C}$ and $\bv{P}_{\setminus m}\bv{C}$ \emph{will not} have orthogonal rows, even though they have orthogonal columns. Regardless, while these terms make our analysis more difficult, we proceed with comparing corresponding parts of \eqref{eq:a_split} and \eqref{eq:c_split}.

\subsubsection{Head Terms}
We first bound the terms involving $\bv{P}_{m}$, beginning by showing that:
\begin{align}
\label{eq:pure_mult}
\frac{1- \epsilon}{1+\epsilon} \bv{P}_m\bv{C}\bv{C}^T\bv{P}_m \preceq \bv{A}_m\bv{A}_m^T \preceq \frac{1+ \epsilon}{1-\epsilon} \bv{P}_m\bv{C}\bv{C}^T\bv{P}_m.
\end{align}
For any vector $\bv{x}$, let $\bv{y} = \bv{P}_m \bv{x}$. 
Note that $\bv{x}^T\bv{A}_m\bv{A}_m^T\bv{x} = \bv{y}^T\bv{A}\bv{A}^T\bv{y}$ since $\bv{A}_m\bv{A}_m^T = \bv{P}_m\bv{A}\bv{A}^T\bv{P}_m$ and since $\bv{P}_m\bv{P}_m = \bv{P}_m$. So, using \eqref{freq_dirs_two_sided} 
we can bound:
\begin{align}
\label{eq:raw_top_bound}
(1-\epsilon)\bv{y}^T\bv{C}\bv{C}^T\bv{y} - \epsilon\frac{\|\bv{A}_{\setminus k}\|_F^2}{k}\bv{y}^T\bv{y}  \leq \bv{x}^T\bv{A}_m\bv{A}_m^T\bv{x} \leq (1+\epsilon)\bv{y}^T\bv{C}\bv{C}^T\bv{y} + \epsilon\frac{\|\bv{A}_{\setminus k}\|_F^2}{k}\bv{y}^T\bv{y}.
\end{align}
By our definition of $m$, $\bv{y}$ is orthogonal to all singular directions of $\bv{A}$ except those with squared singular value greater than or equal to $\|\bv{A}_{\setminus k}\|_F^2/k$. It follows that 
\begin{align*}
\bv{x}^T\bv{A}_m\bv{A}_m^T\bv{x} = \bv{y}^T\bv{A}\bv{A}^T\bv{y} \geq \frac{\|\bv{A}_{\setminus k}\|_F^2}{k}\bv{y}^T\bv{y},
\end{align*}
and accordingly, from the left side of \eqref{eq:raw_top_bound}, that $(1-\epsilon)\bv{y}^T\bv{C}\bv{C}^T\bv{y} \leq (1+\epsilon)\bv{x}^T\bv{A}_m\bv{A}_m^T\bv{x} $. 
Additionally, from the right side of \eqref{eq:raw_top_bound}, we have that $(1+\epsilon)\bv{y}^T\bv{C}\bv{C}^T\bv{y} \geq (1-\epsilon)\bv{x}^T\bv{A}_m\bv{A}_m^T\bv{x}$. Since $\bv{y}^T\bv{C}\bv{C}^T\bv{y} = \bv{x}^T\bv{P}_m\bv{C}\bv{C}^T\bv{P}_m\bv{x}$, 
these inequalities combine to prove \eqref{eq:pure_mult}. From \eqref{eq:pure_mult} we can bound the diagonal entries of $\bv{Y}\bv{A}_m\bv{A}_m^T\bv{Y}$ in terms of the corresponding diagonal entries of $\bv{Y}\bv{P}_m\bv{C}\bv{C}^T\bv{P}_m\bv{Y}$, which are all positive, and conclude that:
\begin{align*}
\frac{1- \epsilon}{1+\epsilon}\tr(\bv{Y}\bv{P}_m\bv{C}\bv{C}^T\bv{P}_m\bv{Y}) \leq  \tr(\bv{Y}\bv{A}_m\bv{A}_m^T\bv{Y}) \leq \frac{1+ \epsilon}{1-\epsilon}\tr(\bv{Y}\bv{P}_m\bv{C}\bv{C}^T\bv{P}_m\bv{Y}).
\end{align*}
Assuming $\epsilon < 1/2$, this is equivalent to:
\begin{align}
(1-4\epsilon)\tr(\bv{Y}\bv{A}_m\bv{A}_m^T\bv{Y}) \leq \tr(\bv{Y}\bv{P}_m\bv{C}\bv{C}^T\bv{P}_m\bv{Y}) \leq (1+4\epsilon)\tr(\bv{Y}\bv{A}_m\bv{A}_m^T). \label{eq:top_bound}
\end{align}

\subsubsection{Tail Terms}
For the lower singular directions of $\bv{A}$, Theorem \ref{matrix_chernoff} does not give a multiplicative spectral approximation, so we do things a bit differently. Specifically, we start by noting that:
\begin{align*}
\tr(\bv{Y}\bv{A}_{\setminus m}\bv{A}_{\setminus m}^T\bv{Y}) &= \tr(\bv{A}_{\setminus m}\bv{A}_{\setminus m}^T) - \tr(\bv{X}\bv{A}_{\setminus m}\bv{A}_{\setminus m}^T\bv{X}) \text{ and } \\
\tr(\bv{Y}\bv{P}_{\setminus m}\bv{C}\bv{C}^T\bv{P}_{\setminus m}\bv{Y}) &= \tr(\bv{P}_{\setminus m}\bv{C}\bv{C}^T\bv{P}_{\setminus m}) -  \tr(\bv{X}\bv{P}_{\setminus m}\bv{C}\bv{C}^T\bv{P}_{\setminus m}\bv{X}). 
\end{align*}
We handle $\tr(\bv{A}_{\setminus m}\bv{A}_{\setminus m}^T) = \|\bv{A}_{\setminus m}\|_F^2$ and $\tr(\bv{P}_{\setminus m}\bv{C}\bv{C}^T\bv{P}_{\setminus m}) = \|\bv{P}_{\setminus m}\bv{C}\|_F^2$ first. Since $\bv{C}$ is constructed via an unbiased sampling of $\bv{A}$'s columns, $\E\left[\|\bv{P}_{\setminus m}\bv{C}\|_F^2\right] = \|\bv{A}_{\setminus m}\|_F^2$ and a scalar Chernoff bound is sufficient for showing that this value concentrates around its expectation. Our proof is included as Lemma \ref{trace:bound} in Appendix \ref{app:trace_bound} and implies the following bound:
\begin{align}
\label{tail_mat_trace_bound}
-\epsilon\|\bv{A}_{\setminus k}\|_F^2 \leq \tr(\bv{A}_{\setminus m}\bv{A}_{\setminus m}^T) - \tr(\bv{P}_{\setminus m}\bv{C}\bv{C}^T\bv{P}_{\setminus m}) \leq \epsilon\|\bv{A}_{\setminus k}\|_F^2.
\end{align}
Next, we compare $\tr(\bv{X}\bv{A}_{\setminus m}\bv{A}_{\setminus m}^T\bv{X})$ to  $\tr(\bv{X}\bv{P}_{\setminus m}\bv{C}\bv{C}^T\bv{P}_{\setminus m}\bv{X})$. We first claim that:
\begin{align}
\label{eq:pure_add}
\bv{P}_{\setminus m}\bv{C}\bv{C}^T\bv{P}_{\setminus m} - \frac{4\epsilon}{k} \norm{\bv{A}_{\setminus k}}_F^2\bv{I} \preceq \bv{A}_{\setminus m}\bv{A}_{\setminus m}^T \preceq \bv{P}_{\setminus m}\bv{C}\bv{C}^T\bv{P}_{\setminus m} +  \frac{4\epsilon}{k} \norm{\bv{A}_{\setminus k}}_F^2\bv{I}.
\end{align}
The argument is similar to the one for \eqref{eq:pure_mult}.
For a vector $\bv{x}$, let $\bv{y} = \bv{P}_{\setminus m} \bv{x}$. 
$\bv{x}^T\bv{A}_{\setminus m}\bv{A}_{\setminus m}^T\bv{x} = \bv{y}^T\bv{A}\bv{A}^T\bv{y}$ since $\bv{A}_{\setminus m}\bv{A}_{\setminus m}^T = \bv{P}_{\setminus m}\bv{A}\bv{A}^T\bv{P}_{\setminus m}$ and since $\bv{P}_{\setminus m}\bv{P}_{\setminus m} = \bv{P}_{\setminus m}$. Applying \eqref{freq_dirs_two_sided} gives:
\begin{align*}
(1-\epsilon)\bv{y}^T\bv{C}\bv{C}^T\bv{y} - \epsilon\frac{\|\bv{A}_{\setminus k}\|_F^2}{k}\bv{y}^T\bv{y}  \leq \bv{x}^T\bv{A}_{\setminus m}\bv{A}_{\setminus m}^T\bv{x} \leq (1+\epsilon)\bv{y}^T\bv{C}\bv{C}^T\bv{y} + \epsilon\frac{\|\bv{A}_{\setminus k}\|_F^2}{k}\bv{y}^T\bv{y}.
\end{align*}
Noting that $\bv{y}^T\bv{y} \leq \bv{x}^T\bv{x}$ and assuming $\epsilon \leq 1/2$ gives the following two inequalities:
\begin{align}
\bv{y}^T\bv{C}\bv{C}^T\bv{y} - 2\epsilon\frac{\|\bv{A}_{\setminus k}\|_F^2}{k}\bv{x}^T\bv{x} &\leq (1+2\epsilon)\bv{x}^T\bv{A}_{\setminus m}\bv{A}_{\setminus m}^T\bv{x}, \label{eq:raw_tail_bound_1}\\
(1-2\epsilon)\bv{x}^T\bv{A}_{\setminus m}\bv{A}_{\setminus m}^T\bv{x} &\leq \bv{y}^T\bv{C}\bv{C}^T\bv{y} + 2\epsilon\frac{\|\bv{A}_{\setminus k}\|_F^2}{k}\bv{x}^T\bv{x} \label{eq:raw_tail_bound_2}.
\end{align}
By our choice of $m$, $\bv{x}^T\bv{A}_{\setminus m}\bv{A}_{\setminus m}^T\bv{x} \leq \frac{\|\bv{A}_{\setminus k}\|_F^2}{k}\bv{x}^T\bv{x}$. So, substituting $\bv{y}$ with $\bv{P}_{\setminus m} \bv{x}$ and rearranging \eqref{eq:raw_tail_bound_1} and \eqref{eq:raw_tail_bound_2} gives \eqref{eq:pure_add}. 

Now, since $\bv{X}$ is a rank $k$ projection matrix, it can be written as $\bv{X} = \bv{Z}\bv{Z}^T$ where $\bv{Z} \in \R^{n\times k}$ is a matrix with $k$ orthonormal columns, $\bv{z}_1,\ldots,\bv{z}_k$.  By cyclic property of the trace, 
\begin{align*}
\tr(\bv{X}\bv{A}_{\setminus m}\bv{A}_{\setminus m}^T\bv{X}) = \tr(\bv{Z}^T\bv{A}_{\setminus m}\bv{A}_{\setminus m}^T\bv{Z}) = \sum_{i=1}^k \bv{z}_i^T\bv{A}_{\setminus m}\bv{A}_{\setminus m}^T\bv{z}_i.
\end{align*}
Similarly, $\tr(\bv{X}\bv{P}_{\setminus m}\bv{C}\bv{C}^T\bv{P}_{\setminus m}\bv{X}) = \sum_{i=1}^k \bv{z}_i^T\bv{P}_{\setminus m}\bv{C}\bv{C}^T\bv{P}_{\setminus m}\bv{z}_i$ and we conclude from \eqref{eq:pure_add} that:
\begin{align*}
\tr(\bv{X}\bv{P}_{\setminus m}\bv{C}\bv{C}^T\bv{P}_{\setminus m}\bv{X}) - 4\epsilon\|\bv{A}_{\setminus k}\|_F^2 \leq \tr(\bv{X}\bv{A}_{\setminus m}\bv{A}_{\setminus m}^T\bv{X})  \leq \tr(\bv{X}\bv{P}_{\setminus m}\bv{C}\bv{C}^T\bv{P}_{\setminus m}\bv{X}) + 4\epsilon\|\bv{A}_{\setminus k}\|_F^2,
\end{align*}
which combines with \eqref{tail_mat_trace_bound} to give the final bound:
\begin{align}
\tr(\bv{Y}\bv{A}_{\setminus m}\bv{A}_{\setminus m}^T\bv{Y}) - 5\epsilon\|\bv{A}_{\setminus k}\|_F^2 \leq \tr(\bv{Y}\bv{P}_{\setminus m}\bv{C}\bv{C}^T\bv{P}_{\setminus m}\bv{Y}) \leq \tr(\bv{Y}\bv{A}_{\setminus m}\bv{A}_{\setminus m}^T\bv{Y}) + 5\epsilon\|\bv{A}_{\setminus k}\|_F^2.\label{eq:bottom_bound}
\end{align}

\subsubsection{Cross Terms}
Finally, we handle the cross term $2\tr(\bv{Y}\bv{P}_{m}\bv{C}\bv{C}^T\bv{P}_{\setminus m}\bv{Y})$. We do not have anything to compare this term to, so we just need to show that it is small. To do so, we rewrite:
\begin{align}
\tr(\bv{Y}\bv{P}_m\bv{C}\bv{C}^T\bv{P}_{\setminus m}\bv{Y}) = \tr(\bv{Y}\bv{A}\bv{A}^T(\bv{A}\bv{A}^T)^+\bv{P}_{m}\bv{C}\bv{C}^T\bv{P}_{\setminus m}),
\end{align}
which is an equality since the columns of $\bv{P}_{m}\bv{C}\bv{C}^T\bv{P}_{\setminus m}$ fall in the span of $\bv{A}$'s columns. We eliminate the trailing $\bv{Y}$ using the cyclic property of the trace. $\langle \bv{M}, \bv{N} \rangle = \tr( \bv{M} (\bv{A}\bv{A}^T)^+ \bv{N}^T)$ is a semi-inner product since $\bv{A}\bv{A}^T$ is positive semidefinite. Thus, by the Cauchy-Schwarz inequality,
\begin{align}
|\tr(\bv{Y}\bv{A}\bv{A}^T(\bv{A}\bv{A}^T)^+&\bv{P}_{m}\bv{C}\bv{C}^T\bv{P}_{\setminus m})| \leq \nonumber \\
&\sqrt{\tr(\bv{Y}\bv{A}\bv{A}^T(\bv{A}\bv{A}^T)^+\bv{A}\bv{A}^T\bv{Y})\cdot\tr(\bv{P}_{\setminus m}\bv{C}\bv{C}^T\bv{P}_{m}(\bv{A}\bv{A}^T)^+\bv{P}_{m}\bv{C}\bv{C}^T\bv{P}_{\setminus m})} \nonumber\\
&= \sqrt{\tr(\bv{Y}\bv{A}\bv{A}^T\bv{Y})\cdot\tr(\bv{P}_{\setminus m}\bv{C}\bv{C}^T\bv{U}_{m}\bv{\Sigma}_{m}^{-2}\bv{U}_{m}^T\bv{C}\bv{C}^T\bv{P}_{\setminus m})} \nonumber\\
&= \sqrt{\tr(\bv{Y}\bv{A}\bv{A}^T\bv{Y})}\cdot \sqrt{\|\bv{P}_{\setminus m}\bv{C}\bv{C}^T\bv{U}_{m}\bv{\Sigma}_{m}^{-1}\|_F^2} \label{eq:cauchy_s}.
\end{align}
To bound the second term, we separate:
\begin{align}
\label{eq:sum_sep}
\|\bv{P}_{\setminus m}\bv{C}\bv{C}^T\bv{U}_{m}\bv{\Sigma}_{m}^{-1}\|_F^2 = \sum_{i=1}^m \|\bv{P}_{\setminus m}\bv{C}\bv{C}^T\bv{u}_{i}\|_2^2\sigma_{i}^{-2}.
\end{align}
We next show that the summand is small for every $i$. Take $\bv{p}_i$ to be a unit vector in the direction of $\bv{C}\bv{C}^T\bv{u}_{i}$'s projection onto $\bv{P}_{\setminus m}$.  I.e. $\bv{p}_i = \bv{P}_{\setminus m}\bv{C}\bv{C}^T\bv{u}_{i} / \| \bv{P}_{\setminus m}\bv{C}\bv{C}^T\bv{u}_{i}\|_2$. Then:
\begin{align}
\|\bv{P}_{\setminus m}\bv{C}\bv{C}^T\bv{u}_{i}\|_2^2 = (\bv{p}_i^T\bv{C}\bv{C}\bv{u}_i)^2.
\end{align}
Now, suppose we construct the vector $\bv{m} = \left(\sigma_i^{-1}\bv{u}_i + \frac{\sqrt{k}}{\|\bv{A}_{\setminus k}\|_F}\bv{p}_i\right)$. From \eqref{freq_dirs_two_sided} we know that:
\begin{align*}
(1-\epsilon)\bv{m}^T\bv{C}\bv{C}^T\bv{m} - \frac{\epsilon\|\bv{A}_{\setminus k}\|_F^2}{k} \bv{m}^T\bv{m} &\leq \bv{m}^T\bv{A}\bv{A}^T\bv{m},
\end{align*}
which expands to give:
\begin{align}
(1-\epsilon)\sigma_i^{-2}\bv{u}_i^T\bv{C}\bv{C}^T\bv{u}_i + (1-\epsilon)\frac{k}{\|\bv{A}_{\setminus k}\|^2_F} \bv{p}_i^T\bv{C}\bv{C}^T\bv{p}_i + (1-\epsilon)\frac{2\sqrt{k}}{\sigma_i \|\bv{A}_{\setminus k}\|_F}\bv{p}_i^T\bv{C}\bv{C}^T\bv{u}_i \nonumber \leq\\
 \sigma_i^{-2}\bv{u}_i^T\bv{A}\bv{A}^T\bv{u}_i + \frac{k}{\|\bv{A}_{\setminus k}\|^2_F} + \frac{\epsilon\|\bv{A}_{\setminus k}\|_F^2}{k} \bv{m}^T\bv{m}
= 1+ \frac{k}{\|\bv{A}_{\setminus k}\|^2_F}\bv{p}_i^T\bv{A}\bv{A}^T\bv{p}_i + \frac{\epsilon\|\bv{A}_{\setminus k}\|_F^2}{k} \bv{m}^T\bv{m}. 
\end{align}
There are no cross terms on the right side because $\bv{p}_i$ lies in the span of $\bv{U}_{\setminus m}$ and is thus orthogonal to $\bv{u}_i$ over $\bv{A}\bv{A}^T$.
Now, from \eqref{eq:pure_mult} we know that  $\bv{u}_i^T\bv{C}\bv{C}^T\bv{u}_i \geq (1-2\epsilon)\bv{u}_i^T\bv{A}\bv{A}^T\bv{u}_i \geq (1-2\epsilon)\sigma_i^2$. From  \eqref{eq:pure_add} we also know that $\bv{p}_i^T\bv{C}\bv{C}^T\bv{p}_i \geq \bv{p}_i^T\bv{A}\bv{A}^T\bv{p}_i - 4\epsilon\frac{\|\bv{A}_{\setminus k}\|^2_F}{k}$. Plugging into \eqref{eq:cross_term_bound} gives:
\begin{align}
&(1-3\epsilon)\sigma_i^{-2}\bv{u}_i^T\bv{A}\bv{A}^T\bv{u}_i + (1-\epsilon)\frac{k}{\|\bv{A}_{\setminus k}\|^2_F} \bv{p}_i^T\bv{A}\bv{A}^T\bv{p}_i -4\epsilon + (1-\epsilon)\frac{2\sqrt{k}}{\sigma_i \|\bv{A}_{\setminus k}\|_F}\bv{p}_i^T\bv{C}\bv{C}^T\bv{u}_i \nonumber \\
&\leq 1+ \frac{k}{\|\bv{A}_{\setminus k}\|^2_F}\bv{p}_i^T\bv{A}\bv{A}^T\bv{p}_i + \frac{\epsilon\|\bv{A}_{\setminus k}\|_F^2}{k} \bv{m}^T\bv{m}. \label{eq:cross_term_bound_2}
\end{align}
Noting that $\bv{p}_i^T\bv{A}\bv{A}^T\bv{p}_i \leq \frac{\|\bv{A}_{\setminus k}\|_F^2}{k}$ since $\bv{p}_i$ lies in the column span of $\bv{U}_{\setminus m}$, rearranging \eqref{eq:cross_term_bound_2} gives:
\begin{align*}
(1-\epsilon)\frac{2\sqrt{k}}{\sigma_i \|\bv{A}_{\setminus k}\|_F}\bv{p}_i^T\bv{C}\bv{C}^T\bv{u}_i \leq 8\epsilon + \frac{\epsilon\|\bv{A}_{\setminus k}\|_F^2}{k} \bv{m}^T\bv{m} \leq 12 \epsilon.
\end{align*}
The second inequality follows from the fact that $\sigma_i^{-1} \leq \frac{\sqrt{k}}{\|\bv{A}_{\setminus k}\|_F}$ so $\|\bv{m}\|_2^2 \leq \left(\frac{2\sqrt{k}}{\|\bv{A}_{\setminus k}\|_F}\right)^2.$
Assuming again that $\epsilon \leq 1/2$ gives our final bound:
\begin{align}
\frac{\sqrt{k}}{\sigma_i \|\bv{A}_{\setminus k}\|_F}\bv{p}_i\bv{C}\bv{C}^T\bv{u}_i^T \leq 12\epsilon \nonumber\\
(\bv{p}_i\bv{C}\bv{C}^T\bv{u}_i^T)^2 \leq 144\epsilon^2\frac{\sigma_i^2 \|\bv{A}_{\setminus k}\|^2_F}{k}.
\end{align}
Plugging into \eqref{eq:sum_sep} gives:
\begin{align}
\|\bv{P}_{\setminus m}\bv{C}\bv{C}^T\bv{U}_{m}\bv{\Sigma}_{m}^{-1}\|_F^2 \leq \sum_{i=1}^m 144\epsilon^2\frac{\sigma_i^2 \|\bv{A}_{\setminus k}\|^2_F}{k}\sigma_{i}^{-2} \leq 288\epsilon^2 \|\bv{A}_{\setminus k}\|^2_F.
\end{align}
Note that we get an extra factor of 2 because $m \leq 2k$. Returning to \eqref{eq:cauchy_s}, we conclude that:
\begin{align}
|\tr(\bv{Y}\bv{A}\bv{A}^T(\bv{A}\bv{A}^T)^+\bv{P}_{m}\bv{C}\bv{C}^T\bv{P}_{\setminus m})| &\leq \sqrt{\tr(\bv{Y}\bv{A}\bv{A}^T\bv{Y})}\cdot \sqrt{288\epsilon^2 \|\bv{A}_{\setminus k}\|^2_F} \leq 17\epsilon\tr(\bv{Y}\bv{A}\bv{A}^T\bv{Y}).
\label{eq:cross_term_bound}
\end{align}
The last inequality follows from the fact that $\|\bv{A}_{\setminus k}\|^2_F \leq \tr(\bv{Y}\bv{A}\bv{A}^T\bv{Y})$ since $\bv{A}_{\setminus k}$ is the best rank $k$ approximation to $\bv{A}$. $\tr(\bv{Y}\bv{A}\bv{A}^T\bv{Y}) = \|\bv{A} - \bv{X}\bv{A}\|_F^2$ is the error of a suboptimal rank $k$ approximation.

\subsubsection{Final Bound}
Ultimately, from\eqref{eq:c_split}, \eqref{eq:top_bound}, \eqref{eq:bottom_bound}, and \eqref{eq:cross_term_bound}, we conclude:
\begin{align*}
(1-4\epsilon)\tr(\bv{Y}\bv{A}_m\bv{A}_m^T\bv{Y}) + \tr(\bv{Y}\bv{A}_{\setminus m}\bv{A}_{\setminus m}^T\bv{Y}) - 5\epsilon\|\bv{A}_{\setminus k}\|^2_F -  34\epsilon\tr(\bv{Y}\bv{A}\bv{A}^T\bv{Y})  \leq \tr(\bv{Y}\bv{C}\bv{C}^T\bv{Y}) \nonumber \\
\leq  (1+4\epsilon)\tr(\bv{Y}\bv{A}_m\bv{A}_m^T) + \tr(\bv{Y}\bv{A}_{\setminus m}\bv{A}_{\setminus m}^T\bv{Y}) +5\epsilon\|\bv{A}_{\setminus k}\|^2_F + 34\tr(\bv{Y}\bv{A}\bv{A}^T\bv{Y}).
\end{align*}
Applying the fact that $\|\bv{A}_{\setminus k}\|^2_F \leq \tr(\bv{Y}\bv{A}\bv{A}^T\bv{Y})$ proves Theorem \ref{pcp_intro} for a constant factor of $\epsilon$. \end{proof}

\subsection{Column Subset Selection}
\label{sec:css} 
Although not required for our main low-rank approximation algorithm, we also prove that ridge leverage score sampling can be used to obtain $(1+\epsilon)$ error column subsets (Definition \ref{def:css}). The column subset selection problem is of independent interest and the following result allows ridge leverage scores to be used in our single-pass streaming algorithm for this problem (Section \ref{sec:streaming}).

\begin{theorem}\label{css_intro} 
For $i \in \{1,\ldots,d\}$, let $\tilde \tau_i \ge \bar \tau_i(\bv{A})$ be an overestimate for the $i^{th}$ ridge leverage score. Let $p_i = \frac{\tilde \tau_i}{\sum_i \tilde \tau_i}$. Let $t = c \left(\log k + \frac{\log(1/\delta)}{\epsilon}\right) \sum_i \tilde \tau_i$ for $\epsilon < 1$ and some sufficiently large constant $c$. Construct $\bv{C}$ by sampling $t$ columns of $\bv{A}$, each set to $\bv{a}_i$ with probability $p_i$. With probability $1-\delta$:
\begin{align*}
\norm{\bv{A}-\left (\bv{C}\bv{C}^+\bv{A}\right)_k}_F^2 \le (1+\epsilon) \norm{\bv{A}-\bv{A}_k}_F^2.
\end{align*}
Furthermore, $\bv{C}$ contains a subset of $O\left(\sum_i \tilde \tau_i/\epsilon \right)$ columns that satisfies Definition \ref{def:css} and can be identified in polynomial time.
\end{theorem}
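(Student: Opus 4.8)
First I would observe that $(\bv{C}\bv{C}^+\bv{A})_k$ is exactly the best rank-$k$ matrix whose column span lies in $\colspan(\bv{C})$: for any rank-$k$ $\bv{B}$ with $\colspan(\bv{B})\subseteq\colspan(\bv{C})$ we have $\bv{B}=\bv{C}\bv{C}^+\bv{B}$, so $\|\bv{A}-\bv{B}\|_F^2=\|\bv{A}-\bv{C}\bv{C}^+\bv{A}\|_F^2+\|\bv{C}\bv{C}^+\bv{A}-\bv{B}\|_F^2$ (Pythagoras, since $\bv{A}-\bv{C}\bv{C}^+\bv{A}$ has columns orthogonal to $\colspan(\bv{C})$), and the minimizing rank-$k$ $\bv{B}$ is $(\bv{C}\bv{C}^+\bv{A})_k$. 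So it suffices to exhibit one rank-$k$ matrix in $\colspan(\bv{C})$ with error $\le(1+\epsilon)\|\bv{A}-\bv{A}_k\|_F^2$, and I would take $\bv{B}=\bv{C}\bv{C}^+\bv{A}_k$. Since $\bv{A}-\bv{C}\bv{C}^+\bv{A}_k=\bv{A}_{\setminus k}+(\bv{I}-\bv{C}\bv{C}^+)\bv{A}_k$ and these two matrices have mutually orthogonal rows (in $\colspan(\bv{V}_{\setminus k}^T)$ and $\colspan(\bv{V}_k^T)$ respectively), we get $\|\bv{A}-\bv{C}\bv{C}^+\bv{A}_k\|_F^2=\|\bv{A}-\bv{A}_k\|_F^2+\|(\bv{I}-\bv{C}\bv{C}^+)\bv{A}_k\|_F^2$. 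Thus the whole first claim reduces to proving $\|(\bv{I}-\bv{C}\bv{C}^+)\bv{A}_k\|_F^2\le\epsilon\|\bv{A}-\bv{A}_k\|_F^2$, i.e.\ that $\colspan(\bv{C})$ captures $\bv{A}_k$ up to additive $\epsilon\|\bv{A}-\bv{A}_k\|_F^2$.

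\textbf{Capturing $\bv{A}_k$.} Because $\colspan(\bv{C})$ is scale invariant, I would replace $\bv{C}$ by the rescaled sample of Theorem~\ref{matrix_chernoff} and use that theorem in its equivalent regularized-multiplicative form: with $\lambda=\|\bv{A}-\bv{A}_k\|_F^2/k$, \eqref{freq_dirs_two_sided} at error $\epsilon_0$ rearranges to $(1-\epsilon_0)(\bv{C}\bv{C}^T+\lambda\bv{I})\preceq\bv{A}\bv{A}^T+\lambda\bv{I}\preceq(1+\epsilon_0)(\bv{C}\bv{C}^T+\lambda\bv{I})$. Any unit $\bv{w}\perp\colspan(\bv{C})$ has $\bv{w}^T\bv{C}\bv{C}^T\bv{w}=0$, so the upper bound forces $\|\bv{A}^T\bv{w}\|_2^2\le\epsilon_0\lambda$; taking $\bv{w}=(\bv{I}-\bv{C}\bv{C}^+)\bv{u}_j/\|(\bv{I}-\bv{C}\bv{C}^+)\bv{u}_j\|_2$ and using $\|\bv{A}^T\bv{w}\|_2\ge|\bv{v}_j^T\bv{A}^T\bv{w}|=\sigma_j\|(\bv{I}-\bv{C}\bv{C}^+)\bv{u}_j\|_2$ gives $\sigma_j^2\|(\bv{I}-\bv{C}\bv{C}^+)\bv{u}_j\|_2^2\le\epsilon_0\lambda$. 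Summing over $j\le k$, $\|(\bv{I}-\bv{C}\bv{C}^+)\bv{A}_k\|_F^2=\sum_{j\le k}\sigma_j^2\|(\bv{I}-\bv{C}\bv{C}^+)\bv{u}_j\|_2^2\le\epsilon_0 k\lambda=\epsilon_0\|\bv{A}-\bv{A}_k\|_F^2$, so $\epsilon_0=\epsilon$ already proves the first claim with $t=O(\sum_i\tilde\tau_i\cdot\log(k/\delta)/\epsilon^2)$ columns. To reach the sharper $O\!\big(\sum_i\tilde\tau_i\,(\log k+\log(1/\delta)/\epsilon)\big)$ count I would split $\bv{A}_k=\bv{A}_m+(\bv{A}_k-\bv{A}_m)$ at the index $m$ with $\sigma_m^2\ge\lambda>\sigma_{m+1}^2$. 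For the head $\bv{A}_m$, use a constant-error subspace embedding of $\bv{V}_m$ --- available from $O(\sum_i\tilde\tau_i\log k)$ columns since $\bar\tau_i(\bv{A})\ge\tfrac12\|(\bv{v}_m)_i\|_2^2$ --- together with $\bv{W}_0=(\bv{V}_m^T\bv{S})^+\bv{V}_m^T$, which reduces $\|(\bv{I}-\bv{C}\bv{C}^+)\bv{A}_m\|_F^2$ to $O(1)\cdot\|\bv{A}_{\setminus m}\bv{S}\bv{S}^T\bv{V}_m\|_F^2$; then an approximate-matrix-multiplication bound (using $\bv{A}_{\setminus m}\bv{V}_m=0$ and $\bar\tau_i(\bv{A})\ge\|(\bv{a}_{\setminus m})_i\|_2^2/(2\lambda)$) gives $\E\|\bv{A}_{\setminus m}\bv{S}\bv{S}^T\bv{V}_m\|_F^2\le O(\sum_i\tilde\tau_i/t)\|\bv{A}-\bv{A}_k\|_F^2$. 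The tail $\bv{A}_k-\bv{A}_m$ enters only through the additive bound, and since the directions carrying non-negligible weight there are few, it is likewise $O(\epsilon)\|\bv{A}-\bv{A}_k\|_F^2$. I expect the main obstacle to be exactly this refinement: obtaining the $1/\epsilon$ rather than $1/\epsilon^2$ dependence means meshing the multiplicative head behavior with the additive tail behavior (the head/tail cross contribution is the delicate piece), and upgrading the expected matrix-multiplication bound to a $1-\delta$ statement with only a $\log(1/\delta)$ overhead needs a Bernstein-type inequality --- for which one uses that $p_i$ dominates a constant times $\|(\bv{v}_m)_i\|_2\,\|(\bv{a}_{\setminus m})_i\|_2/\sqrt{\lambda}$, so every sampled rank-one term of $\bv{A}_{\setminus m}\bv{S}\bv{S}^T\bv{V}_m$ has Frobenius norm $O(\sqrt{\lambda}\sum_i\tilde\tau_i/t)$.

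\textbf{The ``furthermore''.} Once the first claim is established, $\bv{C}$ (with $O(\sum_i\tilde\tau_i(\log k+\log(1/\delta)/\epsilon))$ columns) already satisfies Definition~\ref{def:css}: it contains the rank-$k$ matrix $\bv{C}\bv{C}^+\bv{A}_k=\bv{C}\bv{Q}$ achieving $(1+\epsilon)$ error. To shrink to $O(\sum_i\tilde\tau_i/\epsilon)$ columns I would apply a deterministic dual-set / BSS-type spectral-Frobenius column-selection procedure to the columns of $\bv{C}$ --- the standard tool for extracting a near-optimal column subset from a subspace already known to contain a good rank-$k$ approximation --- which preserves the guarantee and runs in polynomial time. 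This last step is routine; the real content is the refined sampling analysis above.
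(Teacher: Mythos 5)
Your reduction to exhibiting one good rank-$k$ matrix in $\colspan(\bv{C})$ is correct, and your argument that $\|(\bv{I}-\bv{C}\bv{C}^+)\bv{A}_k\|_F^2\le\epsilon_0\|\bv{A}_{\setminus k}\|_F^2$ follows from Theorem \ref{matrix_chernoff} at error $\epsilon_0$ is clean and valid. But that only proves the theorem with $t=O(\sum_i\tilde\tau_i\log(k/\delta)/\epsilon^2)$ columns; the entire content of the statement is the $\log k+\log(1/\delta)/\epsilon$ sample count, and there your proposal has a genuine gap. The refinement you sketch (head/tail split at $m$, subspace embedding of $\bv{V}_m$ plus approximate matrix multiplication for the head) is a legitimate BDM/Sarl\'os-style plan and the head analysis would go through, but the tail term $\bv{A}_k-\bv{A}_m$ is not ``likewise $O(\epsilon)\|\bv{A}_{\setminus k}\|_F^2$'' by the additive bound: at $t=c(\log k+\log(1/\delta)/\epsilon)\sum_i\tilde\tau_i$ columns, Theorem \ref{matrix_chernoff} only holds with error $\epsilon_0=\Theta(\sqrt{\epsilon\log k})$ (for constant $\delta$), so summing $\sigma_j^2\|(\bv{I}-\bv{C}\bv{C}^+)\bv{u}_j\|_2^2\le\epsilon_0\lambda$ over the up-to-$k$ indices $j\in(m,k]$ gives only $O(\sqrt{\epsilon\log k})\|\bv{A}_{\setminus k}\|_F^2$. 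Ridge scores do not dominate the leverage scores of $\bv{V}_k$ in those directions (they are attenuated by $\sigma_j^2/(\sigma_j^2+\lambda)\ll 1$), so you also cannot fold the tail into the subspace-embedding argument. You flag this as ``the delicate piece,'' but flagging it is not resolving it — and the specific tool you point to cannot close it at this sample size.

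The paper takes an entirely different route that sidesteps this. It splits $\bv{C}$ into a prefix $\bv{C}_1$ of $O(\sum_i\tilde\tau_i\log(k/\delta))$ columns, which by Theorem \ref{pcp_intro} is a \emph{constant}-error column subset with span $\bv{Z}$, and a suffix $\bv{C}_2$ of $O(\sum_i\tilde\tau_i\log(1/\delta)/\epsilon)$ columns. The key structural lemma — which is the idea missing from your proposal — is that the additive-multiplicative bound for $\bv{C}_1$ implies $\frac{c_1\|\bv{A}_{\setminus k}\|_F^2}{k}\,\bar\tau_i(\bv{A})\ge\|(\bv{I}-\bv{Z}\bv{Z}^T)\bv{a}_i\|_2^2$ for every $i$, i.e.\ ridge leverage scores pointwise dominate the Deshpande--Vempala adaptive sampling probabilities relative to $\colspan(\bv{C}_1)$, with the correct normalization. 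A rejection-sampling coupling then shows $\bv{C}_2$ contains a bona fide adaptive sample of the size required by Lemma \ref{adaptive_sampling_primative}, which boosts the constant-error subset to $(1+\epsilon)$ with only $O(k\log(1/\delta)/\epsilon)$ extra columns. This is why the $1/\epsilon$ dependence is achievable at all. Your ``furthermore'' step (deterministic dual-set selection) is in the right spirit but also under-specified; the paper applies the deterministic selection only to $\bv{C}_1$ and then argues that one of the $O(\log(1/\delta))$ blocks of $\bv{C}_2$ succeeds. I would recommend either carrying out your head/tail/cross analysis in full with a correct treatment of the tail, or adopting the adaptive-sampling reduction.
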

Note that $\left(\bv{C}\bv{C}^+\bv{A}\right)_k$ is a rank $k$ matrix in the column span of $\bv{C}$, so Theorem \ref{css_intro} implies that $\bv{C}$ is a $(1+\epsilon)$ error column subset according to Definition \ref{def:css}.
By Lemma \ref{sumScores}, if each $\tilde \tau_i$ is within a constant factor of $\bar \tau_i(\bv{A})$, the approximate ridge leverage scores sum to $O(k)$ so Theorem \ref{css_intro} gives a column subset of size $O\left (k\log k + k/\epsilon \right)$, which contains a near optimally sized column subset with $O\left (k/\epsilon \right )$ columns. Again, the theorem also holds for sampling without replacement (see Appx. \ref{app:without_replacement}).


Our proof relies on establishing a connection between ridge leverage sampling and well known \emph{adaptive sampling} techniques for column subset selection 
\cite{deshpande2006matrix,deshpandeVempala}. 
We start with the following lemma on adaptive sampling for column subset selection:

\begin{lemma}[Theorem 2.1 of \cite{deshpande2006matrix}]\label{adaptive_sampling_primative}
Let $\bv{C}$ be any subset of $\bv{A}$'s columns and let $\bv{Z}$ be an orthonormal matrix whose columns span those of $\bv{C}$. If we sample an additional set $\bv{S}$ of $O \left (\frac{k\log(1/\delta)}{\epsilon} \cdot \frac{\norm{\bv{A}-\bv{Z}\bv{Z}^T\bv{A}}_F^2}{\norm{\bv{A}_{\setminus k}}_F^2} \right )$ columns from $\bv{A}$ with probability proportional to $\norm{(\bv{A}-\bv{Z}\bv{Z}^T\bv{A})_i}_2^2$, then $[\bv{S}\cup\bv{C}]$ is a $(1+\epsilon)$ error column subset for $\bv{A}$ with probability $(1-\delta)$. \footnote{Theorem 2.1 was originally stated as an expected error result, but it can be seen to hold with constant probability via Markov's inequality and accordingly with $(1-\delta)$ probability when oversampling by a factor of $\log(1/\delta)$}
\end{lemma}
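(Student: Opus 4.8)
The plan is to prove the stronger \emph{expectation} form of the statement and then amplify to the high-probability version. Write $\bv{E} \eqdef \bv{A} - \bv{Z}\bv{Z}^T\bv{A}$ for the current residual, so the additional columns are drawn i.i.d.\ with $p_i = \norm{\bv{E}_i}_2^2 / \norm{\bv{E}}_F^2$, and let $s \eqdef |\bv{S}|$. The goal is to exhibit an \emph{explicit} rank-$k$ matrix $\bv{X}$ in the column span of $[\bv{S}\cup\bv{C}]$ with $\E\norm{\bv{A} - \bv{X}}_F^2 \le \norm{\bv{A}_{\setminus k}}_F^2 + \tfrac{k}{s}\norm{\bv{E}}_F^2$. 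Since the best rank-$k$ matrix inside $\colspan([\bv{S}\cup\bv{C}])$, namely $\big(\,[\bv{S}\cup\bv{C}][\bv{S}\cup\bv{C}]^+\bv{A}\big)_k$, is at least as good as $\bv{X}$, this bound transfers to it, and choosing $s = \Theta\!\big(\tfrac{k}{\epsilon}\cdot \tfrac{\norm{\bv{E}}_F^2}{\norm{\bv{A}_{\setminus k}}_F^2}\big)$ makes the additive term at most $\tfrac{\epsilon}{2}\norm{\bv{A}_{\setminus k}}_F^2$.

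The crux is the choice of $\bv{X}$. Let $\bv{v}_1,\dots,\bv{v}_k$ be the top right singular vectors of $\bv{A}$, so $\bv{A}_k = \sum_{j\le k}(\bv{A}\bv{v}_j)\bv{v}_j^T$. For each sampled index $i_\ell$ ($\ell = 1,\dots,s$) and each $j\le k$ set $\bv{w}_j^{(\ell)} \eqdef \tfrac{\bv{v}_j(i_\ell)}{p_{i_\ell}}\bv{E}_{i_\ell}$ and $\bar{\bv{w}}_j \eqdef \tfrac1s\sum_{\ell}\bv{w}_j^{(\ell)}$, and define $\bv{X} \eqdef \bv{Z}\bv{Z}^T\bv{A}\bv{V}_k\bv{V}_k^T + \sum_{j\le k}\bar{\bv{w}}_j\bv{v}_j^T$. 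I would then verify three facts: (i) $\bv{X}$ has rank at most $k$, since every term has row space inside $\colspan\{\bv{v}_1,\dots,\bv{v}_k\}$; (ii) $\bv{X}$ lies in $\colspan([\bv{S}\cup\bv{C}])$, since the columns of $\bv{Z}\bv{Z}^T\bv{A}(\cdots)$ lie in $\colspan(\bv{Z}) = \colspan(\bv{C})$ and each $\bv{E}_{i_\ell} = \bv{a}_{i_\ell} - \bv{Z}\bv{Z}^T\bv{a}_{i_\ell}$ lies in $\colspan([\bv{S}\cup\bv{C}])$; and (iii), using $\bv{Z}\bv{Z}^T\bv{A}\bv{V}_k\bv{V}_k^T = \bv{Z}\bv{Z}^T\bv{A}_k$ and $(\bv{I}-\bv{Z}\bv{Z}^T)\bv{A}_k = \bv{E}\bv{V}_k\bv{V}_k^T = \sum_{j\le k}(\bv{E}\bv{v}_j)\bv{v}_j^T$, the residual collapses to
\[
\bv{A} - \bv{X} = \bv{A}_{\setminus k} + \sum_{j\le k}\big(\bv{E}\bv{v}_j - \bar{\bv{w}}_j\big)\bv{v}_j^T .
\]
The two summands have orthogonal row spaces ($\colspan\{\bv{v}_{k+1},\dots\}$ versus $\colspan\{\bv{v}_1,\dots,\bv{v}_k\}$), and the rank-one pieces $(\cdot)\bv{v}_j^T$ are mutually Frobenius-orthogonal, so $\norm{\bv{A}-\bv{X}}_F^2 = \norm{\bv{A}_{\setminus k}}_F^2 + \sum_{j\le k}\norm{\bv{E}\bv{v}_j - \bar{\bv{w}}_j}_2^2$.

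It then remains to bound $\E\norm{\bv{E}\bv{v}_j - \bar{\bv{w}}_j}_2^2$ by $\norm{\bv{E}}_F^2/s$. Each $\bv{w}_j^{(\ell)}$ is unbiased for $\bv{E}\bv{v}_j = \sum_i \bv{v}_j(i)\bv{E}_i$, and its second moment is $\sum_i p_i\,\tfrac{\bv{v}_j(i)^2}{p_i^2}\norm{\bv{E}_i}_2^2 = \sum_i \bv{v}_j(i)^2\,\tfrac{\norm{\bv{E}_i}_2^2}{p_i} = \norm{\bv{E}}_F^2\sum_i \bv{v}_j(i)^2 = \norm{\bv{E}}_F^2$, where the middle step is exactly the definition of $p_i$ (indices with $p_i = 0$, i.e.\ $\bv{E}_i = 0$, contribute nothing to either side) and the last uses $\norm{\bv{v}_j}_2 = 1$. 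Hence $\E\norm{\bv{w}_j^{(\ell)} - \bv{E}\bv{v}_j}_2^2 \le \norm{\bv{E}}_F^2$, and averaging $s$ i.i.d.\ copies gives $\E\norm{\bar{\bv{w}}_j - \bv{E}\bv{v}_j}_2^2 \le \norm{\bv{E}}_F^2/s$; summing over $j\le k$ yields the claimed expectation bound. Finally, Markov's inequality applied to the nonnegative random variable $\norm{\bv{A}-\big(\,[\bv{S}\cup\bv{C}][\bv{S}\cup\bv{C}]^+\bv{A}\big)_k}_F^2 - \norm{\bv{A}_{\setminus k}}_F^2$ gives error at most $(1+\epsilon)\norm{\bv{A}_{\setminus k}}_F^2$ with probability at least $1/2$; since adding more sampled columns only enlarges the span and hence only decreases this error, taking the union of $\Theta(\log(1/\delta))$ independent draws of $\bv{S}$ drives the failure probability below $\delta$ at the cost of the stated $\log(1/\delta)$ factor in $|\bv{S}|$.

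The main obstacle is not any single inequality but identifying the right surrogate $\bv{X}$ — once one sees that averaging the rescaled residual columns $\bv{w}_j^{(\ell)}$ against the optimal right singular directions is the correct object, the rest is mechanical. The one place I would be careful is the algebra in step (iii): confirming $\bv{Z}\bv{Z}^T\bv{A}_k = \bv{Z}\bv{Z}^T\bv{A}\bv{V}_k\bv{V}_k^T$ (true because $\bv{A}_k = \bv{A}\bv{V}_k\bv{V}_k^T$), rewriting $(\bv{I}-\bv{Z}\bv{Z}^T)\bv{A}_k$ as $\sum_j(\bv{E}\bv{v}_j)\bv{v}_j^T$, and checking that the residual really splits with no surviving cross terms. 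Everything else — unbiasedness, the one-line second-moment computation, i.i.d.\ averaging, and the Markov amplification — is routine; and because the sampling distribution is fixed in advance by $\bv{Z}$ (the draws are genuinely i.i.d., not sequentially adaptive), no martingale machinery is needed and the amplification-by-repetition argument applies verbatim.
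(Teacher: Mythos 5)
Your proposal is correct. Note that the paper does not actually prove this lemma internally: it imports it as Theorem 2.1 of \cite{deshpande2006matrix} and disposes of the failure-probability claim in a footnote (Markov's inequality for constant success probability, then repetition/oversampling by a $\log(1/\delta)$ factor). What you have written is essentially a faithful reconstruction of the cited theorem's original argument: the surrogate $\bv{X} = \bv{Z}\bv{Z}^T\bv{A}\bv{V}_k\bv{V}_k^T + \sum_{j\le k}\bar{\bv{w}}_j\bv{v}_j^T$ built from unbiased importance-sampled estimates of $\bv{E}\bv{v}_j$, the Pythagorean split $\norm{\bv{A}-\bv{X}}_F^2 = \norm{\bv{A}_{\setminus k}}_F^2 + \sum_j\norm{\bv{E}\bv{v}_j-\bar{\bv{w}}_j}_2^2$, and the second-moment bound $\norm{\bv{E}}_F^2/s$ per direction are exactly the Deshpande--Vempala mechanism, and your amplification step (nonnegativity of the excess error, Markov, monotonicity of the error under enlarging the sampled span, and independent batches) is precisely the content of the paper's footnote. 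The one point worth stating explicitly if you write this up is the standard fact you invoke in passing, that $\bigl([\bv{S}\cup\bv{C}][\bv{S}\cup\bv{C}]^+\bv{A}\bigr)_k$ is the best rank-$k$ matrix whose columns lie in $\colspan([\bv{S}\cup\bv{C}])$ (via the Pythagorean identity for the projection onto that span); with that noted, the argument is complete and matches the lemma as stated, including the sample-size accounting.
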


When $\bv{C}$ is a constant error column subset, then $\norm{\bv{A}-\bv{Z}\bv{Z}^T\bv{A}}_F^2 \leq \norm{\bv{A}-(\bv{Z}\bv{Z}^T\bv{A})_k}_F^2 = O(\norm{\bv{A}_{\setminus k}}_F^2)$ and accordingly we only need $O (k\log(1/\delta)/\epsilon)$ additional adaptive samples. So one potential algorithm for column subset selection is as follows: apply Theorems \ref{matrix_chernoff} and \ref{pcp_intro}, sampling $O(k\log(k/\delta))$ columns by ridge leverage score to obtain a constant error projection-cost preserving sample, will also be a constant error column subset. Then sample $O (k\log(1/\delta)/\epsilon)$ additional columns adaptively against $\bv{C}$.

However, it turns out that ridge leverage scores well approximate adaptive sampling probabilities computed with respect to \emph{any} constant error additive-multiplicative spectral approximation satisfying Theorem \ref{matrix_chernoff}! That is, surprisingly, they achieve the performance of adaptive sampling without being adaptive at all. Simply sampling $O(k \log(1/\delta)/\epsilon)$ more columns by ridge leverage score and invoking Lemma \ref{adaptive_sampling_primative} suffices to achieve $(1+\epsilon)$ error.

\begin{proof}[Proof of Theorem \ref{css_intro}]
 We formally prove that $\bv{C}$ is itself a good column subset before showing our stronger guarantee, that it also contains a column subset of optimal size, up to constants.
\subsubsection{Primary Column Subset Selection Guarantee}
We split our sample $\bv{C}$, into $\bv{C}_1$, which contains the first $c \log (k/\delta) \sum_i \tilde \tau_i$ columns and $\bv{C}_2$, which contains the next $c \log (1/\delta)/\epsilon \sum_i \tilde \tau_i$ columns. Note that in our final sample complexity the $\log (1/\delta)$ factor in the size of $\bv{C}_1$ is not shown as it is absorbed into the larger size of $\bv{C}_2$ when $\log(1/\delta) > \log(k)$ and into the $\log(k)$ otherwise.
By Theorem \ref{pcp_intro}, we know that, appropriately reweighted, $\bv{C}_1$ is a constant error projection-cost preserving sample of $\bv{A}$. This means that $\bv{C}_1$ is also a constant error 
column subset. Let $\bv{Z}$ be an orthonormal matrix whose columns span the columns of $\bv{C}_1$.

To invoke Lemma \ref{adaptive_sampling_primative} to boost $\bv{C}_1$ to a $(1+\epsilon)$ column subset, we need to sample columns with probabilities proportional to $\norm{(\bv{A}-\bv{Z}\bv{Z}^T\bv{A})_i}_2^2$. This is equivalent to sampling proportional to:
\begin{align*}
(\bv{a}_i^T - \bv{a}_i^T \bv{ZZ}^T)(\bv{a}_i - \bv{ZZ}^T\bv{a}_i ) = \bv{a}_i^T\bv{a}_i - 2\bv{a}_i^T \bv{ZZ}^T \bv{a}_i + \bv{a}_i^T \bv{ZZ}^T\bv{ZZ}^T \bv{a}_i =  \bv{a}_i^T \left (\bv{I} - \bv{ZZ}^T \right ) \bv{a}_i.
\end{align*}
We can assume $\|\bv{A}_{\setminus k}\|_F^2 > 0$ or else $\bv{C}_1$ must fully span $\bv{A}$'s columns and we're done. Scaling $ \bar \tau_i(\bv{A})$:
\begin{align*}
\frac{\norm{\bv{A}_{\setminus k}}_F^2}{k} \bar \tau_i(\bv{A}) = \bv{a}_i^T\left (\frac{k}{\norm{\bv{A}_{\setminus k}}_F^2} \bv{AA}^T + \bv{I} \right )^+ \bv{a}_i.
\end{align*}
Since $\bv{C}_1$ satisfies Theorem \ref{matrix_chernoff} with constant error, for large enough constant $c_1$,
\begin{align*}
\frac{k}{\norm{\bv{A}_{\setminus k}}_F^2} \bv{AA}^T + \bv{I} \preceq c_1 \left (\frac{k}{\norm{\bv{A}_{\setminus k}}_F^2} \bv{C}_1\bv{C}_1^T + \bv{I} \right )
\preceq c_1 \left (\bv{I} + \frac{k\norm{\bv{C}_1\bv{C}_1^T}_2}{\norm{\bv{A}_{\setminus k}}_F^2} \bv{ZZ}^T \right ).
\end{align*}
Furthermore, $\bv{I}-\bv{ZZ}^T \preceq \left (\bv{I} + c\bv{ZZ}^T \right )^+$ for \emph{any} positive $c$ so,
\begin{align*}
c_1\left (\frac{k}{\norm{\bv{A}_{\setminus k}}_F^2} \bv{AA}^T + \bv{I} \right )^+ \succeq \left (\bv{I} + \frac{k\norm{\bv{C}_1\bv{C}_1^T}_2}{\norm{\bv{A}_{\setminus k}}_F^2} \bv{ZZ}^T \right )^+ \succeq \bv{I}-\bv{ZZ}^T.
\end{align*}
So $\frac{c_1\norm{\bv{A}_{\setminus k}}_F^2}{k} \bar \tau_i(\bv{A}) \ge \norm{(\bv{A}-\bv{ZZ}^T\bv{A})_i}_2^2$ for all $i$ and hence $\frac{c_1\norm{\bv{A}_{\setminus k}}_F^2}{k} \tilde \tau_i \ge \norm{(\bv{A}-\bv{ZZ}^T\bv{A})_i}_2^2$. 

$\bv{C}_2$ is a set of $c \log(1/\delta)/\epsilon \cdot \sum_i \tilde \tau_i$ columns sampled with probability proportional to approximate ridge leverage scores. Consider forming $\bv{C}_2'$ by setting $\left (\bv{C}_2\right)_i = \bv{0}$ with probability: $$\frac{\norm{(\bv{A}-\bv{ZZ}^T\bv{A})_{j(i)}}_2^2}{\frac{c_1\norm{\bv{A}_{\setminus k}}_F^2}{k} \tilde \tau_{j(i)}},$$ where $j(i)$ is just the index of the column of $\bv{A}$ that $\left (\bv{C}_2\right )_i$ is equal to. Clearly, if not equal to $\bv{0}$, each column of $\bv{C}_2'$ is equal to $\bv{a}_i$ with probability proportional to the adaptive sampling probability $\norm{(\bv{A}-\bv{ZZ}^T\bv{A})_i}_2^2$. Additionally, in expectation, the number of nonzero columns will be:
\begin{align*}
\left (c \log(1/\delta)/\epsilon \cdot \sum_i \tilde \tau_i\right) \cdot \sum_j \left [ \frac{\tilde \tau_j}{ \sum_i \tilde \tau_i} \frac{\norm{(\bv{A}-\bv{ZZ}^T\bv{A})_{j}}_2^2}{\frac{c_1\norm{\bv{A}_{\setminus k}}_F^2}{k} \tilde \tau_{j}} \right ] = \frac{ck \log(1/\delta)}{c_1\epsilon}\cdot  \frac{\norm{\bv{A}-\bv{Z}\bv{Z}^T\bv{A}}_F^2}{\norm{\bv{A}_{\setminus k}}_F^2}.
\end{align*}

By a Chernoff bound, with probability $1-\delta/2$ at least half this number of columns will be nonzero, and
by Lemma \ref{adaptive_sampling_primative}, for large enough $c$, conditioning on the above column count bound holding, $[\bv{C}_1 \cup \bv{C}_2' ]$ is a $(1+\epsilon)$ error column subset for $\bv{A}$ with probability $1-\delta/2$. Just noting that $\colspan([\bv{C}_1 \cup \bv{C}_2' ]) \subseteq \colspan([\bv{C}_1 \cup \bv{C}_2 ])$ and union bounding over the two possible fail conditions, gives that $[\bv{C}_1 \cup \bv{C}_2 ] = \bv{C}$ is a $(1+\epsilon)$ column subset with probability at least $1-\delta$.

\subsubsection{Stronger Containment Guarantee}
It now remains to show the second condition of Theorem \ref{css_intro}: $\bv{C}$ contains a subset of $O(\sum_i \tilde \tau_i/\epsilon)$ columns that also satisfies Definition \ref{def:css}. This follows from noting that we can apply, for example, the polynomial time deterministic column selection algorithm of \cite{kmeansPaper} to produce a matrix $\bv{C}_1'$ with $O(k)$ columns that is both a constant error additive-multiplicative spectral approximation and a constant error projection-cost preserving sample for $\bv{C}_1$. If $\bv{C}_1'$ has constant error for $\bv{C}_1$, it does for $\bv{A}$ as well and so is a constant error column subset. 

$\bv{C}_2$ contains $O(\log(1/\delta))$ sets of $O(\sum_i \tilde \tau_i/\epsilon)$ columns, $\bv{C}_2^1, \bv{C}_2^2,\ldots,\bv{C}_2^{O(\log(1/\delta))}$. By our argument above, for each $\bv{C}_2^i$, $[\bv{C}_1', \bv{C}_2^i ]$ is a $(1+\epsilon)$ error column subset of $\bv{A}$ with constant probability. So with probability $1-\delta$, at least one $[\bv{C}_1', \bv{C}_2^i ]$ is good. This set contains just $O(k + \sum_i \tilde \tau_i/\epsilon)=O( \sum_i \tilde\tau_i/\epsilon)$ columns, giving the theorem.
\end{proof}

\section{Monotonicity of Ridge Leverage Scores}
\label{sec:mono}

With our main sampling results in place, we focus on the algorithmic problem of how to efficiently approximate the ridge leverage scores of a matrix $\bv{A}$. In the offline setting, we will show that these scores can be approximated in $O(\nnz(\bv{A}))$ time using a recursive sampling algorithm.  We will also show how to compute and sample by the scores in a single-pass column stream. 

Both of these applications will require a unique stability property of the ridge leverage scores:
\begin{lemma}[Ridge Leverage Score Monotonicity]\label{monotonicity}
For any $\bv{A} \in \mathbb{R}^{n \times d}$ and vector $\bv{x} \in \R^{n}$, for every $i \in 1,\ldots,d$ we have:
\begin{align*}
\bar \tau_i(\bv{A}) \le \bar \tau_i(\bv{A} \cup \bv{x}),
\end{align*}
where $\bv{A} \cup \bv{x}$ is simply $\bv{A}$ with $\bv{x}$ appended as its final column.
\end{lemma}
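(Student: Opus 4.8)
The surrounding text describes this lemma as the statement that ``the ridge leverage score of a column cannot increase if another column is added''; I take that as the intended content, i.e.\ $\bar\tau_i(\bv{A}\cup\bv{x})\le\bar\tau_i(\bv{A})$, and will sketch a proof in that orientation (taken literally, the displayed inequality already fails for $\bv{A}=(2)\in\R^{1\times1}$, $\bv{x}=(2)$, $k=1$, where $\bar\tau_1$ drops from $1$ to $\tfrac12$). The plan is to reduce everything to a single Loewner-order comparison via the variational description of the ridge score used in the proof of Theorem~\ref{matrix_chernoff}: for PSD $\bv{M}$ with $\bv{a}_i\in\colspan(\bv{M})$, one has $\bv{a}_i^T\bv{M}^+\bv{a}_i=\min\{t\ge0:\bv{a}_i\bv{a}_i^T\preceq t\,\bv{M}\}$. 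Writing $\lambda=\norm{\bv{A}_{\setminus k}}_F^2/k$, $\lambda'=\norm{(\bv{A}\cup\bv{x})_{\setminus k}}_F^2/k$, and using $(\bv{A}\cup\bv{x})(\bv{A}\cup\bv{x})^T=\bv{A}\bv{A}^T+\bv{x}\bv{x}^T$, this gives $\bar\tau_i(\bv{A})=\min\{t:\bv{a}_i\bv{a}_i^T\preceq t(\bv{A}\bv{A}^T+\lambda\bv{I})\}$ and $\bar\tau_i(\bv{A}\cup\bv{x})=\min\{t:\bv{a}_i\bv{a}_i^T\preceq t(\bv{A}\bv{A}^T+\bv{x}\bv{x}^T+\lambda'\bv{I})\}$; both are well posed because $\bv{a}_i\in\colspan(\bv{A})$ lies in the column span of each of the two matrices.

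The heart of the argument is that the ridge parameter moves the favorable way: $\lambda'\ge\lambda$. Indeed $\bv{A}\bv{A}^T+\bv{x}\bv{x}^T\succeq\bv{A}\bv{A}^T$, so Weyl's inequality gives $\sigma_j(\bv{A}\cup\bv{x})\ge\sigma_j(\bv{A})$ for every $j$ (padding $\bv{A}$'s singular-value list with zeros), and summing the squares over $j>k$ yields $\norm{(\bv{A}\cup\bv{x})_{\setminus k}}_F^2\ge\norm{\bv{A}_{\setminus k}}_F^2$. Hence $\bv{A}\bv{A}^T+\bv{x}\bv{x}^T+\lambda'\bv{I}\succeq\bv{A}\bv{A}^T+\lambda\bv{I}$, so any $t$ feasible in the characterization of $\bar\tau_i(\bv{A})$ is feasible in that of $\bar\tau_i(\bv{A}\cup\bv{x})$; taking minima gives $\bar\tau_i(\bv{A}\cup\bv{x})\le\bar\tau_i(\bv{A})$.

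The only genuinely non-routine point is the inequality $\lambda'\ge\lambda$: that \emph{this particular} regularizer $\norm{\bv{A}_{\setminus k}}_F^2/k$ can only grow when a column is appended. This is precisely the stability that truncation-based low-rank leverage scores lack, and it is what makes the lemma hold for ridge scores. Everything else is bookkeeping: the cases $\lambda=0$ or $\lambda'=0$ (where pseudoinverses enter) are subsumed by the $\min\{t:\cdot\}$ characterization, which requires only that $\bv{a}_i$ lie in the column span of the matrix in question — automatic, since $\bv{a}_i$ is a column of $\bv{A}$, hence $\bv{a}_i\in\colspan(\bv{A})=\colspan(\bv{A}\bv{A}^T)$, which is contained in the column span of both $\bv{A}\bv{A}^T+\lambda\bv{I}$ and $\bv{A}\bv{A}^T+\bv{x}\bv{x}^T+\lambda'\bv{I}$.
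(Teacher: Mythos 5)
Your reading of the orientation is the right one: the paper derives Lemma \ref{monotonicity} from Theorem \ref{monotonicity2} by taking $\bv{M}=\bv{A}$ and replacing $\bv{A}$ by $\bv{A}\cup\bv{x}$, which yields $\bar\tau_i(\bv{A}\cup\bv{x})\le\bar\tau_i^{\bv{A}}(\bv{A}\cup\bv{x})=\bar\tau_i(\bv{A})$ --- the direction stated in the surrounding prose and the one actually used in the streaming analysis (where $\bar\tau_j(\bv{A}^{(m-1)})\ge\bar\tau_j(\bv{A}^{(m)})$ is invoked); the displayed inequality in the lemma has its two sides transposed, and your $1\times 1$ counterexample to the literal statement is valid. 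Your proof of the intended inequality is correct and shares the paper's skeleton --- the regularized Gram matrix can only grow when a column is appended, hence the quadratic form in its pseudoinverse can only shrink --- but the supporting steps differ. The paper proves the more general Theorem \ref{monotonicity2}: the regularizer comparison $\norm{\bv{M}-\bv{M}_k}_F^2\le\norm{\bv{A}-\bv{A}_k}_F^2$ is obtained by projecting off the top-$k$ left singular directions of the larger matrix and using $\bv{M}\bv{M}^T\preceq\bv{A}\bv{A}^T$, and the passage to inverses uses the anti-monotonicity fact that $\bv{B}\preceq\bv{C}$ implies $\bv{B}^+\succeq\bv{C}^+$ for PSD matrices of equal span, handled carefully via a projection $\bv{R}$ onto that span and the $\infty$ convention for out-of-span columns. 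You instead get $\norm{(\bv{A}\cup\bv{x})_{\setminus k}}_F^2\ge\norm{\bv{A}_{\setminus k}}_F^2$ from Weyl monotonicity of the spectrum under $\bv{A}\bv{A}^T+\bv{x}\bv{x}^T\succeq\bv{A}\bv{A}^T$, and you replace the pseudoinverse step by the variational characterization $\bv{a}_i^T\bv{N}^+\bv{a}_i=\min\{t\ge 0:\bv{a}_i\bv{a}_i^T\preceq t\bv{N}\}$ for PSD $\bv{N}$ with $\bv{a}_i\in\colspan(\bv{N})$ (one direction of which the paper already uses in proving Theorem \ref{matrix_chernoff}); this cleanly subsumes the rank-deficient cases where the ridge term vanishes, since $\bv{a}_i\in\colspan(\bv{A}\bv{A}^T)$. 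Both routes are sound; the paper's buys the generalized statement for arbitrary $\bv{M}$ with $\bv{M}\bv{M}^T\preceq\bv{A}\bv{A}^T$, which is what the recursive and uniform-sampling arguments of Section \ref{sec:nnza} actually need, whereas your argument as written covers only the append-one-column case, though it extends to the general setting with the same two observations.
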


This statement is extremely natural, given that leverage scores are meant to be a measure of importance. It ensures that the importance of a column can only decrease when additional columns are added to $\bv{A}$. While it holds for standard leverage scores, surprisingly no prior low-rank leverage scores satisfy this property.

We begin by defining the \emph{generalized ridge leverage score} as the ridge leverage score of a column estimated using a matrix other than $\bv{A}$ itself.
\begin{definition}[Generalized Ridge Leverage Score]
For any $\bv{A} \in \mathbb{R}^{n \times d}$ and $\bv{M} \in \mathbb{R}^{n \times d'}$,
the $i^{th}$ generalized ridge leverage score of $\bv{A}$ with respect to $\bv{M}$ is defined as:
\begin{align*}
\bar \tau_i^{\bv M}(\bv{A}) = \begin{cases}\bv{a}_i^T \left (\bv{MM}^T + \frac{\norm{\bv{M} - \bv{M}_k}_F^2}{k}\bv{I}\right )^+\bv{a}_i & \text{for } \bv{a}_i \in \colspan\left (\bv{MM}^T + \frac{\norm{\bv{M} - \bv{M}_k}_F^2}{k}\bv{I}\right)\\
\infty & \text{otherwise.}
\end{cases}
\end{align*}
\end{definition}
This definition is the intuitive one. Since our goal is typically to compute over-estimates of $\bar\tau_i(\bv{A})$ using $\bv{M}$, if $\bv{a}_i$ does not fall in the span of $\bv{MM}^T + \frac{\norm{\bv{M} - \bv{M}_k}_F^2}{k}\bv{I}$ we conservatively set its generalized leverage score to $\infty$ instead of $0$. Note that this case only applies when $\bv{M}$ is rank $k$ and thus $\frac{\norm{\bv{M} - \bv{M}_k}_F^2}{k}\bv{I}$ is $0$. 

We now prove a general monotonicity theorem, from which Lemma \ref{monotonicity} follows immediately by setting $\bv{M} = \bv{A}$ and $\bv{A} = \bv{A} \cup \bv{x}$.

\begin{theorem}[Generalized Monotonicity Bound]\label{monotonicity2}
For any $\bv{A} \in \mathbb{R}^{n \times d}$ and $\bv{M} \in \mathbb{R}^{n \times d'}$ with $\bv{M}\bv{M}^T \preceq \bv{A}\bv{A}^T$ we have:
\begin{align*}
\bar \tau_i(\bv{A}) \le \bar \tau_i^{\bv{M}}(\bv{A}).
\end{align*}
\end{theorem}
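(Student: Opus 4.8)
The plan is to reduce the whole statement to a single Loewner-order comparison between the two regularized matrices $\bv{M}\bv{M}^T + \lambda_{\bv M}\bv{I}$ and $\bv{A}\bv{A}^T + \lambda_{\bv A}\bv{I}$, where $\lambda_{\bv A} = \norm{\bv{A}-\bv{A}_k}_F^2/k$ and $\lambda_{\bv M} = \norm{\bv{M}-\bv{M}_k}_F^2/k$, and then invoke a general fact about how quadratic forms of pseudoinverses behave under the Loewner order. Lemma~\ref{monotonicity} then follows immediately by taking $\bv{M} = \bv{A}$ and applying the theorem with $\bv{A}$ replaced by $\bv{A}\cup\bv{x}$, since appending a column only grows $\bv{A}\bv{A}^T$ in the Loewner order.

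First I would show $\lambda_{\bv M} \le \lambda_{\bv A}$. From $\bv{M}\bv{M}^T \preceq \bv{A}\bv{A}^T$ and the Courant--Fischer min-max characterization of eigenvalues of symmetric matrices, $\sigma_i(\bv{M})^2 = \lambda_i(\bv{M}\bv{M}^T) \le \lambda_i(\bv{A}\bv{A}^T) = \sigma_i(\bv{A})^2$ for every $i$. Summing the tail gives $\norm{\bv{M}-\bv{M}_k}_F^2 = \sum_{i>k}\sigma_i(\bv{M})^2 \le \sum_{i>k}\sigma_i(\bv{A})^2 = \norm{\bv{A}-\bv{A}_k}_F^2$, hence $\lambda_{\bv M} \le \lambda_{\bv A}$. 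Adding this scalar inequality (times $\bv{I}$) to the hypothesis $\bv{M}\bv{M}^T \preceq \bv{A}\bv{A}^T$ yields $\bv{M}\bv{M}^T + \lambda_{\bv M}\bv{I} \preceq \bv{A}\bv{A}^T + \lambda_{\bv A}\bv{I}$.

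Next I would establish the elementary lemma: if $\bv 0 \preceq \bv{N}_1 \preceq \bv{N}_2$ and $\bv{x} \in \colspan(\bv{N}_1)$, then $\bv{x}^T\bv{N}_2^+\bv{x} \le \bv{x}^T\bv{N}_1^+\bv{x}$. The cleanest route is the variational identity $\bv{x}^T\bv{N}^+\bv{x} = \max_{\bv z}\,(\bv{x}^T\bv{z})^2/(\bv{z}^T\bv{N}\bv{z})$, valid whenever $\bv{x}\in\colspan(\bv{N})$ (the maximum is effectively taken over $\bv z \in \colspan(\bv N)$, on which $\bv N$ acts as a positive definite operator; this is Cauchy--Schwarz in the $\bv N$-semi-inner product). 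Since $\bv{N}_1 \preceq \bv{N}_2$ forces $\colspan(\bv{N}_1) \subseteq \colspan(\bv{N}_2)$ and $\bv{z}^T\bv{N}_1\bv{z} \le \bv{z}^T\bv{N}_2\bv{z}$ for all $\bv z$, the maximand can only decrease when $\bv N_1$ is replaced by $\bv N_2$, giving $\bv{x}^T\bv{N}_1^+\bv{x} \ge \bv{x}^T\bv{N}_2^+\bv{x}$. The degenerate terms are harmless: if $\bv z^T \bv N_1 \bv z = 0$ then $\bv N_1 \bv z = 0$, so $\bv z \perp \colspan(\bv N_1) \ni \bv x$ and $\bv x^T \bv z = 0$.

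Finally I would combine the pieces with $\bv{N}_1 = \bv{M}\bv{M}^T + \lambda_{\bv M}\bv{I}$ and $\bv{N}_2 = \bv{A}\bv{A}^T + \lambda_{\bv A}\bv{I}$. If $\bv{a}_i \in \colspan(\bv{N}_1)$, the lemma gives $\bar\tau_i(\bv{A}) = \bv{a}_i^T\bv{N}_2^+\bv{a}_i \le \bv{a}_i^T\bv{N}_1^+\bv{a}_i = \bar\tau_i^{\bv M}(\bv{A})$. Otherwise $\bar\tau_i^{\bv M}(\bv{A}) = \infty$ by definition — which, as noted in the excerpt, can occur only when $\bv{M}$ has rank at most $k$ so that $\lambda_{\bv M}=0$ — and the inequality is trivial since $\bar\tau_i(\bv{A})$ is finite ($\bv{a}_i$ being a column of $\bv{A}$ lies in $\colspan(\bv{A}\bv{A}^T)$). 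The main obstacle is the pseudoinverse lemma of the third paragraph: Loewner monotonicity of matrix \emph{inverses} is textbook, but the pseudoinverse analogue genuinely needs the column-span hypothesis on $\bv{x}$ and careful handling of vectors in the kernels — which is precisely the reason the generalized ridge leverage score was defined to equal $\infty$ off the span of $\bv{M}$.
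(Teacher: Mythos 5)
Your proof is correct and follows the same skeleton as the paper's: both arguments reduce the theorem to the single Loewner comparison $\bv{M}\bv{M}^T + \lambda_{\bv M}\bv{I} \preceq \bv{A}\bv{A}^T + \lambda_{\bv A}\bv{I}$ and then compare pseudoinverse quadratic forms, dispatching the off-span case via the $\infty$ convention in the definition of $\bar\tau_i^{\bv M}$. The two key steps are justified differently, however. For $\lambda_{\bv M}\le\lambda_{\bv A}$, the paper does not use eigenvalue monotonicity; it instead writes $\norm{\bv{M}-\bv{M}_k}_F^2 \le \norm{(\bv{I}-\bv{P}_k)\bv{M}}_F^2 \le \norm{(\bv{I}-\bv{P}_k)\bv{A}}_F^2 = \norm{\bv{A}-\bv{A}_k}_F^2$, where $\bv{P}_k$ projects onto $\bv{A}$'s top $k$ left singular vectors and the middle inequality follows from $\bv{M}\bv{M}^T\preceq\bv{A}\bv{A}^T$ by taking traces against $\bv{I}-\bv{P}_k$; your Weyl-monotonicity route is equally valid. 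For the pseudoinverse step, the paper cites the fact that $\bv{B}\preceq\bv{C}$ with equal column spans implies $\bv{B}^+\succeq\bv{C}^+$ and sandwiches with a projection $\bv{R}$ onto $\colspan\bigl(\bv{M}\bv{M}^T+\lambda_{\bv M}\bv{I}\bigr)$ to cope with possibly unequal spans, whereas your variational identity $\bv{x}^T\bv{N}^+\bv{x}=\max_{\bv z}(\bv{x}^T\bv{z})^2/(\bv{z}^T\bv{N}\bv{z})$ proves exactly the quadratic-form inequality that is actually needed. Your version is self-contained (a short Cauchy--Schwarz computation replaces the citation), handles the kernel directions explicitly, and is arguably cleaner at the one point where the paper's write-up is informal, since $\bv{R}\bv{N}^+\bv{R}$ is not literally the pseudoinverse of a restricted matrix. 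Both routes deliver the same theorem with essentially the same amount of work.
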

\begin{proof}
We first note that $\norm{\bv{M} - \bv{M}_k}_F^2 \le \norm{\bv{A}-\bv{A}_k}_F^2$ since, letting $\bv{P}_k$ be the projection onto the top $k$ column singular vectors of $\bv{A}$, by the optimality of $\bv{M}_k$ we have:
\begin{align*}
\norm{\bv{M} - \bv{M}_k}_F^2 \le \norm{(\bv{I}-\bv{P}_k)\bv{M}}_F^2 \le \norm{(\bv{I}-\bv{P}_k)\bv{A}}_F^2 = \norm{\bv{A}-\bv{A}_k}_F^2.
\end{align*}
Accordingly, $$\bv{MM}^T + \frac{\norm{\bv{M}-\bv{M}_k}_F^2}{k}\bv{I} \preceq \bv{AA}^T + \frac{\norm{\bv{A}-\bv{A}_k}_F^2}{k}\bv{I}.$$ Let $\bv{R}$ be a projection matrix onto the column span of $\bv{MM}^T + \frac{\norm{\bv{M} - \bv{M}_k}_F^2}{k} \bv{I}$. Since for any PSD matrices $\bv{B}$ and $\bv{C}$ with the same column span, $\bv{B} \preceq \bv{C}$ implies $\bv{B}^+ \succeq \bv{C}^+$ (see \cite{psdInvert}) we have:
\begin{align*}
\bv{R}\left (\bv{MM}^T + \frac{\norm{\bv{M}-\bv{M}_k}_F^2}{k}\bv{I}\right )^+\bv{R} \succeq \bv{R} \left ( \bv{AA}^T + \frac{\norm{\bv{A}-\bv{A}_k}_F^2}{k}\bv{I}\right )^+\bv{R}.
\end{align*}
For any $\bv{a}_i$ not lying in $\colspan\left (\bv{MM}^T + \frac{\norm{\bv{M} - \bv{M}_k}_F^2}{k} \bv{I} \right) $, $\bar \tau_i^{\bv{M}}(\bv{A}) = \infty$ and the theorem holds trivially. Otherwise, we have $\bv{R}\bv{a}_i = \bv{a}_i$ and so:
\begin{align*}
\bar \tau_i(\bv{A}) = \bv{a}_i^T\bv{R} \left ( \bv{AA}^T + \frac{\norm{\bv{A}-\bv{A}_k}_F^2}{k}\bv{I}\right )^+\bv{R}\bv{a}_i \le \bv{a}_i^T \bv{R}\left (\bv{MM}^T + \frac{\norm{\bv{M}-\bv{M}_k}_F^2}{k}\bv{I}\right )^+\bv{R} \bv{a}_i = \bar \tau_i^{\bv{M}}(\bv A).
\end{align*}
This gives the theorem.
\end{proof}

\section{Recursive Ridge Leverage Score Approximation}
\label{sec:nnza}
With Theorem \ref{monotonicity2} in place, we are ready to prove that ridge leverage scores can be approximated in $O(\nnz(\bv{A}))$ time. Our work closely follows \cite{cohen2015uniform}, which shows how to approximate traditional leverage scores via recursive sampling. 

\subsection{Intuition and Preliminaries}
The central idea behind recursive sampling is as follows: if we uniformly sample, for example, $1/2$ of $\bv{A}$'s columns to form $\bv{C}$ and compute ridge leverage score estimates with respect to just these columns, by monotonicity, the estimates will \emph{upper bound} $\bv{A}$'s true ridge leverage scores. While some of these upper bounds will be crude, we can show that their overall sum is small. 

Accordingly, we can use the estimates to sample $O(k \log k)$ columns from $\bv{A}$ to obtain a constant factor additive-multiplicative spectral approximation by Theorem \ref{matrix_chernoff}, as well as a constant factor projection-cost preserving sample by Theorem \ref{pcp_intro}. This approximation is enough to obtain constant factor estimates of the ridge leverage scores of $\bv{A}$. 

$\bv{C}$ may still be relatively large (e.g. half the size of $\bv{A}$), but it can be recursively approximated via the same sampling scheme, eventually giving our input sparsity time algorithm.

We first give a foundational lemma showing that an approximation of the form given by Theorems \ref{matrix_chernoff} and \ref{pcp_intro} is enough to give constant factor approximations to ridge leverage scores.
\begin{lemma}\label{leverage_approx} Assume that, for an $\epsilon \leq 1/2$, we have $\bv{C}$ satisfying equation \eqref{freq_dirs_two_sided} from Theorem \ref{matrix_chernoff}:
\begin{align*}
(1-\epsilon) \bv{C} \bv{C}^T - \frac{\epsilon}{k} \norm{\bv{A} - \bv{A}_k}_F^2 \bv{I} \preceq \bv{A}\bv{A}^T \preceq (1+\epsilon) \bv{C} \bv{C}^T + \frac{\epsilon}{k} \norm{\bv{A} - \bv{A}_k}_F^2 \bv{I},\end{align*}
along with equation \eqref{eq:pcp} from Definition \ref{def:pcp}:
\begin{align*}
(1-\epsilon)\|\bv{A} - \bv{X}\bv{A}\|^2_F \leq \|\bv{C} - \bv{X}\bv{C}\|^2_F\leq (1+\epsilon)\|\bv{A} - \bv{X}\bv{A}\|^2_F \text{, $\forall$ rank $k$ } \bv{X}.
\end{align*}
Then for all $i$, $$(1-4\epsilon) \bar \tau_i(\bv{A}) \le \bar \tau_i^{\bv C}(\bv{A}) \le (1+4\epsilon) \bar \tau_i(\bv{A}).$$
\end{lemma}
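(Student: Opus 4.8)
The plan is to reduce the whole lemma to a single two‑sided positive‑semidefinite comparison between the two regularized matrices $\bv{M}_{\bv{A}} \eqdef \bv{A}\bv{A}^T + \frac{\norm{\bv{A}-\bv{A}_k}_F^2}{k}\bv{I}$ and $\bv{M}_{\bv{C}} \eqdef \bv{C}\bv{C}^T + \frac{\norm{\bv{C}-\bv{C}_k}_F^2}{k}\bv{I}$, and then read off the leverage score bounds as quadratic forms $\bv{a}_i^T\bv{M}^+\bv{a}_i$ in their pseudoinverses. The first step is to compare the two regularization parameters. Since $\norm{\bv{A}-\bv{A}_k}_F^2 = \min_{\bv{X}}\norm{\bv{A}-\bv{X}\bv{A}}_F^2$ and $\norm{\bv{C}-\bv{C}_k}_F^2 = \min_{\bv{X}}\norm{\bv{C}-\bv{X}\bv{C}}_F^2$ over rank $k$ orthogonal projections $\bv{X}$, evaluating the projection-cost preservation hypothesis \eqref{eq:pcp} at the minimizer for $\bv{A}$ (for the upper bound) and at the minimizer for $\bv{C}$ (for the lower bound) gives
\begin{align*}
(1-\epsilon)\norm{\bv{A}-\bv{A}_k}_F^2 \le \norm{\bv{C}-\bv{C}_k}_F^2 \le (1+\epsilon)\norm{\bv{A}-\bv{A}_k}_F^2.
\end{align*}

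Next I would feed this into \eqref{freq_dirs_two_sided}. Adding $\frac{\norm{\bv{A}-\bv{A}_k}_F^2}{k}\bv{I}$ to all three terms of \eqref{freq_dirs_two_sided} converts the mixed additive–multiplicative bound into a purely multiplicative comparison between $\bv{M}_{\bv{A}}$ and $\bv{C}\bv{C}^T + \frac{\norm{\bv{A}-\bv{A}_k}_F^2}{k}\bv{I}$; then replacing $\frac{\norm{\bv{A}-\bv{A}_k}_F^2}{k}$ by $\frac{\norm{\bv{C}-\bv{C}_k}_F^2}{k}$ via the previous display, using $\bv{C}\bv{C}^T \succeq \bv{0}$, costs only a factor $\tfrac{1}{1\mp\epsilon}$ on either side. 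Organized this way the slack accumulates as $\tfrac{1+\epsilon}{1-\epsilon}$ and $\tfrac{1-\epsilon}{1+\epsilon}$, which are bounded by $1+4\epsilon$ and (comfortably) by $1-4\epsilon$ respectively for $\epsilon \le 1/2$, so
\begin{align*}
(1-4\epsilon)\,\bv{M}_{\bv{C}} \preceq \bv{M}_{\bv{A}} \preceq (1+4\epsilon)\,\bv{M}_{\bv{C}}.
\end{align*}

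Finally I would turn this PSD sandwich into the leverage score inequality. Both $\bv{M}_{\bv{A}}$ and $\bv{M}_{\bv{C}}$ have the same column span: if $\norm{\bv{A}-\bv{A}_k}_F^2 > 0$ then by the first display $\norm{\bv{C}-\bv{C}_k}_F^2 > 0$ too and both matrices are full rank; otherwise both parameters are $0$, and then the two-sided bound $(1-\epsilon)\bv{C}\bv{C}^T \preceq \bv{A}\bv{A}^T \preceq (1+\epsilon)\bv{C}\bv{C}^T$ forces $\colspan(\bv{C}\bv{C}^T)=\colspan(\bv{A}\bv{A}^T)$, which contains $\bv{a}_i$. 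In particular $\bv{a}_i$ always lies in $\colspan(\bv{M}_{\bv{C}})$, so $\bar\tau_i^{\bv{C}}(\bv{A})$ is finite and equals $\bv{a}_i^T\bv{M}_{\bv{C}}^+\bv{a}_i$. Invoking the fact (\cite{psdInvert}, already used in the proof of Theorem \ref{monotonicity2}) that for PSD matrices with equal column span $\bv{B}\preceq\bv{D}$ implies $\bv{B}^+\succeq\bv{D}^+$, the sandwich inverts to $(1-4\epsilon)\bv{M}_{\bv{A}}^+ \preceq \bv{M}_{\bv{C}}^+ \preceq (1+4\epsilon)\bv{M}_{\bv{A}}^+$, and evaluating the quadratic form at $\bv{a}_i$ yields exactly $(1-4\epsilon)\bar\tau_i(\bv{A}) \le \bar\tau_i^{\bv{C}}(\bv{A}) \le (1+4\epsilon)\bar\tau_i(\bv{A})$.

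The argument is conceptually short; the real work is bookkeeping. The main point to watch is the constant: the naive chain of inequalities spawns products like $(1\pm\epsilon)^2$ and reciprocals $1/(1\mp\epsilon)$, so one must route the estimates through the ratio form above (or, as done elsewhere in the paper, prove the bound with some constant times $\epsilon$ and rescale $\epsilon$) to land cleanly at $4\epsilon$ for all $\epsilon\le 1/2$. The only other subtlety is the degenerate rank-deficient case $\norm{\bv{A}-\bv{A}_k}_F^2 = 0$, which must be treated separately both to certify that $\bar\tau_i^{\bv{C}}(\bv{A})$ is finite and to justify applying pseudoinverse monotonicity on the common column span.
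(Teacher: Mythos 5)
Your proposal is correct and follows essentially the same route as the paper's proof: it establishes $(1-\epsilon)\norm{\bv{A}-\bv{A}_k}_F^2 \le \norm{\bv{C}-\bv{C}_k}_F^2 \le (1+\epsilon)\norm{\bv{A}-\bv{A}_k}_F^2$ from the projection-cost preservation hypothesis, converts the additive-multiplicative bound into a $(1\pm 4\epsilon)$ PSD sandwich between the two regularized matrices, handles the $\norm{\bv{A}-\bv{A}_k}_F^2 = 0$ case to certify equal column spans, and inverts via pseudoinverse monotonicity. No gaps.
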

\begin{proof}
Let $\bv{P}_k$ be the projection onto $\bv{A}$'s top $k$ column singular vectors. By the optimality of $\bv{C}_k$ in approximating $\bv{C}$ and the projection-cost preservation condition, we know that $\norm{\bv{C}-\bv{C}_k}_F^2 \le \norm{\bv{C}-\bv{P}_k\bv{C}}_F^2 \le (1+\epsilon)\norm{\bv{A}-\bv{A}_k}_F^2.$ Also, letting $ \bv{\tilde P}_k$ be the projection onto $\bv{C}$'s top $k$ column singular vectors, we have $(1-\epsilon)\norm{\bv{A}-\bv{A}_k}_F^2 \le (1-\epsilon)\norm{\bv{A}-\bv{\tilde P}_k\bv{A}}_F^2 \le \norm{\bv{C}-\bv{C}_k}_F^2$. So overall:
\begin{align}
\label{eq:tail_mult}
(1-\epsilon)\norm{\bv{A}-\bv{A}_k}_F^2 \le \norm{\bv{C}-\bv{C}_k}_F^2 \le (1+\epsilon)\norm{\bv{A}-\bv{A}_k}_F^2.
\end{align}
Using the guarantee from Theorem \ref{matrix_chernoff} we have:
\begin{align*}
(1-\epsilon) \bv{C} \bv{C}^T + \frac{(1-\epsilon)\norm{\bv{A} - \bv{A}_k}_F^2}{k}  \bv{I} &\preceq \bv{AA}^T + \frac{\norm{\bv{A} - \bv{A}_k}_F^2}{k} \bv{I} \preceq (1+\epsilon) \bv{C} \bv{C}^T + \frac{(1+\epsilon)\norm{\bv{A} - \bv{A}_k}_F^2}{k} \bv{I}.
\end{align*}
Combining with our bound on $\norm{\bv{C}-\bv{C}_k}_F^2$ gives:
\begin{align*}
(1-\epsilon) \bv{C} \bv{C}^T + \frac{\frac{(1-\epsilon)}{(1+\epsilon)}\norm{\bv{C} - \bv{C}_k}_F^2}{k}  \bv{I} &\preceq \bv{AA}^T + \frac{\norm{\bv{A} - \bv{A}_k}_F^2}{k} \bv{I} \preceq (1+\epsilon) \bv{C} \bv{C}^T + \frac{\frac{(1+\epsilon)}{(1-\epsilon)}\norm{\bv{C} - \bv{C}_k}_F^2}{k} \bv{I},
\end{align*}
and when $\epsilon \leq 1/2$, we can simplify to:
\begin{align*}
(1-4\epsilon) \left ( \bv{C} \bv{C}^T + \frac{\norm{\bv{C} - \bv{C}_k}_F^2}{k}  \bv{I} \right )&\preceq \bv{AA}^T + \frac{\norm{\bv{A} - \bv{A}_k}_F^2}{k} \bv{I} \preceq (1+4\epsilon) \left ( \bv{C} \bv{C}^T + \frac{\norm{\bv{C} - \bv{C}_k}_F^2}{k} \bv{I} \right ).
\end{align*}

If $\norm{\bv{A} - \bv{A}_k}_F^2 = 0$, and thus by \eqref{eq:tail_mult} $\norm{\bv{C} - \bv{C}_k}_F^2 = 0$, then $\bv{A}$ and $\bv{C}$ must have the same column span or else it could not hold that $(1-4\epsilon)\bv{CC}^T\preceq\bv{A}\bv{A}^T\preceq(1+4\epsilon)\bv{CC}^T$. On the other hand, if $\norm{\bv{A} - \bv{A}_k}_F^2 > 0$, and thus by \eqref{eq:tail_mult} $\norm{\bv{C} - \bv{C}_k}_F^2 > 0$, both $\bv{AA}^T + \frac{\norm{\bv{A} - \bv{A}_k}_F^2}{k}\bv{I}$ and $ \bv{C} \bv{C}^T + \frac{\norm{\bv{C} - \bv{C}_k}_F^2}{k}\bv{I}$ span all of $\mathbb{R}^n$. Either way, the two matrices have the same span and so by \cite{psdInvert} we have:
\begin{align*}
(1-4\epsilon) \left (\bv{AA}^T + \frac{\norm{\bv{A}_{\setminus k}}_F^2}{k} \bv{I} \right )^+ \preceq \left ( \bv{C} \bv{C}^T + \frac{\norm{\bv{C}_{\setminus k}}_F^2}{k}  \bv{I} \right )^+ \preceq (1+4\epsilon) \left (\bv{AA}^T + \frac{\norm{\bv{A}_{\setminus k}}_F^2}{k} \bv{I} \right )^+,
\end{align*}
which gives the lemma.
\end{proof}

Our next lemma, which is analogous to Theorem 2 of \cite{cohen2015uniform}, shows that by reweighting a 
small number of columns in $\bv{A}$, we can obtain a matrix with all ridge leverage scores bounded by a small constant, which ensures that it can be well approximated by uniform sampling.

\begin{lemma}[Ridge Leverage Score Bounding Column Reweighting]\label{weighting_existance}
For any $\bv{A} \in \R^{n \times d}$ and any score upper bound $u>0$, there exists a diagonal matrix $\bv{W} \in \mathbb{R}^{d\times d}$ with $\bv{0} \preceq \bv{W} \preceq \bv{I}$ such that:
\begin{align}\label{levUpper}
\forall i, \bar \tau_i\left(\bv{AW}\right) \le u,
\end{align}
and
\begin{align}\label{levSum}
|\{i : \bv{W}_{ii} \neq 1 \}| \le \frac{3k}{u}.
\end{align}
\end{lemma}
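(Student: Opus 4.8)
The plan is to construct $\bv{W}$ by an iterative down-weighting procedure. Start with $\bv{W} = \bv{I}$, so that $\bv{AW} = \bv{A}$. As long as there exists a column index $i$ with $\bar\tau_i(\bv{AW}) > u$, pick any such $i$ and decrease $\bv{W}_{ii}$ continuously. I would track what happens to $\bar\tau_i(\bv{AW})$ as $\bv{W}_{ii}$ shrinks: when $\bv{W}_{ii} \to 0$ the column effectively vanishes, and by the monotonicity of Theorem~\ref{monotonicity2} (removing/shrinking a column cannot increase any ridge leverage score — more precisely, applied with $\bv{M} = \bv{AW}'$ where $\bv{W}'$ agrees with $\bv{W}$ except $\bv{W}'_{ii}$ is smaller, one gets $\bv{M}\bv{M}^T \preceq (\bv{AW})(\bv{AW})^T$), the scores move in the right direction. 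The subtlety is that $\bar\tau_i$ depends on $\bv{W}_{ii}$ both through the rank-one term $\bv{a}_i\bv{a}_i^T$ in $\bv{AA}^T$ and through the regularization $\lambda = \|\bv{AW} - (\bv{AW})_k\|_F^2/k$, so I should argue that $\bar\tau_i(\bv{AW})$ is continuous in $\bv{W}_{ii}$ and tends to $0$ (or at least drops to $u$) as $\bv{W}_{ii}\to 0$; by the intermediate value theorem there is a value of $\bv{W}_{ii}$ achieving $\bar\tau_i(\bv{AW}) = u$ exactly, or the column gets zeroed out. Once a column is brought down to score exactly $u$ (or to weight $0$), the monotonicity property guarantees that \emph{no previously-fixed column's score goes back above $u$}, so the process makes monotone progress and must terminate, with every final column satisfying $\bar\tau_i(\bv{AW}) \le u$. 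This gives \eqref{levUpper}.

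For the count bound \eqref{levSum}, I would look at the final weighted matrix $\bv{B} = \bv{AW}$ and apply the sum bound of Lemma~\ref{sumScores} to $\bv{B}$: $\sum_i \bar\tau_i(\bv{B}) \le 2k$. Every index $i$ with $\bv{W}_{ii} \ne 1$ is one that got touched by the procedure, and I want to say each such column contributes at least (roughly) $u/3$ to $\sum_i \bar\tau_i(\bv{B})$, which would force the number of such columns to be at most $(2k)/(u/3) \cdot$ something — but I need to be careful because a column that was zeroed out contributes $0$ to the sum in $\bv{B}$. The cleaner approach is to instead compare against a matrix where the down-weighted columns are \emph{not fully removed}. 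Here is the fix I'd use: only down-weight columns, never all the way to zero, and argue termination differently — or, better, observe that each column that is currently being processed had score $>u$ in the matrix right before it was down-weighted, and by monotonicity its score in that intermediate matrix is at most its score in $\bv{B}$ would be if we hadn't shrunk the \emph{other} heavy columns... This is where the argument from \cite{cohen2015uniform} (their Theorem~2) comes in, so I would mirror it: the right accounting is to note that at termination, for each down-weighted column either $\bv{W}_{ii}=0$ (contributes nothing but there can be at most ... hmm) or $\bar\tau_i(\bv{B}) = u$ exactly.

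Let me restructure: the cleanest route is to never zero a column — if the procedure would drive $\bv{W}_{ii}$ to $0$, that means $\bar\tau_i$ stays above $u$ for all positive weights, which I'd argue cannot happen because as $\bv{W}_{ii}\to 0^+$ the quantity $\bv{W}_{ii}^2 \bv{a}_i^T(\bv{BB}^T + \lambda\bv{I})^+\bv{a}_i$ (the score of the scaled column) goes to $0$ as long as $\lambda$ stays bounded below — and $\lambda = \|\bv{B} - \bv{B}_k\|_F^2/k$ is nonincreasing along the process, but if it ever hits $0$ then $\bv{B}$ has rank $\le k$ and at most $k$ columns can have nonzero (hence possibly large) score anyway, giving $|\{i: \bv{W}_{ii}\ne 1\}| \le k \le 3k/u$ trivially once $u \le 3$; and for $u > 3$ we'd have $3k/u < k$ but then even fewer columns can have score exceeding $u$ since scores are at most... actually scores of $\bv{AW}$ can exceed $1$? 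No — ridge leverage scores $\bar\tau_i(\bv{B}) = \bv{b}_i^T(\bv{BB}^T + \lambda\bv{I})^+\bv{b}_i \le \bv{b}_i^T(\bv{BB}^T)^+\bv{b}_i \le 1$, so for $u \ge 1$ the condition \eqref{levUpper} is vacuous and $\bv{W}=\bv{I}$ works, making \eqref{levSum} trivially $0 \le 3k/u$. So WLOG $u < 1$, $\lambda > 0$ can be assumed away from the degenerate rank-$\le k$ case, and the IVT argument produces, for each touched column, a final score of exactly $u$. Then $2k \ge \sum_i \bar\tau_i(\bv{B}) \ge u \cdot |\{i : \bv{W}_{ii}\ne 1\}|$, giving $|\{i:\bv{W}_{ii}\ne 1\}| \le 2k/u \le 3k/u$.

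The main obstacle I anticipate is the termination/continuity argument: rigorously handling that $\lambda$ changes as we reweight, ensuring $\bar\tau_i(\bv{AW})$ is continuous and monotone in the single coordinate $\bv{W}_{ii}$ (it is not obviously monotone because shrinking $\bv{a}_i$ also shrinks $\lambda$, which \emph{raises} other scores and could feed back), and confirming that the greedy "fix the worst column" process does not cycle. The resolution is exactly Theorem~\ref{monotonicity2}: down-weighting a column gives a matrix $\bv{M}$ with $\bv{MM}^T \preceq (\bv{AW})(\bv{AW})^T$, so $\bar\tau_j$ cannot increase for \emph{any} $j$ — in particular the global maximum score is nonincreasing, and a standard compactness/limit argument then yields a valid terminal $\bv{W}$. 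I would present this as a continuous flow (decrease all over-threshold weights simultaneously at a rate keeping their scores pinned at the threshold) rather than a discrete loop, which sidesteps the cycling worry entirely and makes the monotonicity bookkeeping transparent.
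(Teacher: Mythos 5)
Your overall plan is the same as the paper's (which itself mirrors Theorem~2 of \cite{cohen2015uniform}): iteratively down-weight over-threshold columns, use continuity and the intermediate value theorem to pin each touched column's score at exactly $u$, and then use $\sum_i \bar\tau_i(\bv{AW}) \le 2k$ to bound the number of touched columns, with the rank-$\le k$ / zeroed-column degeneracy handled separately. However, there is a genuine error at the heart of your termination argument: you invoke Theorem~\ref{monotonicity2} in the wrong direction. If $\bv{M} = \bv{AW}'$ is obtained from $\bv{AW}$ by shrinking one column, then indeed $\bv{M}\bv{M}^T \preceq (\bv{AW})(\bv{AW})^T$, but Theorem~\ref{monotonicity2} then says $\bar\tau_j(\bv{AW}) \le \bar\tau_j^{\bv{M}}(\bv{AW})$, i.e.\ the score of every \emph{other} column, computed with respect to the smaller matrix, is \emph{at least} its old score. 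Concretely, shrinking column $i$ removes mass from $\bv{AA}^T$ and also decreases the regularizer $\|(\bv{AW})_{\setminus k}\|_F^2/k$, both of which can only raise $\bv{a}_j^T(\cdot)^+\bv{a}_j$ for $j \ne i$. So your claims that ``the global maximum score is nonincreasing'' and that ``no previously-fixed column's score goes back above $u$'' are false, and the greedy process does not make the monotone progress you describe; previously fixed columns can and do pop back above the threshold. You even identify this feedback as the main obstacle, but the ``resolution'' you offer is precisely the statement that fails.

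The correct convergence argument — the one the paper defers to in Theorem~2 of \cite{cohen2015uniform} — does not rely on scores being monotone at all. It relies on the \emph{weights} being monotone: each step only decreases entries of $\bv{W}$, so the weight vector is non-increasing and bounded below, hence converges; continuity of $\bar\tau_i(\bv{AW})$ in $\bv{W}$ (which requires continuity of both the leverage-score part and of $\|(\bv{AW})_{\setminus k}\|_F^2$) then shows the limit point satisfies \eqref{levUpper}. The paper also needs Lemma~5 of \cite{cohen2015uniform} plus the observation $\bar\tau_i(\bv{AW}) \le \tau_i(\bv{AW})$ to guarantee each individual down-weighting step can actually succeed. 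Your counting step is essentially right once convergence is established (at most $2k/u$ columns end at score exactly $u$, plus at most $k$ columns forced to weight $0$, which can only happen when the score is pinned at $1$, i.e.\ in the rank-$\le k$ regime where ridge scores coincide with ordinary leverage scores; and $2k/u + k \le 3k/u$ for $u \le 1$, the case $u \ge 1$ being vacuous). But as written, the termination/convergence half of your argument does not go through.
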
  
\begin{proof}
This result follows from Theorem 2 of \cite{cohen2015uniform}, to which we refer the reader for details.
To show the existence of a reweighting $\bv{W}$ satisfying \eqref{levUpper} and \eqref{levSum}, we will argue that a simple iterative process (which we never actually need to implement) converges on the necessary reweighting.

Specifically, if a column has too high of a leverage score, we simply decrease its weight until $\bar \tau_i(\bv{AW}) \le u$. We want to argue that, given $\bv{AW}_0$ with $\bar \tau_i(\bv{AW}_0) > u$, we can decrease the weight on $\bv{a}_i$ to produce $\bv{W}_1$ with $\bar \tau_i(\bv{AW}_1) \le u$. By Lemma 5 of \cite{cohen2015uniform} we can always decrease the weight on $\bv{a}_i$ to ensure $\tau_i(\bv{AW}_1) \le u$, where $\tau_i(\cdot)$ is the traditional leverage score. And since $\left ( \bv{A}\bv{W}_1^2\bv{A}^T + \frac{\norm{(\bv{AW}_1)_{\setminus k}}_F^2}{k} \bv{I} \right )^+ \preceq \left (\bv{A}\bv{W}_1^2\bv{A}^T \right )^+ $, $\bar \tau_i(\bv{AW}_1) \le \tau_i(\bv{AW}_1)$, so an equivalent or smaller weight decrease suffices to decrease $\bar \tau_i(\bv{AW}_1)$ below $u$.

Furthermore, we can see that $\bar \tau_i(\bv{AW})$ is continuous with respect to $\bv{W}$. This is due to the fact that both the traditional leverage scores of $\bv{AW}$ (shown in Lemma 6 of \cite{cohen2015uniform}) and $\norm{(\bv{AW})_{\setminus k}}_F^2$ are continuous in $\bv{W}$.  From Theorem 2 of \cite{cohen2015uniform}, continuity implies that iteratively reweighting individual columns converges, and thus there is always exists a reweighting satisfying \eqref{levUpper}. 

It remains to show that this reweighting satisfies \eqref{levSum}. By continuity, we can always decrease $\bar \tau_i(\bv{AW}_0)$ to exactly $u$ unless $\bar \tau_i(\bv{AW}) = 1$, in which case the only option is to set the weight on the column to $0$ and hence set $\bar \tau_i(\bv{AW}) = 0$. However, if $\norm{\bv{A}_{\setminus k}}_F^2 > 0$, then \emph{every} ridge leverage score is strictly less than $1$. If $\norm{\bv{A}_{\setminus k}}_F^2 =0$, then $\bv{A}$ has rank $k$, the ridge leverage scores are the same as the true leverage scores, and the number of columns with leverage score $1$ is at most $k$. Therefore, by Theorem 2 of \cite{cohen2015uniform}, monotonicity, and the fact that $\sum_i \bar \tau_i(\bv{AW}) \le 2k$ for any $\bv{W}$, we have the lemma.
\end{proof}

\subsection{Uniform Sampling for Ridge Leverage Score Approximation}

Using Lemmas \ref{leverage_approx} and \ref{weighting_existance} we can prove the key step of our recursive sampling method: if we uniformly sample columns from $\bv{A}$ and use them to estimate ridge leverage scores, these scores can be used to resample a set of columns that give constant factor ridge leverage scores approximations.

\begin{theorem}[Ridge Leverage Score Approximation via Uniform Sampling]\label{uniform_stronger}
Given $\bv{A} \in \mathbb{R}^{n \times d}$, construct $\bv{C}_u$ by independently sampling each column of $\bv{A}$ with probability $\frac{1}{2}$. Let 
\begin{align*}
\tilde \tau_i = \min \left \{1, \bar \tau_i^{\bv{C}_u}(\bv A) \right \}.
\end{align*}
If we form $\bv{C}$ by sampling each column of $\bv{A}$ independently with probability $p_i = \min \left \{1, \tilde \tau_i c_1\log(k/\delta)\right \}$ and reweighting by $1/\sqrt{p_i}$ if selected, then for large enough constant $c_1$, with probability $1-\delta$, $\bv{C}$ will have just $O(k\log(k/\delta))$ columns and will satisfy the conditions of Lemma \ref{leverage_approx} for some constant error. Accordingly, we have:
\begin{align*}
\frac{1}{2}  \bar \tau_i(\bv{A}) \le \bar \tau_i^{\bv{C}}(\bv{A}) \le 2 \bar \tau_i(\bv{A}).
\end{align*}
\end{theorem}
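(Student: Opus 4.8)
\medskip
\noindent\emph{Proof plan.} The plan is to verify the two hypotheses of Lemma~\ref{leverage_approx} for the resampled matrix $\bv{C}$, which then yields the stated two-sided bound verbatim, and to separately control the number of columns of $\bv{C}$. The only genuinely new ingredient beyond \cite{cohen2015uniform} is an expectation bound on $\sum_i\tilde\tau_i$, and that is the step I expect to be the main obstacle.

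First I would check that the $\tilde\tau_i$ are legitimate ridge-leverage-score overestimates. Since $\bv{C}_u$ is a column subset of $\bv{A}$ we have $\bv{C}_u\bv{C}_u^T\preceq\bv{A}\bv{A}^T$, so adding the missing columns back one at a time and invoking Lemma~\ref{monotonicity} (equivalently Theorem~\ref{monotonicity2} with $\bv{M}=\bv{C}_u$) gives $\bar\tau_i(\bv{A})\le\bar\tau_i^{\bv{C}_u}(\bv{A})$ for every $i$. Since $\bar\tau_i(\bv{A})\le 1$ always (because $\bv{a}_i\bv{a}_i^T\preceq\bv{A}\bv{A}^T+\lambda\bv{I}$ for $\lambda=\|\bv{A}-\bv{A}_k\|_F^2/k$), this yields $\tilde\tau_i=\min\{1,\bar\tau_i^{\bv{C}_u}(\bv{A})\}\ge\bar\tau_i(\bv{A})$, with $\tilde\tau_i\le 1$. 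Next, because $p_i=\min\{1,\tilde\tau_i c_1\log(k/\delta)\}$ is exactly an oversampling-by-$c_1\log(k/\delta)$ version of the distribution $p\propto\tilde\tau$ used in Theorems~\ref{matrix_chernoff} and~\ref{pcp_intro}, taking $c_1$ a large enough constant and appealing to the without-replacement versions of those theorems (Appendix~\ref{app:without_replacement}) with a fixed target error $\epsilon=\tfrac18$ shows that, with probability at least $1-\delta/2$, $\bv{C}$ satisfies both the additive--multiplicative spectral bound \eqref{freq_dirs_two_sided} and the rank-$k$ projection-cost-preservation inequality of Definition~\ref{def:pcp}. Lemma~\ref{leverage_approx} then gives $(1-4\epsilon)\bar\tau_i(\bv{A})\le\bar\tau_i^{\bv{C}}(\bv{A})\le(1+4\epsilon)\bar\tau_i(\bv{A})$, i.e.\ $\tfrac12\bar\tau_i(\bv{A})\le\bar\tau_i^{\bv{C}}(\bv{A})\le 2\bar\tau_i(\bv{A})$. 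Note this part uses no bound on $|\bv{C}|$.

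It remains to bound $|\bv{C}|$. I would show $\E_{\bv{C}_u}\big[\sum_i\tilde\tau_i\big]=O(k)$; then $\E[|\bv{C}|]\le c_1\log(k/\delta)\,\E[\sum_i\tilde\tau_i]=O(k\log(k/\delta))$, and since the column inclusions defining $\bv{C}$ are independent given $\bv{C}_u$, a Chernoff bound together with a high-probability (Markov-type) bound on $\sum_i\tilde\tau_i$ upgrades this to $|\bv{C}|=O(k\log(k/\delta))$ with probability at least $1-\delta/2$; a union bound with the previous paragraph completes the proof. To bound $\E\big[\sum_i\tilde\tau_i\big]$ I would split the sum over columns inside and outside $\bv{C}_u$. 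For $i\in\bv{C}_u$ we have $\tilde\tau_i\le\bar\tau_i(\bv{C}_u)$ and $\sum_{i\in\bv{C}_u}\bar\tau_i(\bv{C}_u)\le 2k$ by Lemma~\ref{sumScores}, so these contribute at most $2k$ deterministically. For $i\notin\bv{C}_u$, apply Lemma~\ref{weighting_existance} with a constant score bound $u$: the at most $3k/u=O(k)$ ``heavy'' columns ($\bv{W}_{ii}\neq 1$) contribute at most $1$ each, and for each ``light'' column ($\bv{W}_{ii}=1$) condition on $\bv{C}_u^{-i}$, the uniform half-sample of the remaining columns; the event $\{i\notin\bv{C}_u\}$ has probability $\tfrac12$ and $\bv{C}_u=\bv{C}_u^{-i}$ on it, so it suffices to prove $\E_{\bv{C}_u^{-i}}\big[\min\{1,\bar\tau_i^{\bv{C}_u^{-i}}(\bv{A})\}\big]=O\big(\bar\tau_i(\bv{AW})\big)$ and sum over light $i$ via Lemma~\ref{sumScores}.

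The last claim is the crux. A Sherman--Morrison identity rewrites $\min\{1,\bar\tau_i^{\bv{C}_u^{-i}}(\bv{A})\}$ as at most twice the ridge score of $\bv{a}_i$ inside $[\bv{C}_u^{-i},\bv{a}_i]$, which is a column subset of $\bv{A}$ to which monotonicity applies. The difficulty --- and the reason this does not reduce to the ordinary-leverage-score recursion of \cite{cohen2015uniform} --- is that $[\bv{C}_u^{-i},\bv{a}_i]$ carries a strictly larger ridge parameter than $\bv{C}_u^{-i}$, so the identity must be combined with the monotonicity of the Frobenius tail ($\bv{M}\bv{M}^T\preceq\bv{N}\bv{N}^T\Rightarrow\|\bv{M}-\bv{M}_k\|_F^2\le\|\bv{N}-\bv{N}_k\|_F^2$, already used to prove Theorem~\ref{monotonicity2}) and with the reweighting $\bv{W}$ to absorb the slack; intuitively $\bv{W}$ removes precisely the $O(k)$ columns for which $\bv{a}_i$ is nearly orthogonal to a random half of $\bv{A}$ while having disproportionately large norm, which is exactly when the subsample's ridge parameter collapses relative to $\lambda$. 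The degenerate case $\lambda_{\bv{C}_u}=0$, which forces $\mathrm{rank}(\bv{C}_u)\le k$, I would handle separately: there the ridge scores coincide with ordinary generalized leverage scores, the \cite{cohen2015uniform}-style folding gives $\E[\sum_{i\notin\bv{C}_u}\min\{1,\bar\tau_i^{\bv{C}_u}(\bv{A})\}]\le 2\,\E[\mathrm{rank}(\bv{C}_u)]\le 2k$, and capping the infinite scores at $1$ costs nothing extra. Everything outside this regularization-accounting step is bookkeeping already present in \cite{cohen2015uniform}.
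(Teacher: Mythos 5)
Your first two steps are exactly the paper's: monotonicity (Theorem \ref{monotonicity2}) makes the $\tilde\tau_i$ valid overestimates, the independent-sampling versions of Theorems \ref{matrix_chernoff} and \ref{pcp_intro} with constant target error make $\bv{C}$ satisfy the hypotheses of Lemma \ref{leverage_approx}, and that lemma gives the factor-$2$ score approximation. The gap is in the column-count bound, which you correctly flag as the crux but do not prove, and for which the paper takes a different and much cleaner route. The paper instantiates Lemma \ref{weighting_existance} with $u=\Theta(1/\log(k/\delta))$ rather than a constant; then uniform sampling at rate $\tfrac12$ is \emph{itself} a valid ridge-leverage-score oversampling of $\bv{AW}$, so with probability $1-O(\delta)$ the (rescaled) matrix $\bv{C}_u\bv{W}$ satisfies Lemma \ref{leverage_approx} for $\bv{AW}$, whence $\bar\tau_i^{\bv{C}_u\bv{W}}(\bv{AW})=O(\bar\tau_i(\bv{AW}))$ for all $i$ \emph{simultaneously} and the light columns contribute $O(k)$ to $\sum_i\bar\tau_i^{\bv{C}_u}(\bv{A})$ by Lemma \ref{sumScores}; the $O(k\log(k/\delta))$ heavy columns contribute $p_i\le 1$ each. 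No per-column expectation bound and no Sherman--Morrison folding is needed.

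Your proposed folding step has two concrete problems. First, the key inequality points the wrong way. Writing $x=\bar\tau_i^{\bv{C}_u^{-i}}(\bv{A})$ and $\lambda=\|\bv{C}_u^{-i}-(\bv{C}_u^{-i})_k\|_F^2/k$, Sherman--Morrison gives $\tfrac{x}{1+x}=\bv{a}_i^T\bigl(\bv{C}_u^{-i}(\bv{C}_u^{-i})^T+\bv{a}_i\bv{a}_i^T+\lambda\bv{I}\bigr)^+\bv{a}_i$, whereas $\bar\tau_i([\bv{C}_u^{-i},\bv{a}_i])$ uses the strictly larger regularizer $\lambda'\le\lambda+\|\bv{a}_i\|_2^2/k$ of the augmented matrix; hence $\bar\tau_i([\bv{C}_u^{-i},\bv{a}_i])\le\tfrac{x}{1+x}$, and you cannot bound $\min\{1,x\}$ by $O(\bar\tau_i([\bv{C}_u^{-i},\bv{a}_i]))$ (the quantity the exchange argument controls) unless $\|\bv{a}_i\|_2^2=O(k\lambda)$. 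That is a condition on the \emph{random realization} of $\bv{C}_u^{-i}$, and your claim that the deterministic reweighting $\bv{W}$ of Lemma \ref{weighting_existance} (defined by $\bar\tau_i(\bv{AW})\le u$) removes precisely these columns is unsubstantiated --- nothing in Lemma \ref{weighting_existance} relates $\bv{W}_{ii}$ to the event $\|\bv{a}_i\|_2^2\gg k\lambda_{\bv{C}_u^{-i}}$. Second, even granting $\E\bigl[\sum_i\tilde\tau_i\bigr]=O(k)$, converting it to a $1-\delta$ guarantee via Markov yields only $\sum_i\tilde\tau_i=O(k/\delta)$ and hence $|\bv{C}|=O(k\log(k/\delta)/\delta)$, which falls short of the claimed $O(k\log(k/\delta))$. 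The paper's conditional high-probability argument avoids both issues.
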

\begin{proof}
Clearly $\bv{C}_u \bv{C}_u^T \preceq \bv{A} \bv{A}^T$, so by the monotonicity shown in Theorem \ref{monotonicity2} we have $\bar \tau_i^{\bv{C}_u}(\bv A) \ge \bar \tau_i(\bv A)$. Since $\bar \tau_i(\bv A)$ is always $\le 1$, it follows that $\tilde \tau_i = \min \left \{1,\bar \tau_i^{\bv{C}_u}(\bv A) \right \} \ge \bar \tau_i(\bv A)$. 
Then we can just use the $\tilde \tau_i$'s obtained from $\bv{C}_u$ in independent sampling versions of Theorems \ref{matrix_chernoff} and \ref{pcp_intro}, which can be proven from Lemmas \ref{matrix_chernoff2} and \ref{trace:bound2} in Appendix \ref{app:without_replacement}. Accordingly, with probability $1-\delta/3$, $\bv{C}$ gives a constant factor additive-multiplicative spectral approximation and projection-cost preserving sample of $\bv{A}$. Hence by Lemma \ref{leverage_approx}, $\bar \tau_i^{\bv{C}}(\bv{A})$ is a constant factor approximation to $\bar \tau_i(\bv{A})$.

To prove the theorem, we still have to show that $\bv{C}$ does not have too many columns. Its expected number of columns is: 
\begin{align*}
\sum_i p_i = \sum_i \min \left \{1, \tilde \tau_i c_1\log(k/\delta)\right \}.
\end{align*} 
By Lemma \ref{weighting_existance} instantiated with $u = \frac{1}{2c_2\log(k/\delta)}$, we know that there is some reweighting matrix $\bv{W}$ with only $3k\cdot 2c_2\log(k/\delta)$ entries not equal to $1$ such that $\bar \tau_i(\bv{AW}) \le  \frac{1}{2c_2\log(k/\delta)}$ for all $i$. We have:
\begin{align}
\sum_i p_i &= \sum_{i:\bv{W}_{ii} \neq 1} p_i + \sum_{i: \bv{W}_{ii} = 1} p_i \nonumber\\
&\le 6kc_2\log(k/\delta)  + \sum_{i: \bv{W}_{ii} = 1} c\log(k/\delta) \cdot \bar \tau_i^{\bv{C}_u}(\bv{A})\nonumber\\
&=  6kc_2\log(k/\delta) + c_1\log(k/\delta)\cdot \sum_{i: \bv{W}_{ii} = 1} \bar \tau_i^{\bv{C}_u}(\bv{AW})\nonumber\\
&\le 6kc_2\log(k/\delta)  + c_1\log(k/\delta)\cdot \sum_{i: \bv{W}_{ii} = 1} \ \bar \tau_i^{\bv{C}_u\bv{W}}(\bv{AW})\nonumber\\
&\le 6kc_2\log(k/\delta)  + c_1\log(k/\delta)\cdot \sum_{i} \ \bar \tau_i^{\bv{C}_u\bv{W}}(\bv{AW}). \label{eq:final_sum_bound}
\end{align}
Now, since every ridge leverage score of $\bv{AW}$ is bounded by $\frac{1}{2c_2\log(k/\delta)}$, if $c_2$ is set large enough, the uniformly sampled $\bv{C}_u\bv{W}$ is a proper ridge leverage score oversampling of $\bv{AW}$, except that its columns were not reweighted by a factor of $2$ (they were each sampled with probability $1/2$). 

Accordingly, with probability $1-\delta/3$, $2\bv{C}_u\bv{W}$ satisfies the approximation conditions of Lemma \ref{leverage_approx}
for $\bv{AW}$ with $\epsilon = 1/2$. Thus, for all $i$, $\frac{1}{2}\bar \tau_i^{\bv{C}_u\bv{W}}(\bv{AW}) = \bar \tau_i^{2\bv{C}_u\bv{W}}(\bv{AW})\le 3 \bar \tau_i(\bv{AW}).$ By Lemma \ref{sumScores}, $\sum_i \bar\tau_i(\bv{AW}) \le 2k$ so overall $\sum_{i} \ \bar \tau_i^{\bv{C}_u\bv{W}}(\bv{AW}) \leq 12k$. Plugging back in to \eqref{eq:final_sum_bound}, we conclude that $\bv{C}$ has
 $O(k\log(k/\delta))$ columns in expectation, and actually with probability $1-\delta/3$ by a Chernoff bound.
Union bounding over our failure probabilities gives the theorem.
\end{proof}

\subsection{Basic Recursive Algorithm}

Theorem \ref{uniform_stronger} immediately proves correct Algorithm \ref{halving} for ridge leverage score approximation:

\begin{algorithm}[H]
\caption{\algoname{Repeated Halving}}
{\bf input}: $\bv{A} \in \mathbb{R}^{n \times d}$\\
{\bf output}: A reweighted column sample $\bv{C} \in \mathbb{R}^{n \times O(k\log (k/\delta))}$ satisfying the guarantees of Theorems \ref{matrix_chernoff} and \ref{pcp_intro} with constant error.
\begin{algorithmic}[1]
\State{Uniformly sample $\frac{d}{2}$ columns of $\bv{A}$ to form $\bv{C}_u$}
\State{If $\bv{C}_u$ has $>O(k \log k )$ columns, \textbf{recursively} apply \textsc{Repeated Halving} to compute a constant factor approximation $\bv {\tilde C}_u$ for $\bv{ C}_u$ with $O(k \log k )$ columns.}
\State{Compute generalized ridge leverage scores of $\bv{A}$
with respect to $\bv{\tilde C}_u$
}
\State{Use these estimates to sample columns of $\bv{A}$ to form $\bv{C}$}\\
\Return{$\bv{C}$}
\end{algorithmic}
\label{halving}
\end{algorithm}

Note that, by Lemma \ref{leverage_approx}, generalized ridge leverage scores computed with respect to $ \bv{\tilde C}_u$ are constant factor approximations to generalized ridge leverage scores computed with respect to $ \bv{C}_u$. Accordingly, by Theorem \ref{uniform_stronger}, we conclude that $\bv{C}$ is a valid ridge leverage score sampling of $\bv{A}$.

Before giving our full input sparsity time result, we warm up with a simpler theorem that obtains a slightly suboptimal runtime.

\begin{lemma}
\label{lem:algo1}
A simple implementation of Algorithm \ref{halving} that succeeds with probability $1-\delta$ runs in $O\left ( \nnz(\bv{A})\log(d/\delta)\right) + \tilde O(nk^2)$ time.
\end{lemma}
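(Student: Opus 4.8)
The plan is to analyze the recursion tree of Algorithm \ref{halving}, bounding the cost at each level and the depth of the recursion, then union-bound correctness over all recursive calls. The key structural fact is that each level of recursion halves the number of columns, so after $\log d$ levels we reach a base case with $O(k\log k)$ columns, and at every level above the base case the ``active'' matrix $\bv{C}_u$ has $d/2^j$ columns at level $j$.

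\textbf{Correctness.} First I would argue correctness by induction up the recursion tree. The base case is trivial: when $\bv{C}_u$ has $O(k\log k)$ columns we return it directly, and it is a constant-error sketch of itself. For the inductive step, suppose the recursive call on $\bv{C}_u$ returns $\bv{\tilde C}_u$, a constant-factor additive-multiplicative spectral approximation and projection-cost preserving sample of $\bv{C}_u$. By Lemma \ref{leverage_approx}, generalized ridge leverage scores of $\bv{A}$ computed with respect to $\bv{\tilde C}_u$ are within a constant factor of those computed with respect to $\bv{C}_u$. Since $\bv{C}_u$ is a uniform half-sample of $\bv{A}$, Theorem \ref{uniform_stronger} guarantees (with probability $1-\delta'$ for an appropriate per-level failure probability $\delta'$) that resampling $\bv{A}$ by these scores produces a valid $O(k\log(k/\delta'))$-column ridge leverage score sample, hence a constant-error sketch of $\bv{A}$. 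Taking $\delta' = \delta/O(\log d)$ and union-bounding over the $O(\log d)$ levels gives overall success probability $1-\delta$; this is where the $\log(d/\delta)$ factor will appear.

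\textbf{Runtime.} The dominant cost is Step 3, computing generalized ridge leverage scores of $\bv{A}$ with respect to $\bv{\tilde C}_u$, which has $O(k\log(k/\delta))$ columns. Forming $\bv{\tilde C}_u \bv{\tilde C}_u^T + \lambda \bv{I}$ and its (pseudo)inverse costs $\tilde O(n k^2 + n^2)$-type work — actually, since we only need the scores $\bv{a}_i^T (\bv{\tilde C}_u\bv{\tilde C}_u^T + \lambda\bv{I})^+ \bv{a}_i$, the standard trick is to compute an SVD of $\bv{\tilde C}_u$ in $\tilde O(nk^2)$ time, giving a factored form $(\bv{\tilde C}_u\bv{\tilde C}_u^T + \lambda\bv{I})^+ = \bv{G}\bv{G}^T$ with $\bv{G} \in \R^{n \times O(k\log(k/\delta))}$ (handling the out-of-span case via the $\infty$ convention of the generalized score definition), and then each score is $\|\bv{G}^T \bv{a}_i\|_2^2$. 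Computing $\bv{G}^T \bv{A}$ costs $O(\nnz(\bv{A}) \cdot k\log(k/\delta))$ — but this is too slow; instead one applies a Johnson–Lindenstrauss projection to $\bv{G}$, reducing to $\bv{G}' \in \R^{n \times O(\log d)}$ columns, so that $\bv{G}'^T \bv{A}$ costs $O(\nnz(\bv{A})\log d)$ and the estimates are preserved up to constants with high probability. Summing $\nnz$ over the recursion tree: $\bv{C}_u$ at level $j$ has expected $\nnz \leq \nnz(\bv{A})/2^j$ (uniform sampling), so the total is $O(\nnz(\bv{A})\log(d/\delta))$. The SVD/post-processing work is $\tilde O(nk^2)$ at the top level and geometrically smaller below (the active matrices shrink), so it sums to $\tilde O(nk^2)$. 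This yields the claimed $O(\nnz(\bv{A})\log(d/\delta)) + \tilde O(nk^2)$.

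\textbf{The hard part} will be the careful bookkeeping of the JL-sketching step inside the ridge-score computation — ensuring that we can both (a) compute a square-root factor $\bv{G}$ of $(\bv{\tilde C}_u\bv{\tilde C}_u^T + \lambda\bv{I})^+$ from an SVD of the small matrix $\bv{\tilde C}_u$ in $\tilde O(nk^2)$ time, and (b) project it down to $O(\log d)$ columns so multiplying by $\bv{A}$ costs only $O(\nnz(\bv{A})\log d)$ per level — while keeping the constant-factor distortion of the estimates (which must remain overestimates, so one rescales the JL estimates up by the appropriate constant) and correctly tracking that $\lambda = \|\bv{\tilde C}_u - (\bv{\tilde C}_u)_k\|_F^2/k$ is itself only approximately known but, by \eqref{eq:tail_mult}-type reasoning, within a constant factor of the right quantity. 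The recursion-tree $\nnz$ accounting also requires noting that the per-level failure probability and the expected-$\nnz$ halving both concentrate, via Chernoff, so the stated runtime holds with probability $1-\delta$ rather than merely in expectation.
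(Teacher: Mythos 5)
Your proposal takes essentially the same route as the paper: correctness follows from Theorem \ref{uniform_stronger} and Lemma \ref{leverage_approx} applied up the recursion tree, the $\nnz$ cost halves per level so the top level dominates, and the per-level score computation is done via an SVD of the $O(k\log(k/\delta))$-column sketch in $\tilde O(nk^2)$ time followed by a Johnson--Lindenstrauss projection to $O(\log(d/\delta))$ dimensions, giving $O(\nnz(\bv{A})\log(d/\delta))$ for the matrix--vector work. The outline and the final accounting are correct.

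Two details are imprecise. First, when $\lambda = \norm{\bv{\tilde C}_u - (\bv{\tilde C}_u)_k}_F^2/k > 0$ the matrix $\bv{\tilde C}_u\bv{\tilde C}_u^T + \lambda\bv{I}$ is full rank, so its inverse cannot be factored as $\bv{G}\bv{G}^T$ with $\bv{G} \in \R^{n\times O(k\log(k/\delta))}$; you need the paper's decomposition $\bv{R}\bv{\Sigma}^{-2}\bv{R}^T + \frac{1}{\lambda}(\bv{I}-\bv{R}\bv{R}^T)$, whose square root is $n\times n$ but admits fast multiplication in factored form. Dropping the $\frac{1}{\lambda}(\bv{I}-\bv{R}\bv{R}^T)$ term would underestimate the scores of columns with large components orthogonal to $\colspan(\bv{\tilde C}_u)$, which are exactly the columns that must be sampled with high probability; the $\infty$ convention does not apply here since for $\lambda>0$ every column lies in the relevant span. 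You flag this as the part needing care, but the dimension you assert for $\bv{G}$ is not achievable. Second, the SVD cost is not geometrically decreasing down the tree --- every $\bv{\tilde C}_u$ has $O(k\log k)$ columns, so each level costs $\tilde O(nk^2)$ --- but the sum over $O(\log d)$ levels is still $\tilde O(nk^2)$ since $\tilde O$ absorbs logarithmic factors in $d$.
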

\noindent For clarity of exposition, we use $\tilde O(\cdot)$ to hide log factors in $k$, $d$, and $1/\delta$ on the lower order term.
\begin{proof}
The algorithm has $\log(d/k)$ levels of recursion and, since we sample our matrix uniformly, $\nnz(\bv{A})$ is cut approximately in half at each level, with high probability. It thus suffices to show that the work
done at the top level is $O\left ( \nnz(\bv{A})\log (d/\delta)\right ) + \tilde O(nk^2)$. 

To compute the generalized ridge leverage scores of $\bv{A}$ with respect to $\bv{\tilde C}_u$ we must (approximately) compute, for each $\bv{a}_i$,
\begin{align}
\label{eq:what_we_need_to_eval}
\bv{a}_i^T \left (\bv{\tilde C}_u \bv{\tilde C}_u^T + \frac{\norm{\bv{\tilde C}_u- (\bv{\tilde C}_u)_k}_F^2}{k} \bv{I} \right )^+ \bv{a}_i.
\end{align}
We are going to ignore that $\left (\bv{\tilde C}_u \bv{\tilde C}_u^T + \frac{\norm{\bv{\tilde C}_u- (\bv{\tilde C}_u)_k}_F^2}{k} \bv{I}\right )$ could be sparse and well conditioned (and thus ideal for iterative solvers) and use direct methods for simplicity.

Let $\lambda$ denote $\frac{\norm{\bv{\tilde C}_u- (\bv{\tilde C}_u)_k}_F^2}{k}$ and let $\bv{R}\in \R^{n\times \tilde{O}(k)}$ be an orthonormal basis containing the left singular vectors of  $\bv{\tilde C}_u$. We can rewrite:
\begin{align*}
 \left (\bv{\tilde C}_u \bv{\tilde C}_u^T + \lambda \bv{I} \right) = \bv{\tilde C}_u \bv{\tilde C}_u^T + \lambda\bv{R}\bv{R}^T  + \lambda\left(\bv{I}-\bv{R}\bv{R}^T\right),
\end{align*}
and accordingly, using the fact that $\bv{R}\bv{R}^T$ and $(\bv{I}-\bv{R}\bv{R}^T)$ are orthogonal,
\begin{align*}
\left (\bv{\tilde C}_u \bv{\tilde C}_u^T + \lambda \bv{I} \right)^+ = \left(\bv{\tilde C}_u \bv{\tilde C}_u^T + \lambda\bv{R}\bv{R}^T\right)^+ + \frac{1}{\lambda}\left(\bv{I}-\bv{R}\bv{R}^T\right).
\end{align*}
Now, using an SVD of $\bv{\tilde C}_u$, which can be computed in $\tilde O(nk^2)$ time, we compute $\lambda$ and then write  $\left(\bv{\tilde C}_u \bv{\tilde C}_u^T + \lambda\bv{R}\bv{R}^T\right)^+$ as $\bv{R}\bv{\Sigma}^{-2}\bv{R}^T$ for some diagonal matrix $\bv{\Sigma} \in \R^{\tilde{O}(k)\times\tilde{O}(k)}$. Accordingly, to evaluate \eqref{eq:what_we_need_to_eval}, we need just need to compute:
\begin{align*}
\bv{a}_i^T\left(\bv{R}\bv{\Sigma}^{-2}\bv{R}^T + \frac{1}{\lambda}\left(\bv{I}-\bv{R}\bv{R}^T\right)\right)\bv{a}_i = \|\left(\bv{R}^T\bv{\Sigma}^{-1}\bv{R}^T + \frac{1}{\sqrt{\lambda}}\left(\bv{I}-\bv{R}\bv{R}^T\right)\right)\bv{a}_i\|_2^2.
\end{align*}
Since $\bv{R}$ has $\tilde{O}(k)$ columns, naively evaluating this norm for all of $\bv{A}$'s columns would require a total of $\tilde{O}(\nnz(\bv{A})k)$ time.
However, we can accelerate the computation via a Johnson-Lindenstrauss embedding technique that has become standard for computing regular leverage scores \cite{Spielman:2008}. 

 Specifically, denoting $\left(\bv{R}^T\bv{\Sigma}^{-1}\bv{R}^T + \frac{1}{\sqrt{\lambda}}\left(\bv{I}-\bv{R}\bv{R}^T\right)\right)$ as $\bv{M}$, we can embed $\bv{M}$'s columns into $O(\log (d/\delta))\times n$ dimensions by multiplying on the left by a matrix $\bv{\Pi} \in \mathbb{R}^{O(\log (d/\delta) \times n}$ with scaled random Gaussian or random sign entries.
Even though $\bv{M}$ is $n\times n$, we can perform the multiplication in $\tilde{O}(nk\log(d/\delta))$ by working with our factored form of the matrix.

By standard Johnson-Lindenstrauss results, 
$\norm{ \bv{\Pi} \bv{M} \bv{a}_i }_2^2$ will be within a constant factor of $\norm{ \bv{M} \bv{a}_i }_2^2$ for all $i$ with probability $1-\delta$. Furthermore, we can evaluate $\norm{ \bv{\Pi} \bv{M} \bv{a}_i }_2^2$ for all $\bv{a}_i$ in $O\left (\nnz(\bv{A})\log (d/\delta)\right )$ total time. Our final cost for approximating all ridge leverage scores is thus $O\left (\nnz(\bv{A})\log (d/\delta)\right ) + \tilde O\left (nk^2\right)$ time, which gives the lemma.
\end{proof}

\subsection{True Input-Sparsity Time}
Sharpening Lemma \ref{lem:algo1} to eliminate log factors on the $\nnz(\bv{A})$ runtime term requires standard optimizations for approximating leverage scores with respect to a subsample \cite{pengV2,cohen2015uniform}.

In particular, we can actually apply a Johnson-Lindenstrauss embedding matrix to $\bv{M}$ with just $\theta^{-1}$ rows for some small constant $\theta$. 
Doing so will approximate each ridge leverage score to within a factor of $d^{\theta}$ with high probability (see Lemma 4.5 of \cite{pengV2} for example). 

This level of approximation is sufficient to resample $O\left (kd^{\theta}\log(k/\delta)\right )$ columns from $\bv{A}$ to form an approximation $\bv{C}'$ that satisfies the guarantees of Theorems \ref{matrix_chernoff} and \ref{pcp_intro}. To form $\bv{C}$, we further sample $\bv{C}'$ down to $O(k\log (k/\delta))$ columns using its ridge leverage scores, which takes $\tilde O(nk^2d^{2\theta})$ time. Finally, under the reasonable assumption that $\epsilon$ and $\delta$ are $\poly(n)$, we can also assume $d = \poly(n)$. Otherwise, $\nnz(\bv A) \ge d$ dominates the $\tilde O(nk^2d^{2\theta})$ term. This yields the following: 
\begin{lemma}\label{inputSparsitySampling}
\label{lem:algo2}
An optimized implementation of Algorithm \ref{halving}, succeeding with probability $1-\delta$, runs in time $O\left ( \theta^{-1} \nnz(\bv{A}) \right ) + \tilde O(n^{1+\theta}k^{2})$ time, for any $\theta \in (0,1]$.
\end{lemma}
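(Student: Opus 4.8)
The plan is to bound the cost of the top-level work in one invocation of Algorithm~\ref{halving} by $O(\theta^{-1}\nnz(\bv{A})) + \tilde O(nk^2 d^{O(\theta)})$, and then sum geometrically over the $O(\log(d/k))$ recursion levels, using that uniform sampling halves both the column count and $\nnz$ at each level with high probability (as in \cite{cohen2015uniform}). The $\nnz$ term telescopes to $O(\theta^{-1}\nnz(\bv{A}))$ since the halving contributes only a constant-factor geometric series, and each non-$\nnz$ term picks up one extra $\log$ factor, absorbed into $\tilde O(\cdot)$. Under the standing assumption $d = \poly(n)$ (justified exactly as in the paragraph preceding the lemma: otherwise $\nnz(\bv A)\ge d$ already dominates), we have $d^{O(\theta)} = n^{O(\theta)}$, and after rescaling $\theta$ by the hidden constant this becomes the claimed $\tilde O(n^{1+\theta}k^{2})$.

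For the top-level work I would follow the recipe sketched just before the lemma, but replace the $O(\log(d/\delta))$-row Johnson--Lindenstrauss matrix of Lemma~\ref{lem:algo1} with one having only $\theta^{-1}$ rows. By the standard analysis of JL for leverage-score estimation (see Lemma~4.5 of \cite{pengV2}), applying such a $\bs{\Pi}$ to $\bv{M} = \bv{R}\bs{\Sigma}^{-1}\bv{R}^T + \tfrac{1}{\sqrt{\lambda}}(\bv{I}-\bv{R}\bv{R}^T)$ still yields, for every $i$, a value $\|\bs{\Pi}\bv{M}\bv{a}_i\|_2^2$ within a $d^{\Theta(\theta)}$ multiplicative factor of the true generalized score $\bar\tau_i^{\bv{\tilde C}_u}(\bv{A}) = \|\bv{M}\bv{a}_i\|_2^2$, with high probability. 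Scaling these up by the factor produces valid overestimates $\tilde\tau_i \ge \bar\tau_i^{\bv{\tilde C}_u}(\bv A) \ge \bar\tau_i(\bv A)$ (the last two inequalities via Lemma~\ref{leverage_approx} applied with $\bv{C}_u$ in the role of $\bv{A}$, together with monotonicity, Theorem~\ref{monotonicity2}), whose sum is $O(k\,d^{\Theta(\theta)})$ by Lemma~\ref{sumScores}. Independent sampling by $\tilde\tau_i$, invoking the without-replacement versions of Theorems~\ref{matrix_chernoff} and~\ref{pcp_intro}, then produces $\bv{C}'$ with $O(k\,d^{\Theta(\theta)}\log(k/\delta))$ columns that is a constant-error additive-multiplicative spectral approximation and projection-cost preserving sample of $\bv{A}$.

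The costs to account for per level are: (i) an SVD of $\bv{\tilde C}_u$ to obtain $\bv{R}$, $\bs{\Sigma}$, and $\lambda$, costing $\tilde O(nk^2 d^{O(\theta)})$ since $\bv{\tilde C}_u$ carries $\tilde O(k\,d^{O(\theta)})$ columns at this stage; (ii) forming the dense $\theta^{-1}\times n$ matrix $\bs{\Pi}\bv{M}$ via the factored form of $\bv{M}$, which costs $\tilde O(\theta^{-1} n k\, d^{O(\theta)})$, followed by evaluating $\|\bs{\Pi}\bv{M}\bv{a}_i\|_2^2$ for all columns of $\bv{A}$ in $O(\theta^{-1}\nnz(\bv{A}))$ time; and (iii) reducing $\bv{C}'$ to the final $O(k\log(k/\delta))$-column sample $\bv{C}$ by computing the exact ridge leverage scores of $\bv{C}'$ from its SVD and sampling, costing $\tilde O(nk^2 d^{O(\theta)})$ because $\bv{C}'$ has only $\tilde O(k\,d^{O(\theta)})$ columns. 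All non-$\nnz$ terms are $\tilde O(nk^2 d^{O(\theta)})$, as needed. Finally I would union bound over the $O(\log(d/k))$ levels, absorbing the resulting $\log\log$ factor inside $\log(1/\delta)$ into the $\tilde O(\cdot)$.

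The main obstacle is ensuring that the $d^{\Theta(\theta)}$ over-sampling introduced by the cheap JL step does not compound across the recursion: the point is that $\bv{C}'$ is re-sampled down to $O(k\log(k/\delta))$ columns immediately, using scores that are \emph{exact} for $\bv{C}'$ and hence have provably small sum by Lemma~\ref{sumScores}, so the blow-up is confined to a single level and the downstream recursive call always sees a matrix with only $O(k\log(k/\delta))$ columns. A secondary point to verify is that Lemma~\ref{leverage_approx} needs both a spectral and a projection-cost-preservation guarantee on $\bv{\tilde C}_u$ relative to $\bv{C}_u$; the recursive postcondition of Algorithm~\ref{halving} supplies both, so the chain $\bar\tau_i^{\bv{\tilde C}_u}(\bv A) = \Theta\big(\bar\tau_i^{\bv{C}_u}(\bv A)\big) \ge \Omega\big(\bar\tau_i(\bv A)\big)$ goes through and the sampling scores remain valid overestimates at every level.
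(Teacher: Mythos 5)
Your proposal is correct and follows essentially the same route as the paper: replace the $O(\log(d/\delta))$-row JL sketch with a $\theta^{-1}$-row one (via Lemma 4.5 of \cite{pengV2}), accept a $d^{\Theta(\theta)}$ approximation factor, oversample to $\tilde O(k\,d^{\Theta(\theta)})$ columns, immediately resample down to $O(k\log(k/\delta))$ columns using exact scores so the blow-up does not compound across recursion levels, and invoke $d = \poly(n)$ with a rescaling of $\theta$ to state the bound as $\tilde O(n^{1+\theta}k^2)$. Your write-up is in fact more explicit than the paper's about the per-level cost accounting and the non-compounding argument, but the underlying argument is identical.
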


Once we have used Algorithm \ref{halving} to obtain $\bv{C}$ satisfying the guarantees of Theorems \ref{matrix_chernoff} and \ref{pcp_intro} with constant error, we can approximate $\bv{A}$'s ridge leverage scores and resample one final time to obtain an $\epsilon$ error projection-cost preserving sketch. This immediately yields our main algorithmic result:

\begin{reptheorem}{nnza_intro} For any $\theta \in (0,1]$, there exists an iterative column sampling algorithm that, in time $O\left (\theta^{-1}\nnz(\bv{A})\right ) + \tilde O \left (\frac{n^{1+\theta}k^{2}}{\epsilon^4}\right)$, returns $\bv{Z} \in \mathbb{R}^{n\times k}$ satisfying:
\begin{align}\label{low_rank_gurantee}
\norm{\bv{A}-\bv{ZZ}^T \bv{A}}_F^2 \le (1+\epsilon) \norm{\bv A - \bv{A}_k}_F^2.
\end{align}
All significant linear algebraic operations of the algorithm involve matrices whose columns are subsets of those of $\bv{A}$, and thus inherit any structure from the original matrix, including sparsity.
\end{reptheorem}
\begin{proof}
We use the same technique as Lemma \ref{inputSparsitySampling}, but in the last round of sampling we select $O\left (\frac{kn^{\theta/2}\log(k/\delta)}{\epsilon^2}\right)$ columns to obtain an $O(\epsilon)$ factor projection-cost preserving sample, $\bv{C}$. Setting $\bv{Z}$ to the top $k$ column singular vectors of $\bv{C}$, which takes $\tilde O(n^{1+\theta}k^{2}/\epsilon^4)$ time, gives \eqref{low_rank_gurantee} \cite{kmeansPaper}.
\end{proof}

\section{Streaming Ridge Leverage Score Sampling}
\label{sec:streaming}

%

We conclude with an application of our results to novel low-rank sampling algorithms for single-pass column streams. While random projection algorithms work naturally in the streaming setting, the study of single-pass streaming column sampling has been limited to the ``full-rank'' case \cite{kelner2011spectral,camMichaelOnlineSparsification,kapralov2014single}. Column subset selection algorithms based on simple norm sampling \emph{are} adaptable to streams, but 
do not give relative error approximation guarantees \cite{Drineas:2006:FMC2,Frieze:2004}.

Relative error algorithms are obtainable by combining our projection-cost preserving sampling procedures with the ``merge-and-reduce'' framework for coresets \cite{saxeBentley,Agarwal:2004,Har-Peled:2004}. 
This approach relies on the composability of projection-cost preserving samples: a $(1+\epsilon)$ error sample for $\bv{A}$ unioned with a $(1+\epsilon)$ error sample for $\bv{B}$ gives a $(1+\epsilon)$ error sample for $[\bv{A},\bv{B}]$ \cite{feldman2013turning}. However, merge-and-reduce requires storage of $O(\log^4dk/\epsilon^2)$ scaled columns from $\bv{A}$, where $d$ is the \emph{length} of our stream (and its value is known ahead of time).


Our algorithms eliminate the $\log^c d$ stream length dependence, storing a fixed number of columns that only depends on $\epsilon$ and $k$. We note that our space bounds are given in terms of the number of real numbers stored. We do not bound the required precision of these numbers, which would include at least a single logarithmic dependence on $d$. In particular, we employ a Frequent Directions sketch that requires words with at least $\Theta(\log(nd))$ bits of precision. Rigorously bounding maximum word-size required for Frequent Directions and our algorithms could be an interesting direction for future work. 

\subsection{General Approach}
The basic idea behind our algorithms is quite simple and follows intuition from prior work on standard leverage score sampling \cite{kelner2011spectral}. Suppose we have some space budget $t$ for storing a column sample $\bv{C}$. As soon as we have streamed in $t$ columns, we can downsample by ridge leverage scores to say $t/2$ columns. As more columns are received, we will eventually reach our storage limit and need to downsample columns again. Doing so naively would compound error: if we resampled $r$ times, our final sample would have error $(1+\epsilon)^r$. 

However, we can avoid compounding error by exploiting Lemma \ref{monotonicity}, which ensures that, as new columns are added, the ridge leverage scores of columns already seen only decrease. Whenever we add a column to $\bv{C}$, we can record the probability it was kept with. In the next round of sampling, we only discard that column with probability equal to the proportion that its ridge leverage score \emph{decreased} by (or keep the column with probability $1$ if the score remained constant).
New columns are simply sampled by ridge leverage score. This process ensures that, at any point in the stream, we have a set of columns sampled by true ridge scores with respect to the matrix seen so far. Accordingly, we will have a $(1+\epsilon)$ error column subset or projection-cost preserving sample at the end of the stream.

This overview hides a number of details, the most important of which is how to compute ridge leverage scores at any given point in the stream with respect to the columns of $\bv{A}$ observed so far. We do not have direct access to these columns since we have only stored a subset of them. We could use the fact that our current sample is projection-cost preserving and can be used to approximate ridge leverage scores (see Lemma \ref{leverage_approx}). However, this approach would introduce sampling dependencies between columns and would require a logarithmic dependence on stream length to ensure that our approximation does not fail at any round of sampling. 

\subsection{Frequent Directions for Approximating Ridge Leverage Scores}
Instead, we use a constant error \emph{deterministic} ``Frequent Directions'' sketch to estimate ridge leverage scores. Introduced in \cite{liberty2013simple} and further analyzed in a series of papers culminating with \cite{ghashami2015frequent}, Frequent Directions sketches are easily maintained in a single-pass column stream of $\bv{A}$. The sketch always provides an approximation $\bv{B} \in \R^{n\times (\ell+1)k}$ guaranteeing:
\begin{align}
\label{eq:true_freq_dirs_guar}
\bv{B}\bv{B}^T \preceq \bv{A}\bv{A}^T \preceq \bv{B}\bv{B}^T + \frac{1}{\ell}\frac{\|\bv{A}_{\setminus k}\|_F^2}{k}.
\end{align}
$\bv{B}$ does not contain columns from $\bv{A}$, so it could be dense even for a sparse input matrix. However, we will only be setting $\ell$ to a small constant. Precise information about $\bv{A}$ will be stored in our column sample $\bv{C}$, which maintains sparsity.

We first show that $\bv{B}$ can be used to compute constant factor approximations to the ridge leverage scores of $\bv{A}$.
\begin{lemma}\label{freq_dirs:lev_approx}
For every column $\bv{a}_i \in \bv{A}$, define
\begin{align*}
\tilde{\tau}_i \eqdef \bv{a}_i^T\left(\bv{B}\bv{B}^T + \frac{\norm{\bv{A}}_F^2 - \norm{\bv{B}_k}_F^2}{k}\bv{I}\right)^+\bv{a}_i.
\end{align*}
If $\bv{B}\in \R^{n\times 3k}$ is a Frequent Directions sketch for $\bv{A}$ with accuracy parameter $\ell = 2$, then
\begin{align*}
\frac{1}{2}\bar{\tau}_i(\bv{A}) \leq \tilde{\tau}_i \leq 2 \bar{\tau}_i(\bv{A}).
\end{align*}
\end{lemma}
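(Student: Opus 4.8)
The plan is to reduce the claim to a two-sided spectral comparison of the two regularized matrices $\bv{B}\bv{B}^T + \frac{\|\bv{A}\|_F^2 - \|\bv{B}_k\|_F^2}{k}\bv{I}$ and $\bv{A}\bv{A}^T + \frac{\|\bv{A}_{\setminus k}\|_F^2}{k}\bv{I}$, and then invoke the fact (from \cite{psdInvert}) that PSD matrices with the same column span invert order-reversingly. Write $\lambda \eqdef \frac{\|\bv{A}_{\setminus k}\|_F^2}{k}$ and $\mu \eqdef \frac{\|\bv{A}\|_F^2 - \|\bv{B}_k\|_F^2}{k}$, so that $\bar\tau_i(\bv{A}) = \bv{a}_i^T(\bv{A}\bv{A}^T + \lambda\bv{I})^+\bv{a}_i$ and $\tilde\tau_i = \bv{a}_i^T(\bv{B}\bv{B}^T + \mu\bv{I})^+\bv{a}_i$. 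It suffices to establish
\[
\tfrac12(\bv{A}\bv{A}^T + \lambda\bv{I}) \preceq \bv{B}\bv{B}^T + \mu\bv{I} \preceq \tfrac32(\bv{A}\bv{A}^T + \lambda\bv{I}).
\]

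First I would pin down $\mu$ in terms of $\lambda$. The Frequent Directions guarantee \eqref{eq:true_freq_dirs_guar} with $\ell = 2$ reads $\bv{B}\bv{B}^T \preceq \bv{A}\bv{A}^T \preceq \bv{B}\bv{B}^T + \tfrac12\lambda\bv{I}$; applying Weyl's monotonicity of eigenvalues to these two PSD relations gives, for every index $i$, $\sigma_i^2(\bv{B}) \le \sigma_i^2(\bv{A}) \le \sigma_i^2(\bv{B}) + \tfrac12\lambda$. Summing the left inequality over $i = 1,\dots,k$ gives $\|\bv{B}_k\|_F^2 \le \|\bv{A}_k\|_F^2$, and summing the right one over the same range gives $\|\bv{A}_k\|_F^2 - \|\bv{B}_k\|_F^2 \le \tfrac{k}{2}\lambda$. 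Since $\|\bv{A}\|_F^2 = \|\bv{A}_k\|_F^2 + \|\bv{A}_{\setminus k}\|_F^2$, this means $\mu = \lambda + \tfrac{1}{k}\left(\|\bv{A}_k\|_F^2 - \|\bv{B}_k\|_F^2\right)$, hence $\lambda \le \mu \le \tfrac32\lambda$.

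The two spectral bounds then follow by routine PSD manipulation. For the upper bound, $\bv{B}\bv{B}^T + \mu\bv{I} \preceq \bv{A}\bv{A}^T + \tfrac32\lambda\bv{I} = (\bv{A}\bv{A}^T + \lambda\bv{I}) + \tfrac12\lambda\bv{I} \preceq \tfrac32(\bv{A}\bv{A}^T + \lambda\bv{I})$, using $\lambda\bv{I} \preceq \bv{A}\bv{A}^T + \lambda\bv{I}$. For the lower bound, $\bv{B}\bv{B}^T \succeq \bv{A}\bv{A}^T - \tfrac12\lambda\bv{I}$ gives $\bv{B}\bv{B}^T + \mu\bv{I} \succeq \bv{A}\bv{A}^T + \tfrac12\lambda\bv{I} \succeq \tfrac12(\bv{A}\bv{A}^T + \lambda\bv{I})$, using $\tfrac12\bv{A}\bv{A}^T \succeq \bv{0}$. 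When $\lambda > 0$ both regularized matrices are positive definite (hence span all of $\R^n$), so \cite{psdInvert} yields $\tfrac23(\bv{A}\bv{A}^T+\lambda\bv{I})^{-1} \preceq (\bv{B}\bv{B}^T+\mu\bv{I})^{-1} \preceq 2(\bv{A}\bv{A}^T+\lambda\bv{I})^{-1}$, and conjugating by $\bv{a}_i$ gives $\tfrac23\bar\tau_i(\bv{A}) \le \tilde\tau_i \le 2\bar\tau_i(\bv{A})$, which is stronger than claimed. The degenerate case $\lambda = 0$ (i.e. $\rank(\bv{A}) \le k$) I would handle separately: then the FD bound forces $\bv{B}\bv{B}^T = \bv{A}\bv{A}^T$ and $\|\bv{B}_k\|_F^2 = \|\bv{A}\|_F^2$, so $\mu = 0$ and $\tilde\tau_i = \bar\tau_i(\bv{A})$ exactly, since $\bv{a}_i \in \colspan(\bv{A}\bv{A}^T) = \colspan(\bv{B}\bv{B}^T)$.

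I expect the only real subtlety to be the regularizer mismatch: $\tilde\tau_i$ uses the ``estimated tail mass'' $\|\bv{A}\|_F^2 - \|\bv{B}_k\|_F^2$ rather than $\|\bv{A}_{\setminus k}\|_F^2$, and the whole content of the argument is that Frequent Directions controls $\|\bv{A}_k\|_F^2 - \|\bv{B}_k\|_F^2$ to within $\tfrac{k}{2}\lambda$, so that these two regularizers agree up to a factor of $\tfrac32$. Everything after that is bookkeeping with the Loewner order.
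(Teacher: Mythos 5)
Your proof is correct and follows essentially the same route as the paper's: both use the Frequent Directions guarantee plus singular-value monotonicity to show the regularizer $\frac{\|\bv{A}\|_F^2 - \|\bv{B}_k\|_F^2}{k}$ is sandwiched between $\lambda$ and $\frac{3}{2}\lambda$, deduce a two-sided Loewner comparison of the regularized matrices, and invert via the order-reversal property. Your bookkeeping even yields the marginally tighter constant $\tfrac{2}{3}$ on one side, and your explicit treatment of the rank-$\le k$ case is a touch more careful than the paper's.
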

$\|\bv{A}\|_F^2$ is obviously computable in a single-pass column stream, so $\tilde{\tau}_i$ can be evaluated in the streaming setting as long as we have access to $\bv{a}_i$.
 \begin{proof}
By the Frequent Directions guarantee, either $\bv{BB}^T = \bv{AA}^T$ giving the lemma trivially, or $\norm{\bv{A}_{\setminus k}}_F^2 \ge 0$. In this case, since $\bv{BB}^T \preceq \bv{AA}^T$, $\norm{\bv{A}}_F^2 - \norm{\bv{B}_k}_F^2 > 0$. So both $\bv{AA}^T + \frac{\norm{\bv{A}_{\setminus k}}_F^2}{k} \bv{I} $ and $\bv{B}\bv{B}^T + \frac{\norm{\bv{A}}_F^2 - \norm{\bv{B}_k}_F^2}{k}\bv{I}$ span all of $\R^n$. Recalling that $\bar \tau_i(\bv{A}) = \bv{a}_i^T \left (\bv{AA}^T + \frac{\norm{\bv{A}_{\setminus k}}_F^2}{k} \bv{I} \right)^+ \bv{a}_i$, to prove the lemma it suffices to show:
\begin{align}
\label{eq:freq_dirs_psd_bound}
\frac{1}{2}\left(\bv{B}\bv{B}^T + \frac{\norm{\bv{A}}_F^2- \norm{\bv{B}_k}_F^2}{k}\bv{I}\right) \preceq \bv{AA}^T + \frac{\norm{\bv{A}_{\setminus k}}_F^2}{k} \bv{I} \preceq 2\left(\bv{B}\bv{B}^T + \frac{\norm{\bv{A}}_F^2 - \norm{\bv{B}_k}_F^2}{k}\bv{I}\right).
\end{align}

Recall that the squared Frobenius norm  of a matrix is equal to the sum of its squared singular values. Additionally, 
a standard property of the relation $\bv{M}\preceq \bv{N}$ is that, for all $i$, the $i^\text{th}$ singular value $ \sigma_i(\bv{M}) \leq \sigma_i(\bv{N})$. From the right hand side of \eqref{eq:true_freq_dirs_guar} it follows that, when $\ell = 2$, $\sigma_i^2(\bv{B}) \geq \sigma_i^2(\bv{A}) - \frac{\|\bv{A}_{\setminus k}\|_F^2}{2k}$. Accordingly, since $\norm{\bv{B}_k}_F^2$ is the sum of the top $k$ singular values of $\bv{B}$,
\begin{align*}
\norm{\bv{A}}_F^2 -\norm{\bv{B}_k}_F^2 \leq  \norm{\bv{A}_{\setminus k}}_F^2 + k\frac{\|\bv{A}_{\setminus k}\|_F^2}{2k} \leq 1.5\|\bv{A}_{\setminus k}\|_F^2.
\end{align*}
Since $\bv{B}\bv{B}^T \preceq \bv{A}\bv{A}^T$, it follows that 
 that $\left(\bv{B}\bv{B}^T + \frac{\norm{\bv{A}}_F^2 - \norm{\bv{B}_k}_F^2}{k}\bv{I}\right) \preceq \bv{AA}^T + 1.5\frac{\norm{\bv{A}_{\setminus k}}_F^2}{k} \bv{I}$, which is more than tight enough to give the left hand side of \eqref{eq:freq_dirs_psd_bound}.

Furthermore , $\norm{\bv{A}}_F^2 - \norm{\bv{B}_k}_F^2 \geq \norm{\bv{A}_{\setminus k}}_F^2$, and since $\ell = 2$, $\bv{B}\bv{B}^T \succeq \bv{A}\bv{A}^T - \frac{\|\bv{A}_{\setminus k}\|_F^2}{2k}$. Overall,
\begin{align*}
\left(\bv{B}\bv{B}^T + \frac{\norm{\bv{A}}_F^2 - \norm{\bv{B}_k}_F^2}{k}\bv{I}\right) \succeq \bv{A}\bv{A}^T + \frac{\|\bv{A}_{\setminus k}\|_F^2}{2k},
\end{align*}
which is more than tight enough to give the right hand side of \eqref{eq:freq_dirs_psd_bound}.
\end{proof}

\subsection{Streaming Column Subset Selection}
Lemma \ref{freq_dirs:lev_approx} gives rise to a number of natural algorithms for rejection sampling by ridge leverage score. The simplest approach is to emulate sampling columns from $\bv{A}$ independently without replacement (see Lemmas \ref{matrix_chernoff2} and \ref{trace:bound2}). However, since sampling without replacement produces a variable number of samples, this method would require a $\log d$ dependence to ensure that our space remains bounded throughout the algorithm's execution with high probability.

Instead, we apply our ``with replacement'' bounds, which sample a fixed number of columns, $t$. We start by describing Algorithm \ref{algo:css} for column subset selection. The constant $c$ used below is the necessary oversampling parameter from Theorem \ref{css_intro}. $\bv{C}\in \R^{n\times t}$ stores our actual column subset and $\bv{D}\in \R^{n\times t}$ stores a queue of new columns. $\bv{B}$ is a Frequent Directions sketch with parameter $\ell = 2$.

\begin{breakablealgorithm}
\caption{\algoname{Streaming Column Subset}}
{\bf input}: $\bv{A} \in \mathbb{R}^{n \times d}$, accuracy $\epsilon$, success probability $(1-\delta)$ \\
{\bf output}: $\bv{C} \in \mathbb{R}^{n \times t}$ such that $t = 32c(k\log k + k\log(1/\delta)/\epsilon)$ and each column $\bv{c}_i$ is equal to column $\bv{a}_j$ with probability $p_j \in \left[\frac{1}{2}\frac{\tilde \tau_j c(k\log k + k\log(1/\delta)/\epsilon)}{t}, \frac{\tilde \tau_j c(k\log k + k\log(1/\delta)/\epsilon)}{t}\right]$ and $\bv{0}$ otherwise, where $\tilde \tau_j \ge 2\bar \tau_j(\bv{A})$ for all $j$ and $\sum_{j=1}^n \tilde \tau_j \leq 16k$. 
\begin{algorithmic}[1]\label{streaming_algo}
\State $count := 1$, $\bv{C} := \bv{0}_{n \times t}$, $\bv{D}:= \bv{0}_{n \times t}$, $frobA:=0$ \Comment{\textcolor{blue}{Initialize storage}}
\State $[\tilde \tau^{old}_{1},...,\tilde \tau^{old}_t] := 1$
\Comment{\textcolor{blue}{Initialize sampling probabilities}}
\For{$i := 1,\ldots, d$} \Comment{\textcolor{blue}{Process column stream}}
\State $\bv{B} := \fd(\bv{B},\bv{a}_i)$
\If{$count \le t$} \Comment{\textcolor{blue}{Collect $t$ new columns}}
\State $\bv{d}_{count} := \bv{a}_i$.
\State $frobA := frobA + \norm{\bv{a}_i}_2^2$ \Comment{\textcolor{blue}{Update $\norm{\bv{A}}_F^2$}}
\State{$count := count + 1$}
\Else \Comment{\textcolor{blue}{Prune columns}}
\State $[\tilde \tau_{1},...,\tilde \tau_t] := \min \left \{[\tilde \tau^{old}_{1},...,\tilde \tau^{old}_t],\re(\bv{B},\bv{C}, frobA)\right \}$
\State $[\tilde \tau^\bv{D}_{1},...,\tilde \tau^\bv{D}_{t}] := \re(\bv{B},\bv{D}, frobA)$
\For {$j := 1,\ldots, t$} 
\If{$\bv{c}_j \neq \bv{0}$} \Comment{\textcolor{blue}{Rejection sample}}
\State With probability $\left(1 - \tilde \tau_{j}/\tilde \tau^{old}_{j}\right)$ set $\bv{c}_j := \bv{0}$ and set $\tilde \tau^{old}_{j} := 1$. 
\State Otherwise set $\tilde \tau^{old}_{j} := \tilde \tau_{j}$.
\EndIf
\If{$\bv{c}_j = \bv{0}$} \Comment{\textcolor{blue}{Sample from new columns in $\bv{D}$}}
\For{$\ell := 1,\ldots, t$}
\State With probability $\frac{\tilde \tau_\ell c(k\log k + k\log(1/\delta)/\epsilon)}{t}$ set $\bv{c}_j := \bv{d}_{\ell}$ and set $\tilde \tau^{old}_{j} := \tilde \tau_{\ell}$
\EndFor
\EndIf
\EndFor
\State $count := 0$
\EndIf
\EndFor
\end{algorithmic}
\begin{algorithmic}[1]
\Function{$\re$}{$\bv{B}$, $\bv{M}\in \R^{n\times t}$, $frobA$} 
\For {$i := t+1,\ldots, d$}
\State $\tilde \tau_k$:= $4\bv{m}_i^T\left(\bv{B}\bv{B}^T + \frac{frobA - \norm{\bv{B}_k}_F^2}{k}\bv{I}\right)^+\bv{m}_i$
\EndFor
\State \textbf{return} $[\tilde \tau_{1},...,\tilde \tau_t]$
\EndFunction
\end{algorithmic}
\label{algo:css}
\end{breakablealgorithm}

To prove the correctness of Algorithm \ref{algo:css}, we first note that, if our output $\bv{C}$ has columns belonging to the claimed distribution, then with probability $(1-\delta)$, $\bv{C}$ is a $(1+\epsilon)$ error column subset for $\bv{A}$ satisfying the guarantees of Theorem \ref{css_intro}. Our procedure is not quite equivalent to the sampling procedure from Theorem \ref{css_intro} because we have some positive probability of choosing a $\bv{0}$ column (in fact, since $\sum_{j=1}^n \tilde \tau_j \leq 16k$, by our choice of $t$ that probability is greater than $\frac{1}{2}$ for each column). However,  Algorithm \ref{algo:css} samples from a distribution that \emph{is equivalent} to sampling from $\bv{A}$ with an all zeros column $\bv{0}$ tacked on and assigned a high ridge leverage score overestimate. Furthermore, by inspecting Algorithm \ref{algo:css}, we can see that each column is sampled \emph{independently}, as all ridge leverage score estimates are computed using the deterministic sketch $\bv{B}$. Thus, we obtain a column subset for $\left[\bv{A}\cup \bv{0}\right]$, which is clearly also a column subset for $\bv{A}$.

So, we just need to argue that we obtain an output according to the claimed distribution. Consider the state of the algorithm after each set of $t$ ``\textcolor{blue}{Process column stream}'' iterations, or equivalently, after each time the ``\textcolor{blue}{Prune columns}'' else statement is entered. Denote $\bv{A}$'s first $t$ columns as $\bv{A}^{(1)}$, its first $2t$ columns as $\bv{A}^{(2)}$, and in general, its first $m\cdot t$ columns as $\bv{A}^{(m)}$. These submatrices represent the columns of $\bv{A}$ processed by the end of each epoch of $t$ ``\textcolor{blue}{Process column stream}'' iterations.
Let's take as an inductive assumption that after every prior set of $t$ steps, each column in $\bv{C}$ equals:

\begin{align}
\label{css_probability_cond}
\bv{c} = \begin{cases}
\bv{a}_j \in [\bv{A}^{(m)}] \text{ with probability } p_j \in \left[\frac{1}{2}\frac{\tilde \tau_j c(k\log k + k\log(1/\delta)/\epsilon)}{t}, \frac{\tilde \tau_j c(k\log k + k\log(1/\delta)/\epsilon)}{t}\right],\\
\bv{0} \text{ with probability } (1- \sum_j p_j),
\end{cases}
\end{align}
where $\tilde \tau_j \ge 2\bar\tau_j(\bv{A}^{(m)})$ for all $j$ and $\sum_{j} \tilde \tau_j \leq 16k$. This is simply equivalent to our claimed output property of $\bv{C}$ once all columns have been processed.

\eqref{css_probability_cond} holds for $\bv{A}^{(1)}$ because all of its columns are initially stored in the buffer $\bv{D}$ and each $\bv{c}$ is set to $\bv{d}_j$ with probability $p_j = \frac{\tilde \tau_j c(k\log k + k\log(1/\delta)/\epsilon)}{t}$ (see line 19). From Lemma \ref{freq_dirs:lev_approx} and our chosen scaling by $4$ (line 3 of ApproximateRidgeScores), we know that $\tilde \tau_j \geq 2\bar{\tau}_j(\bv{A}^{(1)})$. Additionally, $\tilde \tau_j \leq 8\bar{\tau}_j(\bv{A}^{(1)})$, so it follows from Lemma \ref{sumScores} that $\sum_{j} \tilde \tau_j \leq 16k$.

For future iterations, $\bv{A}^{(m)}$ equals $[\bv{A}^{(m-1)}, \bv{D}$]. Consider the columns in $\bv{A}^{(m-1)}$ first. By our inductive assumption each column in $\bv{C}$ has already been set to $\bv{a}_j \in \bv{A}^{(m-1)}$ with probability $p_j \in \left[\frac{1}{2}\frac{\tilde \tau^{old}_j c(k\log k + k\log(1/\delta)/\epsilon)}{t}, \frac{\tilde \tau^{old}_j c(k\log k + k\log(1/\delta)/\epsilon)}{t}\right]$. Our ``\textcolor{blue}{Rejection sample}'' step additionally filters out any column sampled with probability $\tilde \tau_{j}/\tilde \tau^{old}_{j}$, meaning that in total $\bv{a}_j$ is sampled with the desired probability from \eqref{css_probability_cond}. We note that $\tilde \tau_{j}/\tilde \tau^{old}_{j}$ is trivially $\leq 1$ since $\tilde \tau_{j}$ was set to the minimum of $\tilde \tau^{old}_{j}$ and the ridge leverage score of $\bv{a}_j$ computed with respect to our updated Frequent Directions sketch (see line 10).

If it was set based on the updated Frequent Directions sketch, then the argument that $\tilde \tau_j \ge 2\bar\tau_j(\bv{A}^{(m)})$ is the same as for $\bv{A}^{(1)}$. On the other hand, if $\tilde \tau_{j}$ was set to equal $\tilde \tau^{old}_{j}$, then we can apply Lemma \ref{monotonicity}: from the inductive assumption, $\tilde \tau_{j} = \tilde \tau^{old}_{j} \geq 2\bar \tau_{j}(\bv{A}^{(m-1)})$ and $\bar\tau_{j}(\bv{A}^{(m-1)}) \geq \bar\tau_{j}(\bv{A}^{(m)})$ from the monotonicity property so $\tilde \tau_j \ge 2\bar\tau_j(\bv{A}^{(m)})$.

Next consider any $\bv{a}_j \in \bv{D}$. Each column $\bv{c}$ is set to $\bv{a}_j$ with the correct probability for \eqref{css_probability_cond}, but only \emph{conditioned on the fact} that $\bv{c} = \bv{0}$ before the ``\textcolor{blue}{Sample from new columns in D}'' if statement is reached. This conditioning should mean that we effectively sample each $\bv{a}_j \in \bv{D}$ with lower probability. However, the probability cannot be much lower: by our choice of $t$ and the inductive assumption on $\sum_{j} \tilde \tau_j$,  every column is set to $\bv{0}$ with \emph{at minimum} $1/2$ probability. Accordingly, $\bv{c}$ is available at least half the time, meaning that we at least sample $\bv{a}_j$ with probability $p_j = \frac{1}{2}\frac{\tilde \tau_j c(k\log k + k\log(1/\delta)/\epsilon)}{t}$, which satisfies \eqref{css_probability_cond}.

All that is left to argue is that $\sum_{j} \tilde \tau_j \leq 16k$ for $\bv{A}^{(m)}$. The argument is the same as for $\bv{A}^{(1)}$, the only difference being that for some values of $j$, we could have set $\tilde \tau_{j} = \tilde \tau^{old}_{j}$, which only decreases the total sum. 
We conclude by induction that \eqref{css_probability_cond} holds for $\bv{A}$ itself, and thus $\bv{C}$ is a $(1+\epsilon)$ error column subset (Theorem \ref{css_intro}). Algorithm \ref{algo:css} requires $O(nk)$ space to store $\bv{B}$ and maintains at most $t = O(k\log k + k\log(1/\delta)/\epsilon)$ sampled columns.  It thus proves Theorem \ref{thm:streaming_css}:
\begin{theorem}[Streaming Column Subset Selection]\label{thm:streaming_css} There exists a streaming algorithm that uses just a single-pass over $\bv{A}$'s columns to compute a $(1+\epsilon)$ error column subset $\bv{C}$ with $O(k\log k + k\log(1/\delta)/\epsilon)$ columns. The algorithm uses $O(nk)$ space in addition to the space required to store $\bv{C}$ and succeeds with probability $1-\delta$.
\end{theorem}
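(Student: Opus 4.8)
The plan is to show that the random matrix $\bv{C}$ returned by Algorithm \ref{algo:css} has columns drawn \emph{independently} from the distribution \eqref{css_probability_cond} specialized to the full matrix $\bv{A}$, and then to invoke Theorem \ref{css_intro}. Since \eqref{css_probability_cond} puts positive mass on the all-zeros column, the first observation is that it coincides, up to the constant oversampling factor $c$, with the ``with replacement'' ridge-leverage-score sampling distribution for the augmented matrix $[\bv{A}\cup\bv{0}]$, where $\bv{0}$ is assigned a large score overestimate chosen so that the total score mass and the slot count $t$ are consistent. A $(1+\epsilon)$ column subset for $[\bv{A}\cup\bv{0}]$ is trivially one for $\bv{A}$; independence is genuine because every acceptance/rejection probability in the algorithm is computed from the \emph{deterministic} Frequent Directions sketch $\bv{B}$ (never from the random sample $\bv{C}$) together with fresh coins. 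We use the with-replacement rather than without-replacement variant precisely because fixing the number of slots $t$ in advance is what lets the queue $\bv{D}$ have a fixed size and keeps the space bound free of any $\log d$ term.

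The heart of the argument is an induction over the epochs $\bv{A}^{(1)},\bv{A}^{(2)},\dots$, with $\bv{A}^{(m)}$ the first $mt$ columns of $\bv{A}$. The inductive hypothesis is that immediately after the $m$-th ``Prune columns'' block, each slot of $\bv{C}$ is distributed as in \eqref{css_probability_cond} for $\bv{A}^{(m)}$, with stored overestimates satisfying $\tilde\tau_j\ge 2\bar\tau_j(\bv{A}^{(m)})$ for all $j$ and $\sum_j\tilde\tau_j\le 16k$. The base case $m=1$ is immediate: all columns of $\bv{A}^{(1)}$ sit in the buffer $\bv{D}$ and are accepted into a given slot with probability exactly $\tilde\tau_j c(k\log k + k\log(1/\delta)/\epsilon)/t$; Lemma \ref{freq_dirs:lev_approx} combined with the factor-$4$ scaling in \re{} gives $2\bar\tau_j(\bv{A}^{(1)})\le\tilde\tau_j\le 8\bar\tau_j(\bv{A}^{(1)})$, and Lemma \ref{sumScores} then yields $\sum_j\tilde\tau_j\le 16k$.

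For the inductive step I split $\bv{A}^{(m)}=[\bv{A}^{(m-1)},\bv{D}]$. For a surviving column $\bv{a}_j\in\bv{A}^{(m-1)}$, the ``Rejection sample'' step keeps it with probability $\tilde\tau_j/\tilde\tau_j^{old}$, which is a legitimate probability because $\tilde\tau_j$ is set to the \emph{minimum} of $\tilde\tau_j^{old}$ and the fresh estimate; multiplying the old acceptance probability by this factor places $\bv{a}_j$ exactly in the window of \eqref{css_probability_cond}, and the overestimate invariant is preserved either directly via Lemma \ref{freq_dirs:lev_approx} (when $\tilde\tau_j$ comes from the updated sketch) or via monotonicity, Lemma \ref{monotonicity}, which gives $\bar\tau_j(\bv{A}^{(m-1)})\ge\bar\tau_j(\bv{A}^{(m)})$ (when $\tilde\tau_j=\tilde\tau_j^{old}$). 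For a fresh column $\bv{a}_j\in\bv{D}$, it is only offered to slots currently holding $\bv{0}$, but because $\sum_j\tilde\tau_j\le 16k$ and $t=32c(k\log k+k\log(1/\delta)/\epsilon)$ is a constant factor larger than the minimum slot count required by Theorem \ref{css_intro}, the probability a slot holds $\bv{0}$ at the start of the block is at least $1/2$; hence $\bv{a}_j$ lands in a given slot with probability at least $\tfrac12$ of the target, still inside the two-sided window. The bound $\sum_j\tilde\tau_j\le 16k$ for $\bv{A}^{(m)}$ follows as in the base case, since replacing some fresh estimates by $\tilde\tau_j^{old}$ only decreases the sum.

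Closing the induction at the end of the stream and applying Theorem \ref{css_intro} (with its constant $c$) gives that $\bv{C}$ is a $(1+\epsilon)$ error column subset with failure probability $\delta$; the column count is $t=O(k\log k+k\log(1/\delta)/\epsilon)$, and the extra space is $O(nk)$ for $\bv{B}\in\R^{n\times 3k}$ (the buffer $\bv{D}$ has the same size as $\bv{C}$). The step I expect to demand the most care is the conditioning analysis for the new columns in $\bv{D}$: one must check that the marginal probability a particular column is assigned to a particular slot remains within the window \eqref{css_probability_cond} despite both the slot-availability conditioning and the fact that the inner loop can overwrite a slot several times within a single pruning block — this is exactly where the factor-$2$ slack built into the choice of $t$ is consumed.
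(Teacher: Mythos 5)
Your proposal mirrors the paper's own argument essentially step for step: the reduction to Theorem \ref{css_intro} via the augmented matrix $[\bv{A}\cup\bv{0}]$, independence from the deterministic Frequent Directions sketch, the epoch-by-epoch induction on \eqref{css_probability_cond} with the invariants $\tilde\tau_j\ge 2\bar\tau_j(\bv{A}^{(m)})$ and $\sum_j\tilde\tau_j\le 16k$, the use of Lemma \ref{monotonicity} for surviving columns, and the factor-$2$ slot-availability slack for fresh columns. The approach and all key steps match the paper's proof, so no further comparison is needed.
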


We note that, by using the stronger containment condition of Theorem \ref{css_intro} and the streaming projection-cost preserving sampling algorithm described below we can easily modify the above algorithm to output an optimally sized column subset with $O(k/\epsilon)$ columns. In order to select this subset, we require a Frequent Directions sketch with $\epsilon$ error, so that we can evaluate each $O(k/\epsilon)$ sized subset in our set of $O(k\log(1/\delta)/\epsilon)$ `adaptively sampled' columns and return one giving $\epsilon$ error. The higher accuracy Frequent Directions sketch incurs space overhead $O(nk/\epsilon)$.

\subsection{Streaming Projection-Cost Preserving Samples}
Our single-pass streaming procedure for  projection-cost preserving samples is similar to Algorithm \ref{algo:css}, although with one important difference. When constructing column subsets, we sampled new columns in the buffer $\bv{D}$ while ignoring the fact that ``available slots'' in $\bv{C}$ (i.e. columns currently set to $\bv{0}$) had already been consumed with some probability. This decision was deliberate, rather than a convenience for analysis. We could not account for the probability of slots being unavailable because calculating that probability precisely would require knowing the ridge leverage scores of already discarded columns.

Fortunately, the probability of a column not being set to $\bv{0}$ was bounded by $1/2$ and our procedure hits its sampling target up to this factor. However, while a constant factor approximation to sampling probabilities is also sufficient for our Theorem \ref{pcp_intro} projection-cost preservation result, the fact that columns need to be reweighted by the inverse of their sampling probability adds a complication: we do not know the \emph{true} probability with which we sampled each column! 

Unfortunately, approximating the reweighting up to a constant factor is insufficient. We need to reweight columns by a factor within $\sqrt{(1\pm \epsilon)}$ of $1/\sqrt{tp_i}$ for Theorem \ref{matrix_chernoff} and Lemma \ref{trace:bound} to hold (which are both required for Theorem \ref{pcp_intro}). This is easily checked by noting that such a reweighting is equivalent to replacing  $\bv{C}\bv{C}^T$ with $\bv{C}\bv{W}\bv{C}^T$ where $(1-\epsilon)\bv{I}_{d\times d} \preceq \bv{W} \preceq(1+\epsilon)\bv{I}_{d\times d}$.

We achieve this accuracy by modifying our algorithm so that it maintains an even higher ``open rate'' within $\bv{C}$. Specifically, we choose $t$ so that each column $\bv{c}$ has at least a $(1-\epsilon)$ probability of equaling $\bv{0}$ at any given point in our stream. The procedure is given as Algorithm \ref{algo:pcp}. The constant $c$ is the required oversampling parameter from Theorem \ref{css_intro}.

\begin{breakablealgorithm}
\caption{\algoname{Streaming Projection-Cost Preserving Samples}}
{\bf input}: $\bv{A} \in \mathbb{R}^{n \times d}$, accuracy $\epsilon$, success probability $(1-\delta)$ \\
{\bf output}: $\bv{C} \in \mathbb{R}^{n \times t}$ such that $t = \frac{1}{16\epsilon}ck\log(k/\delta)/\epsilon^2$ and each column $\bv{c}_i$ is equal to column $\frac{1}{\sqrt{\tilde \tau_j ck\log(k/\delta)/\epsilon^2}}\bv{a}_j$ with probability $p_j \in \left[(1-\epsilon)\frac{\tilde \tau_j ck\log(k/\delta)/\epsilon^2}{t}, \frac{\tilde \tau_j ck\log(k/\delta)/\epsilon^2}{t}\right]$ and $\bv{0}$ otherwise, where $\tilde \tau_j \ge 2\bar \tau_j(\bv{A})$ for all $j$ and $\sum_{j=1}^n \tilde \tau_j \leq 16k$. 
\begin{algorithmic}[1]\label{streaming_algo_pcp}
\State $count := 1$, $\bv{C} := \bv{0}_{n \times t}$, $\bv{D}:= \bv{0}_{n \times t}$, $frobA:=0$ \Comment{\textcolor{blue}{Initialize storage}}
\State $[\tilde \tau^{old}_{1},...,\tilde \tau^{old}_t] := 1$
\Comment{\textcolor{blue}{Initialize sampling probabilities}}
\For{$i := 1,\ldots, d$} \Comment{\textcolor{blue}{Process column stream}}
\State $\bv{B} := \fd(\bv{B},\bv{a}_i)$
\If{$count \le t$} \Comment{\textcolor{blue}{Collect $t$ new columns}}
\State $\bv{d}_{count} := \bv{a}_i$.
\State $frobA := frobA +\norm{\bv{a}_i}_2^2$ \Comment{\textcolor{blue}{Update $\norm{\bv{A}}_F^2$}}
\State{$count := count + 1$}
\Else \Comment{\textcolor{blue}{Prune columns}}
\State $[\tilde \tau_{1},...,\tilde \tau_t] := \min \left \{[\tilde \tau^{old}_{1},...,\tilde \tau^{old}_t],\re(\bv{B},\bv{C}, frobA)\right \}$
\State $[\tilde \tau^\bv{D}_{1},...,\tilde \tau^\bv{D}_{t}] := \re(\bv{B},\bv{D}, frobA)$
\For {$j := 1,\ldots, t$} 
\If{$\bv{c}_j \neq \bv{0}$} \Comment{\textcolor{blue}{Rejection sample}}
\State With probability $\left(1 - \tilde \tau_{j}/\tilde \tau^{old}_{j}\right)$ set $\bv{c}_j := \bv{0}$ and set $\tilde \tau^{old}_{j} := 1$. 
\State Otherwise set $\tilde \tau^{old}_{j} := \tilde \tau_{j}$ and multiply $\bv{c}_j$ by $\sqrt{\tilde \tau^{old}_{j}/\tilde \tau_{j}}$.
\EndIf
\If{$\bv{c}_j = \bv{0}$} \Comment{\textcolor{blue}{Sample from new columns in $\bv{D}$}}
\For{$\ell := 1,\ldots, t$}
\State With probability $\frac{\tilde \tau_\ell ck\log(k/\delta)/\epsilon^2}{t}$ set $\bv{c}_j := \frac{1}{\sqrt{\tilde \tau_\ell ck\log(k/\delta)/\epsilon^2}}\bv{d}_{\ell}$ and set $\tilde \tau^{old}_{j} := \tilde \tau_{\ell}$
\EndFor
\EndIf
\EndFor
\State $count := 0$
\EndIf
\EndFor
\end{algorithmic}
\begin{algorithmic}[1]
\Function{$\re$}{$\bv{B}$, $\bv{M}\in \R^{n\times t}$, $frobA$} 
\For {$i := t+1,\ldots, d$}
\State $\tilde \tau_k$:= $4\bv{m}_i^T\left(\bv{B}\bv{B}^T + \frac{frobA - \|\bv{B}_k\|_F^2}{k}\bv{I}\right)^+\bv{m}_i$
\EndFor
\State \textbf{return} $[\tilde \tau_{1},...,\tilde \tau_t]$
\EndFunction
\end{algorithmic}
\label{algo:pcp}
\end{breakablealgorithm}

The analysis of Algorithm \ref{algo:pcp} is equivalent to that of Algorithm \ref{algo:css}, along with the additional observation that our true sampling probability, $p_j$, is within an $\epsilon$ factor of the sampling probability used for reweighting, $\frac{\tilde \tau_j ck\log(k/\delta)/\epsilon^2}{t}$. Note that while $\bv{C}$ contains just $O(k\log(k/\delta)/\epsilon^2)$ non-zero columns in expectation, during the course of a the column stream it could contain as many as $O(k\log(k/\delta)/\epsilon^3)$ columns. Regardless, it is always possible to resample from $\bv{C}$ after running Algorithm \ref{algo:pcp} to construct an optimally sized sample for $\bv{A}$ with error $(1+2\epsilon)$. Overall we have:
\begin{theorem}[Streaming Projection-Cost Preserving Sampling]\label{thm:streaming_pcp} There exists a streaming algorithm that uses just a single-pass over $\bv{A}$'s columns to compute a $(1+\epsilon)$ error projection-cost preserving sample $\bv{C}$ with $O(k\log (k/\delta)/\epsilon^2)$ columns. The algorithm requires a fixed $O(nk)$ space overhead along with space to store $O(k\log (k/\delta)/\epsilon^3)$ columns of $\bv{A}$. It succeeds with probability $1-\delta$.
\end{theorem}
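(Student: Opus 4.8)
The plan is to follow the blueprint already established for Algorithm~\ref{algo:css} and Theorem~\ref{thm:streaming_css}: maintain a single inductive invariant on the state of the sample $\bv{C}$, and isolate the one genuinely new difficulty, which is that retained columns must now be reweighted to accuracy $\sqrt{1\pm\epsilon}$ rather than to a constant factor. Writing $\bv{A}^{(m)}$ for the first $m\cdot t$ columns of $\bv{A}$, I would prove by induction on $m$ that after the $m$th execution of the ``Prune columns'' branch each slot of $\bv{C}$ is, independently, equal to $\frac{1}{\sqrt{\tilde\tau_j ck\log(k/\delta)/\epsilon^2}}\bv{a}_j$ for some $\bv{a}_j\in[\bv{A}^{(m)}]$ with probability $p_j\in[(1-\epsilon)q_j,\,q_j]$ (where $q_j$ is the nominal probability used by the algorithm) and $\bv{0}$ otherwise, with $\tilde\tau_j\ge 2\bar\tau_j(\bv{A}^{(m)})$ for all $j$ and $\sum_j\tilde\tau_j\le 16k$. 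Independence of the slots holds because every score estimate is produced from the deterministic Frequent Directions sketch $\bv{B}$. The base case $m=1$ is exactly as in Algorithm~\ref{algo:css}: $\bv{A}^{(1)}$ is buffered in $\bv{D}$ and sampled directly, the factor-$4$ scaling in $\re$ together with Lemma~\ref{freq_dirs:lev_approx} gives $2\bar\tau_j(\bv{A}^{(1)})\le\tilde\tau_j\le 8\bar\tau_j(\bv{A}^{(1)})$, and hence $\sum_j\tilde\tau_j\le 16k$ by Lemma~\ref{sumScores}.

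For the inductive step, split $\bv{A}^{(m)}=[\bv{A}^{(m-1)},\bv{D}]$. An old stored column $\bv{a}_j\in[\bv{A}^{(m-1)}]$ survives the ``Rejection sample'' branch with probability $\tilde\tau_j/\tilde\tau_j^{old}\le 1$ (legitimate precisely because $\tilde\tau_j$ is the $\min$ of $\tilde\tau_j^{old}$ and the recomputed estimate) and is then multiplied by $\sqrt{\tilde\tau_j^{old}/\tilde\tau_j}$, which restores exactly the reweighting demanded by the invariant for the new value of $\tilde\tau_j$; composing the filter with the inductive probability yields the claimed $p_j$, and $\tilde\tau_j\ge 2\bar\tau_j(\bv{A}^{(m)})$ follows from Lemma~\ref{freq_dirs:lev_approx} (if it was recomputed against the updated sketch) or from Lemma~\ref{monotonicity}, since $\bar\tau_j(\bv{A}^{(m-1)})\ge\bar\tau_j(\bv{A}^{(m)})$. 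A fresh column $\bv{a}_j\in\bv{D}$ is placed into a slot with probability $q_j$, but only conditioned on that slot being empty at that moment; the $\epsilon$-sized slack built into the definition of $t$, together with the running bound $\sum_j\tilde\tau_j\le 16k$, forces every slot to be empty with probability at least $1-\epsilon$ throughout the stream, so the true probability lies in $[(1-\epsilon)q_j,\,q_j]$. Finally $\sum_j\tilde\tau_j\le 16k$ is preserved as in the base case, since some $\tilde\tau_j$ may only have shrunk under the $\min$.

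At the end of the stream the invariant says $\bv{C}$ is produced by $t$ independent draws from the column distribution of $[\bv{A}\cup\bv{0}]$ — where the all-zeros column absorbs the ``empty'' mass and carries a harmless large over-estimate — with over-estimates summing to $O(k)$, reweighted by the factor that would be correct under probability $q_j$ while each column is actually drawn with probability $p_j\in[(1-\epsilon)q_j,q_j]$. Since $p_j/q_j\in[1-\epsilon,1]$, this discrepancy is exactly the substitution $\bv{C}\bv{C}^T\mapsto\bv{C}\bv{W}\bv{C}^T$ for a data-dependent but deterministic diagonal $\bv{W}$ with $(1-\epsilon)\bv{I}\preceq\bv{W}\preceq(1+\epsilon)\bv{I}$, which perturbs the conclusions of Theorem~\ref{matrix_chernoff} and Lemma~\ref{trace:bound}, hence of Theorem~\ref{pcp_intro}, by only an $O(\epsilon)$ factor. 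Applying Theorem~\ref{pcp_intro} to $[\bv{A}\cup\bv{0}]$ (a projection-cost preserving sample of which is one of $\bv{A}$) then gives that $\bv{C}$ is a $(1+O(\epsilon))$ error sample with failure probability $\delta$; rescaling $\epsilon$ and the oversampling constant $c$ gives $(1+\epsilon)$. For space, the algorithm stores $\bv{B}\in\R^{n\times 3k}$, $\bv{D}\in\R^{n\times t}$ and $\bv{C}\in\R^{n\times t}$ with $t=O(k\log(k/\delta)/\epsilon^3)$, i.e.\ $O(nk)$ overhead plus $O(k\log(k/\delta)/\epsilon^3)$ stored columns; $\bv{C}$ has $O(k\log(k/\delta)/\epsilon^2)$ nonzero columns in expectation, and a final downsampling of those columns by their ridge leverage scores (Theorem~\ref{pcp_intro} once more) produces the claimed $O(k\log(k/\delta)/\epsilon^2)$-column output at the cost of replacing $\epsilon$ by $2\epsilon$.

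The main obstacle is the reweighting step: because the algorithm never learns the true slot probabilities $p_j$, only the nominal $q_j$, the reweighting is systematically off, and the real work is to verify (i) that this is \emph{exactly} a spectral perturbation by a $\bv{W}\approx\bv{I}$ rather than something less controlled, and that both the matrix Bernstein bound underlying Theorem~\ref{matrix_chernoff} and the scalar trace concentration of Lemma~\ref{trace:bound} degrade only linearly in $\epsilon$ under such a $\bv{W}$; and (ii) that the $\epsilon$-sized open-rate slack forced into $t$ is consistent, at every point of the stream, with the running bound $\sum_j\tilde\tau_j\le 16k$, so that the conditioning loss in the fresh-column part of the inductive step is never worse than the claimed $(1-\epsilon)$. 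Everything else reduces to the already-established analysis of Algorithm~\ref{algo:css}.
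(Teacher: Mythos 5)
Your proposal is correct and follows essentially the same route as the paper: the same inductive invariant on the slot distribution carried over from Algorithm~\ref{algo:css} (with independence via the deterministic Frequent Directions estimates and monotonicity via Lemma~\ref{monotonicity}), the same observation that the enlarged $t$ keeps every slot open with probability $1-\epsilon$ so the true and nominal sampling probabilities differ by at most a $(1-\epsilon)$ factor, the same reduction of that discrepancy to a perturbation $\bv{C}\bv{C}^T\mapsto\bv{C}\bv{W}\bv{C}^T$ with $(1-\epsilon)\bv{I}\preceq\bv{W}\preceq(1+\epsilon)\bv{I}$ feeding into Theorem~\ref{matrix_chernoff}, Lemma~\ref{trace:bound}, and Theorem~\ref{pcp_intro}, and the same final resampling to reach the optimal $O(k\log(k/\delta)/\epsilon^2)$ size. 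In fact your write-up makes explicit several steps the paper leaves implicit by deferring to the Algorithm~\ref{algo:css} analysis.
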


\bibliography{ridgeLeverageScores}{}
\bibliographystyle{alpha}

\appendix
\section{Trace Bound for Ridge Leverage Score Sampling} 
\label{app:trace_bound}
\begin{lemma}
\label{trace:bound}
For $i \in \{1,\ldots,d\}$, let $\tilde \tau_i \ge \bar \tau_i(\bv{A})$ be an overestimate for the $i^{th}$ ridge leverage score. Let $p_i = \frac{\tilde \tau_i}{\sum_i \tilde \tau_i}$. Let $t = \frac{c\log(k/\delta)}{\epsilon^2} \cdot \sum_i \tilde \tau_i$, for some sufficiently large constant $c$. Construct $\bv{C}$ by sampling $t$ columns of $\bv{A}$, each set to $\frac{1}{\sqrt{tp_i}}\bv{a}_i$ with probability $p_i$.  Let $m$ be the index of the smallest singular value with $\sigma_m^2 \ge \frac{\norm{\bv{A}_{\setminus k}}_F^2}{k}$. With probability $1-\delta$, $\bv{C}$ satisfies:
\begin{align}\label{trace_guarantee}
|\tr(\bv{A}_{\setminus m}\bv{A}_{\setminus m}^T) -  \tr(\bv{U}_{\setminus m}\bv{U}_{\setminus m}^T\bv{C}\bv{C}^T\bv{U}_{\setminus m}\bv{U}_{\setminus m}^T)| \leq \epsilon \norm{\bv{A}_{\setminus k}}_F^2.
\end{align}
\end{lemma}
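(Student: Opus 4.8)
The plan is to prove this as a scalar concentration statement via a standard Bernstein inequality. First I would rewrite the quantities in terms of the projection $\bv{P}_{\setminus m} = \bv{U}_{\setminus m}\bv{U}_{\setminus m}^T$: since $\bv{P}_{\setminus m}$ is idempotent, $\tr(\bv{U}_{\setminus m}\bv{U}_{\setminus m}^T\bv{C}\bv{C}^T\bv{U}_{\setminus m}\bv{U}_{\setminus m}^T) = \norm{\bv{P}_{\setminus m}\bv{C}}_F^2 = \sum_{j=1}^t \norm{\bv{P}_{\setminus m}\bv{c}_j}_2^2$ and $\tr(\bv{A}_{\setminus m}\bv{A}_{\setminus m}^T) = \norm{\bv{P}_{\setminus m}\bv{A}}_F^2$. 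Setting $X_j \eqdef \norm{\bv{P}_{\setminus m}\bv{c}_j}_2^2$, each $X_j$ independently equals $\frac{1}{tp_i}\norm{\bv{P}_{\setminus m}\bv{a}_i}_2^2$ with probability $p_i$, so $\E[X_j] = \frac{1}{t}\norm{\bv{P}_{\setminus m}\bv{A}}_F^2$ and $\E\big[\sum_j X_j\big] = \norm{\bv{A}_{\setminus m}}_F^2$. It remains to show concentration around this mean.

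The two ingredients Bernstein needs are a uniform bound $M$ on $X_j$ and a bound on $\sum_j \mathrm{Var}(X_j)$. For both I would conjugate the rank-one leverage-score inequality $\bv{a}_i\bv{a}_i^T \preceq \bar\tau_i(\bv{A})\big(\bv{AA}^T + \frac{\norm{\bv{A}_{\setminus k}}_F^2}{k}\bv{I}\big)$ (already used in the proof of Theorem~\ref{matrix_chernoff}) by $\bv{P}_{\setminus m}$. Since $\bv{P}_{\setminus m}\bv{AA}^T\bv{P}_{\setminus m} = \bv{A}_{\setminus m}\bv{A}_{\setminus m}^T$ and, by the definition of $m$, $\bv{A}_{\setminus m}\bv{A}_{\setminus m}^T \preceq \sigma_{m+1}^2\bv{I} \preceq \frac{\norm{\bv{A}_{\setminus k}}_F^2}{k}\bv{I}$, the conjugated inequality reads $\bv{P}_{\setminus m}\bv{a}_i\bv{a}_i^T\bv{P}_{\setminus m} \preceq \frac{2\bar\tau_i(\bv{A})\norm{\bv{A}_{\setminus k}}_F^2}{k}\bv{I}$, and reading off the operator norm of this rank-one matrix gives $\norm{\bv{P}_{\setminus m}\bv{a}_i}_2^2 \le \frac{2\bar\tau_i(\bv{A})\norm{\bv{A}_{\setminus k}}_F^2}{k}$. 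Combining this with $p_i \ge \bar\tau_i(\bv{A})/\sum_i\tilde\tau_i$ and $t = \frac{c\log(k/\delta)}{\epsilon^2}\sum_i\tilde\tau_i$ yields $X_j \le \frac{2\sum_i\tilde\tau_i\norm{\bv{A}_{\setminus k}}_F^2}{tk} = \frac{2\epsilon^2\norm{\bv{A}_{\setminus k}}_F^2}{ck\log(k/\delta)} =: M$. For the variance, $\mathrm{Var}(X_j) \le \E[X_j^2] = \frac{1}{t^2}\sum_i \frac{1}{p_i}\norm{\bv{P}_{\setminus m}\bv{a}_i}_2^4$; applying the same two bounds to one factor of $\norm{\bv{P}_{\setminus m}\bv{a}_i}_2^2$ and summing $\sum_i\norm{\bv{P}_{\setminus m}\bv{a}_i}_2^2 = \norm{\bv{A}_{\setminus m}}_F^2$ over the other gives $\sum_j\mathrm{Var}(X_j) \le \frac{2\sum_i\tilde\tau_i\norm{\bv{A}_{\setminus k}}_F^2\norm{\bv{A}_{\setminus m}}_F^2}{tk}$. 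Here I would note that the choice of $m$ forces $m \le 2k$ (otherwise $\sum_{i=k+1}^m\sigma_i^2 \ge (m-k)\sigma_m^2 > \norm{\bv{A}_{\setminus k}}_F^2$, impossible), so $\norm{\bv{A}_{\setminus m}}_F^2 \le 2\norm{\bv{A}_{\setminus k}}_F^2$ and hence $\sum_j\mathrm{Var}(X_j) \le \frac{4\epsilon^2\norm{\bv{A}_{\setminus k}}_F^4}{ck\log(k/\delta)} =: \sigma^2$.

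Finally I would apply the scalar Bernstein inequality $\Pr\big[\,|\sum_j X_j - \E\sum_j X_j| \ge a\,\big] \le 2\exp\!\big(\!-\tfrac{a^2/2}{\sigma^2 + Ma/3}\big)$ with $a = \epsilon\norm{\bv{A}_{\setminus k}}_F^2$. With the bounds above, $a^2/\sigma^2 \ge ck\log(k/\delta)/4$ and $a/M \ge ck\log(k/\delta)/(2\epsilon) \ge ck\log(k/\delta)/2$ for $\epsilon \le 1$, so the exponent is $\Omega(ck\log(k/\delta))$ and the failure probability is at most $2(k/\delta)^{-\Omega(c)} \le \delta$ once $c$ is a large enough constant, which is precisely \eqref{trace_guarantee}.

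There is no deep obstacle here — the argument is essentially bookkeeping — but the step most easily botched is the passage from the PSD inequality $\bv{P}_{\setminus m}\bv{a}_i\bv{a}_i^T\bv{P}_{\setminus m} \preceq \frac{2\bar\tau_i(\bv{A})\norm{\bv{A}_{\setminus k}}_F^2}{k}\bv{I}$ to the scalar bound on $\norm{\bv{P}_{\setminus m}\bv{a}_i}_2^2$, together with threading the various constant factors of $2$ (from the $\frac{\norm{\bv{A}_{\setminus k}}_F^2}{k}\bv{I}$ term and from $m \le 2k$) consistently through the Bernstein estimate; these constants are harmless because they are absorbed into $c$, but they must be tracked. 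I would also separately dispose of the degenerate case $\norm{\bv{A}_{\setminus k}}_F^2 = 0$, where $\bv{A}$ has rank at most $k$, $m$ equals the rank, $\bv{P}_{\setminus m} = \bv{0}$, and both traces vanish.
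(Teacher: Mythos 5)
Your proof is correct and follows essentially the same route as the paper's (Appendix \ref{app:trace_bound}): both reduce \eqref{trace_guarantee} to concentration of the sum of independent scalars $\norm{\bv{P}_{\setminus m}\bv{c}_j}_2^2$ with mean $\norm{\bv{A}_{\setminus m}}_F^2$, bound each summand by $O\bigl(\tfrac{\epsilon^2}{c\log(k/\delta)}\norm{\bv{A}_{\setminus k}}_F^2\bigr)$ via the relation between $\bar\tau_i(\bv{A})$ and $\norm{\bv{P}_{\setminus m}\bv{a}_i}_2^2$ (your conjugation of the rank-one inequality by $\bv{P}_{\setminus m}$ is the same estimate the paper obtains by restricting $\bv{U}\bv{\bar \Sigma}^{-2}\bv{U}^T$ to $\bv{U}_{\setminus m}$), and finish with a scalar tail bound --- the paper with a multiplicative Chernoff bound for $[0,1]$-valued summands, you with Bernstein, an immaterial difference. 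One small repair: the implication ``$m\le 2k$, so $\norm{\bv{A}_{\setminus m}}_F^2\le 2\norm{\bv{A}_{\setminus k}}_F^2$'' is a non sequitur; the bound you need is trivial when $m\ge k$, and when $m<k$ it follows because each of the at most $k$ extra terms $\sigma_{m+1}^2,\ldots,\sigma_k^2$ is below $\norm{\bv{A}_{\setminus k}}_F^2/k$ by the definition of $m$ --- the conclusion is true (it is the same fact the paper invokes as $\norm{\bv{A}_{\setminus k}}_F^2/\norm{\bv{A}_{\setminus m}}_F^2\ge 1/2$), so only the justification needs fixing.
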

\begin{proof}
Letting $\bv{P}_{\setminus m} = \bv{U}_{\setminus m}\bv{U}_{\setminus m}^T$, we can rewrite \eqref{trace_guarantee} as:
\begin{align*}
|\norm{\bv{P}_{\setminus m} \bv{C}}_F^2 - \norm{\bv{A}_{\setminus m}}_F^2| \le \epsilon \norm{\bv{A}_{\setminus k}}_F^2.
\end{align*}
We can write $\norm{\bv{P}_{\setminus m} \bv{C}}_F^2$ as a sum over column norms:
\begin{align*}
\norm{\bv{P}_{\setminus m} \bv{C}}_F^2 = \sum_{j=1}^t \norm{\bv{P}_{\setminus m}\bv{c}_j}_2^2.
\end{align*}
Now, for some $i \in \{1,\ldots,d\}$ and recalling our definition $\bv{\bar \Sigma}^2_{i,i} = \sigma^2_i(\bv{A}) + \frac{\norm{\bv{A}_{\setminus k}}_F^2}{k}$ we have:
\begin{align*}
\norm{\bv{P}_{\setminus m}\bv{c}_i}_2^2 = \frac{1}{tp_i} \norm{\bv{P}_{\setminus m}\bv{a}_i}_2^2 &\le
\frac{\epsilon^2}{c\log(k/\delta)} \cdot\frac{\norm{\bv{P}_{\setminus m}\bv{a}_i}_2^2}{\bar \tau_i(\bv{A})}\\
&= \frac{\epsilon^2}{c\log(k/\delta)} \cdot \frac{\norm{\bv{P}_{\setminus m}\bv{a}_i}_2^2}{\bv{a}_i^T\left(\bv{U}\bv{\bar \Sigma}^{-2}\bv{U}^T\right)\bv{a}_i}\\ &\le\frac{\epsilon^2}{c\log(k/\delta)} \cdot \frac{\norm{\bv{P}_{\setminus m}\bv{a}_i}_2^2}{\bv{a}_i^T\bv{P}_{\setminus m}\left(\bv{U}\bv{\bar \Sigma}^{-2}\bv{U}^T\right)\bv{P}_{\setminus m}\bv{a}_i} \\
&= \frac{\epsilon^2}{c\log(k/\delta)} \cdot \frac{\norm{\bv{P}_{\setminus m}\bv{a}_i}_2^2}{\bv{a}_i^T\bv{P}_{\setminus m}\left(\bv{U}_{\setminus m}\bv{\bar \Sigma}^{-2}\bv{U}_{\setminus m}^T\right)\bv{P}_{\setminus m}\bv{a}_i} \\
&\le \frac{\epsilon^2}{c\log(k/\delta)} \cdot \frac{\norm{\bv{P}_{\setminus m}\bv{a}_i}_2^2}{\frac{k}{2\norm{\bv{A}_{\setminus k}}_F^2}\norm{\bv{P}_{\setminus m}\bv{a}_i}_2^2}\\
& \le \frac{2\epsilon^2}{c\log(k/\delta)} \cdot\norm{\bv{A}_{\setminus k}}_F^2,
\end{align*}
where the second to last inequality follows from the fact that 
$\bv{\bar \Sigma}^2_{i,i} = \sigma^2_i(\bv{A}) + \frac{\norm{\bv{A}_{\setminus k}}_F^2}{k} \le \frac{2\norm{\bv{A}_{\setminus k}}_F^2}{k}$ for $i \ge m$. Therefore, $\bv{U}_{\setminus m}\bv{\bar \Sigma}^{-2}\bv{U}_{\setminus m}^T \succeq \frac{k}{2\norm{\bv{A}_{\setminus k}}_F^2} \bv{P}_{\setminus m}$.

So, $\frac{c\log(k/\delta)}{2\epsilon^2\norm{\bv{A}_{\setminus k}}_F^2} \cdot \norm{\bv{P}_{\setminus m}\bv{c}_i}_2^2 \in [0,1]$. We have $\E \left [ \sum_{j=1}^t \norm{\bv{P}_{\setminus m}\bv{c}_i}_2^2 \right ] = \norm{\bv{A}_{\setminus m}}_F^2$ so by a Chernoff bound:
\begin{align*}
&\Pr \left [ \norm{\bv{P}_{\setminus m}\bv C}_F^2 \ge \norm{\bv{A}_{\setminus m}}_F^2 + \epsilon \norm{\bv{A}_{\setminus k}}_F^2\right ] \\&= \Pr \left [ \frac{c\log(k/\delta)}{2\epsilon^2\norm{\bv A_{\setminus k}}_F^2}\sum_{j=1}^t \norm{\bv{P}_{\setminus m}\bv{c}_i}_2^2 \ge  \left (1 + \frac{\epsilon \norm{\bv{A}_{\setminus k}}_F^2}{ \norm{\bv{A}_{\setminus m}}_F^2} \right )\frac{c\log(k/\delta)\norm{\bv{A}_{\setminus m}}_F^2}{2\epsilon^2\norm{\bv A_{\setminus k}}_F^2} \right ]\\
&\le e^{-c\log(k/\delta)/4} \le \delta/2,
\end{align*} 
if we set $c$ sufficiently large. In the second to last step we use the fact that $\frac{ \norm{\bv{A}_{\setminus k}}_F^2}{ \norm{\bv{A}_{\setminus m}}_F^2} \ge \frac{1}{2}$ by the definition of $m$. We can similarly prove that $\Pr \left [ \norm{\bv{P}_{\setminus m}\bv C}_F^2 \leq \norm{\bv{A}_{\setminus m}}_F^2 - \epsilon \norm{\bv{A}_{\setminus k}}_F^2\right ] \leq \delta/2$. Union bounding gives the result.
\end{proof}

\section{Independent Sampling Bounds}
\label{app:without_replacement}
In this section we give analogies to Theorem \ref{matrix_chernoff} and Lemma \ref{trace:bound} when columns are sampled independently using their ridge leverage scores rather than sampled with replacement.

\begin{lemma}\label{matrix_chernoff2}
For $i \in \{1,\ldots,d\}$, given $\tilde \tau_i \ge \bar \tau_i(\bv{A})$ for all $i$, let $p_i = \min \left \{\tilde \tau_i \cdot \frac{c\log(k/\delta)}{\epsilon^2}, 1\right \}$ for some sufficiently large constant $c$. Construct $\bv{C}$ by independently sampling each column $\bv{a}_i$ from $\bv{A}$ with probability $p_i$ and scaling selected columns by $1/\sqrt{p_i}$. With probability $1-\delta$, $\bv{C}$ has $O\left (\log(k/\delta)/\epsilon^2 \cdot \sum_i \tilde \tau_i\right )$ columns and satisfies:
\begin{align}
(1-\epsilon) \bv{C} \bv{C}^T - \frac{\epsilon}{k} \norm{\bv{A} - \bv{A}_k}_F^2 \bv{I}_{n\times n} \preceq \bv{A}\bv{A}^T \preceq (1+\epsilon) \bv{C} \bv{C}^T + \frac{\epsilon}{k} \norm{\bv{A} - \bv{A}_k}_F^2 \bv{I}_{n\times n}\tag{\ref{freq_dirs_two_sided}}.
\end{align}
\end{lemma}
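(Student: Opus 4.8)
The plan is to mimic the proof of Theorem~\ref{matrix_chernoff} essentially verbatim, with the only change being the use of an independent (without-replacement) matrix Bernstein bound in place of the with-replacement stable-rank Bernstein bound. First I would record the column-count claim: each column $\bv{a}_i$ is kept independently with probability $p_i = \min\{\tilde\tau_i \cdot \frac{c\log(k/\delta)}{\epsilon^2},1\}$, so $\E[\text{columns}] \le \frac{c\log(k/\delta)}{\epsilon^2}\sum_i \tilde\tau_i$, and a scalar Chernoff bound gives $O(\log(k/\delta)/\epsilon^2 \cdot \sum_i \tilde\tau_i)$ columns with probability $1-\delta/2$ (this uses $\sum_i\tilde\tau_i = \Omega(k)$ so the logarithmic term does not dominate; adjusting $c$ absorbs the constants). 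I'll then condition on a $1-\delta/2$ event and prove the spectral bound with failure probability $\delta/2$.

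For the spectral bound, I would set up the same change of basis as in Theorem~\ref{matrix_chernoff}: write $\bv{\bar\Sigma}^2_{i,i} = \sigma_i^2(\bv{A}) + \frac{\norm{\bv{A}_{\setminus k}}_F^2}{k}$, let $\bv{Y} = \bv{\bar\Sigma}^{-1}\bv{U}^T(\bv{CC}^T - \bv{AA}^T)\bv{U}\bv{\bar\Sigma}^{-1}$, and decompose $\bv{Y} = \sum_i \bv{X}_i$ where $\bv{X}_i = \bv{\bar\Sigma}^{-1}\bv{U}^T(\frac{\mathbb{1}[i\text{ sampled}]}{p_i}\bv{a}_i\bv{a}_i^T - \bv{a}_i\bv{a}_i^T)\bv{U}\bv{\bar\Sigma}^{-1}$. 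These summands are independent, mean zero, and — exactly as in the with-replacement proof, using $\frac{1}{\bar\tau_i(\bv{A})}\bv{a}_i\bv{a}_i^T \preceq \bv{AA}^T + \frac{\norm{\bv{A}_{\setminus k}}_F^2}{k}\bv{I}$ — satisfy $\norm{\bv{X}_i}_2 \le \frac{1}{p_i}\bar\tau_i(\bv{A}) \le \frac{\epsilon^2}{c\log(k/\delta)}$ (with the $p_i=1$ case being trivial since then $\bv{X}_i=\bv{0}$). For the variance, $\E[\bv{X}_i^2] = (\frac{1}{p_i}-1)\bv{\bar\Sigma}^{-1}\bv{U}^T\bv{a}_i\bv{a}_i^T\bv{U}\bv{\bar\Sigma}^{-2}\bv{U}^T\bv{a}_i\bv{a}_i^T\bv{U}\bv{\bar\Sigma}^{-1} \preceq \frac{\bar\tau_i(\bv{A})}{p_i}\bv{\bar\Sigma}^{-1}\bv{U}^T\bv{a}_i\bv{a}_i^T\bv{U}\bv{\bar\Sigma}^{-1}$, so summing gives $\sum_i \E[\bv{X}_i^2] \preceq \frac{\epsilon^2}{c\log(k/\delta)}\bv{\bar\Sigma}^{-1}\bv{U}^T\bv{AA}^T\bv{U}\bv{\bar\Sigma}^{-1} = \frac{\epsilon^2}{c\log(k/\delta)}\bv{\Sigma}^2\bv{\bar\Sigma}^{-2} \preceq \frac{\epsilon^2}{c\log(k/\delta)}\bv{D}$ for the same diagonal $\bv{D}$ as before, with $\norm{\bv{D}}_2 = 1$ and $\tr(\bv{D}) \le 2k$ by Lemma~\ref{sumScores}.

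I would then apply the intrinsic-dimension / stable-rank matrix Bernstein inequality for \emph{independent} sums (Theorem~7.3.1 of \cite{tropp2015introduction}, which is stated for independent summands anyway), getting $\Pr[\norm{\bv{Y}}_2 \ge \epsilon] \le \frac{4\tr(\bv{D})}{\norm{\bv{D}}_2}e^{-\epsilon^2/2 / (\frac{\epsilon^2}{c\log(k/\delta)}(1+\epsilon/3))} \le 8k e^{-c\log(k/\delta)/2} \le \delta/2$ for $c$ large enough. Finally, as in Theorem~\ref{matrix_chernoff}, $\norm{\bv{Y}}_2 \le \epsilon$ rearranges (using $\bv{U}\bv{\bar\Sigma}^2\bv{U}^T = \bv{AA}^T + \frac{\norm{\bv{A}_{\setminus k}}_F^2}{k}\bv{I}$) into $(1-\epsilon)\bv{AA}^T - \frac{\epsilon\norm{\bv{A}_{\setminus k}}_F^2}{k}\bv{I} \preceq \bv{CC}^T \preceq (1+\epsilon)\bv{AA}^T + \frac{\epsilon\norm{\bv{A}_{\setminus k}}_F^2}{k}\bv{I}$, which after adjusting constants on $\epsilon$ is \eqref{freq_dirs_two_sided}; union-bounding the column-count event and the spectral event gives total failure probability $\delta$. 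The main thing to be careful about — rather than a genuine obstacle — is the bookkeeping around the $p_i = 1$ (deterministically kept) columns, which must be checked to contribute nothing to the norm bound, the variance bound, or the rescaling; everything else is a direct transcription of the with-replacement argument. An analogous transcription of the Chernoff argument in Lemma~\ref{trace:bound} handles the trace bound (Lemma~\ref{trace:bound2}), again using independence of the per-column contributions instead of the i.i.d.\ structure.
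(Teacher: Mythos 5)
Your proposal matches the paper's proof of Lemma \ref{matrix_chernoff2} essentially step for step: the same change of basis by $\bv{U}\bv{\bar\Sigma}^{-1}$, the same independent mean-zero summands $\bv{X}_i$ (your indicator form is identical to the paper's case-based definition), the same norm bound via $\frac{1}{\bar\tau_i(\bv{A})}\bv{a}_i\bv{a}_i^T \preceq \bv{AA}^T + \frac{\norm{\bv{A}_{\setminus k}}_F^2}{k}\bv{I}$, the same variance bound (your coefficient $\frac{1}{p_i}-1$ equals the paper's $p_i(\frac{1}{p_i}-1)^2 + (1-p_i)$), and the same application of the intrinsic-dimension matrix Bernstein inequality with $\norm{\bv{D}}_2=1$, $\tr(\bv{D})\le 2k$, followed by a scalar Chernoff bound on the column count and a union bound. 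The argument is correct as proposed.
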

\begin{proof}
Again we rewrite the ridge leverage score definition
using $\bv{A}$'s singular value decomposition:
\begin{align*}
\bar \tau_i(\bv{A}) &= \bv{a}_i^T \left (\bv{U} \bv{\Sigma}^2 \bv{U}^T + \frac{\norm{\bv{A}_{\setminus k}}_F^2}{k} \bv{U}  \bv{U}^T \right)^+ \bv{a}_i\\
&= \bv{a}_i^T \left (\bv{U} \bv{\bar \Sigma}^2 \bv{U}^T \right )^+ \bv{a}_i
=  \bv{a}_i^T \left (\bv{U} \bv{\bar \Sigma}^{-2} \bv{U}^T \right ) \bv{a}_i,
\end{align*}
where $\bv{\bar \Sigma}^2_{i,i} = \sigma^2_i(\bv{A}) + \frac{\norm{\bv{A}_{\setminus k}}_F^2}{k}$.
For each $i \in 1,\ldots,d$ define the matrix valued random variable:
\begin{align*}
\bv{X}_i = 
\begin{cases}
\left (\frac{1}{p_i} -1 \right ) \bv{\bar \Sigma}^{-1} \bv{U}^T\bv{a}_i \bv{a}_i^T \bv{U} \bv{\bar \Sigma}^{-1} \text{ with probability } p_i\\
-\bv{\bar \Sigma}^{-1} \bv{U}^T\bv{a}_i \bv{a}_i^T \bv{U} \bv{\bar \Sigma}^{-1} \text{ with probability } (1-p_i)
\end{cases}
\end{align*}

Let $\bv{Y} = \sum \bv{X}_i$. We have $\E \bv{Y} = \bv{0}$. Furthermore, $\bv{CC}^T = \bv{U}\bv{\bar \Sigma} \bv{Y} \bv{\bar \Sigma} \bv{U} + \bv{AA}^T$. Showing $\norm{\bv{Y}}_2 \le \epsilon$ gives $-\epsilon \bv{I} \preceq \bv{Y} \preceq  \epsilon \bv{I}$, and since $\bv{U}\bv{\bar \Sigma}^2 \bv{U}^T = \bv{AA}^T +\frac{ \norm{\bv{A}_{\setminus k}}_F^2}{k} \bv{I}$ would give:
\begin{align*}
(1-\epsilon)\bv{AA}^T - \frac{\epsilon \norm{\bv{A}_{\setminus k}}_F^2}{k} \bv{I} \preceq \bv{C}\bv{C}^T \preceq (1+\epsilon)\bv{AA}^T + \frac{\epsilon \norm{\bv{A}_{\setminus k}}_F^2}{k} \bv{I}.
\end{align*}
After rearranging and adjusting constants on $\epsilon$, this statement is equivalent to \eqref{freq_dirs_two_sided}.

To prove that $\norm{\bv{Y}}_2$ is we use the same stable rank matrix Bernstein inequality used for our with replacement results \cite{tropp2015introduction}.
If $p_i = 1$ (i.e. $\tilde  \tau_i \cdot c\log(k/\delta)/\epsilon^2 \ge 1$) then $\bv{X}_i = \bv{0}$ so $\norm{\bv{X}_i}_2 = 0$. Otherwise, we use the fact that $\frac{1}{\bar \tau_i(\bv{A})} \bv{a}_i \bv{a}_i^T \preceq \bv{A}\bv{A}^T + \frac{ \norm{\bv{A}_{\setminus k}}_F^2}{k} \bv{I}$, which lets us bound:
\begin{align*}
\frac{1}{\tilde \tau_i} \cdot \bv{\bar \Sigma}^{-1} \bv{U}^T\bv{a}_i \bv{a}_i^T \bv{U} \bv{\bar \Sigma}^{-1} \preceq \bv{\bar \Sigma}^{-1} \bv{U}^T\left(\bv{A}\bv{A}^T + \frac{ \norm{\bv{A}_{\setminus k}}_F^2}{k} \bv{I}\right) \bv{U} \bv{\bar \Sigma}^{-1} = \bv{I}.
\end{align*}
 So we have $\bv{X}_i \preceq \frac{1}{p_i} \bv{\bar \Sigma}^{-1} \bv{U}^T\bv{a}_i \bv{a}_i^T \bv{U} \bv{\bar \Sigma}^{-1} \preceq\frac{\epsilon^2}{c\log(k/\delta)}\bv{I}$ and hence $
\norm{\bv{X}_i}_2 \le \frac{\epsilon^2}{c\log(k/\delta)}.$

Next we bound the variance of $\bv{Y}$. 
\begin{align}
\label{eq:var_bound2}
\E (\bv{Y}^2) = \sum \E (\bv{X}_i^2 ) &\preceq \sum \left [p_i \left (\frac{1}{p_i}-1\right )^2 + (1-p_i) \right ] \cdot \bv{\bar \Sigma}^{-1} \bv{U}^T\bv{a}_i \bv{a}_i^T \bv{U} \bv{\bar \Sigma}^{-2} \bv{U}^T\bv{a}_i \bv{a}_i^T \bv{U} \bv{\bar \Sigma}^{-1}\nonumber\\
&\preceq \sum \frac{1}{p_i} \cdot  \bar \tau_i(\bv{A}) \cdot \bv{\bar \Sigma}^{-1} \bv{U}^T\bv{a}_i \bv{a}_i^T \bv{U} \bv{\bar \Sigma}^{-1}
\preceq \frac{\epsilon^2}{c\log(k/\delta)} \bv{\bar \Sigma}^{-1} \bv{U}^T\bv{A}\bv{A}^T \bv{U} \bv{\bar \Sigma}^{-1}\nonumber\\
&\preceq  \frac{\epsilon^2}{c\log(k/\delta)} \bv \Sigma^2 \bv{\bar \Sigma}^{-2} 
\preceq \frac{\epsilon^2}{c\log(k/\delta)} \bv{D}.
\end{align}
where again we set $\bv{D}_{i,i} = 1$ for $i \in 1,\ldots,k$ and $\bv{D}_{i,i} = \frac{\sigma_i^2}{\sigma_i^2 + \norm{\bv{A}_{\setminus k}}_F^2/k}$ for all $i \in k+1,\ldots,n$. By the stable rank matrix Bernstein inequality given in Theorem 7.3.1 of \cite{tropp2015introduction}, for $\epsilon < 1$,
\begin{align}
\label{eq:almost_chernoff2}
\Pr \left [\norm{\bv Y} \ge \epsilon \right ] &\le \frac{4\tr(\bv D) }{\norm{\bv{D}}_2} e^{\frac{-\epsilon^2/2}{\left (\frac{\epsilon^2}{c\log(k/\delta)}(\norm{\bv{D}}_2+\epsilon/3)\right )}}.
\end{align}
Clearly $\norm{\bv D}_2 = 1$. Furthermore, 
\begin{align*}
\tr(\bv{D}) = k + \sum_{i=k+1}^d\frac{\sigma_i^2}{\sigma_i^2 + \frac{\norm{\bv{A}_{\setminus k}}_F^2}{k}} \leq k +\sum_{i=k+1}^d\frac{\sigma_i^2}{\frac{\norm{\bv{A}_{\setminus k}}_F^2}{k}} = k +\frac{\sum_{i=k+1}^d\sigma_i^2}{\frac{\norm{\bv{A}_{\setminus k}}_F^2}{k}} \leq k+ k.
\end{align*}
Plugging into \eqref{eq:almost_chernoff}, we see that
\begin{align*}
\Pr \left [\norm{\bv Y} \ge \epsilon \right ] &\le 8k e^{- \frac{c\log(k/\delta)}{2}})\le \delta/2,
\end{align*}
if we choose the constant $c$ large enough. 
So we have established \eqref{freq_dirs_two_sided}. 

All that remains to note is that, the expected number of columns in $\bv{C}$ is at most  $\frac{c\log(k/\delta)}{\epsilon^2} \cdot \sum_{i=1}^d \tilde \tau_i$. Accordingly, $\bv{C}$ has at most $O\left (\frac{\log(k/\delta)}{\epsilon^2} \cdot \sum_i \tilde \tau_i \right)$ columns with probability $> 1-\delta/2$ by a standard Chernoff bound. Union bounding over failure probabilities gives the lemma.
\end{proof}

\begin{lemma}\label{trace:bound2}
For $i \in \{1,\ldots,d\}$, given $\tilde \tau_i \ge \bar \tau_i(\bv{A})$ for all $i$, let $p_i = \min \left \{\bar \tau_i(\bv A) \cdot \frac{c\log(k/\delta)}{\epsilon^2}, 1\right \}$ for some sufficiently large constant $c$. Construct $\bv{C}$ by independently sampling each column $\bv{a}_i$ from $\bv{A}$ with probability $p_i$ and scaling selected columns by $1/\sqrt{p_i}$.  Let $m$ be the index of the smallest singular value with $\sigma_m^2 \ge \frac{\norm{\bv{A}_{\setminus k}}_F^2}{k}$. With probability $1-\delta$, $\bv{C}$ satisfies:
\begin{align}\label{trace_guarantee2}
|\tr(\bv{A}_{\setminus m}\bv{A}_{\setminus m}^T) -  \tr(\bv{U}_{\setminus m}\bv{U}_{\setminus m}^T\bv{C}\bv{C}^T\bv{U}_{\setminus m}\bv{U}_{\setminus m}^T)| \leq \epsilon \norm{\bv{A}_{\setminus k}}_F^2.
\end{align}
\end{lemma}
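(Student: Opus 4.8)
The plan is to follow the proof of Lemma~\ref{trace:bound} almost line for line, replacing its sum over $t$ with-replacement samples by a sum of $d$ independent Bernoulli-weighted terms, one per column of $\bv{A}$. Write $\bv{P}_{\setminus m} = \bv{U}_{\setminus m}\bv{U}_{\setminus m}^T$, so that the quantity to control is $|\norm{\bv{P}_{\setminus m}\bv{C}}_F^2 - \norm{\bv{A}_{\setminus m}}_F^2|$. If $\norm{\bv{A}_{\setminus k}}_F^2 = 0$ then $\bv{A}$ has rank at most $k$, $m$ equals the rank of $\bv{A}$, $\bv{P}_{\setminus m}$ is the zero projection, and both sides vanish; so assume $\norm{\bv{A}_{\setminus k}}_F^2 > 0$. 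For each $i$ let $X_i$ be the independent random variable equal to $\frac{1}{p_i}\norm{\bv{P}_{\setminus m}\bv{a}_i}_2^2$ with probability $p_i$ and $0$ otherwise. Since a selected column $\bv{a}_i$ is rescaled by $1/\sqrt{p_i}$, we have $\norm{\bv{P}_{\setminus m}\bv{C}}_F^2 = \sum_i X_i$ and $\E\!\left[\sum_i X_i\right] = \sum_i \norm{\bv{P}_{\setminus m}\bv{a}_i}_2^2 = \norm{\bv{P}_{\setminus m}\bv{A}}_F^2 = \norm{\bv{A}_{\setminus m}}_F^2$.

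First I would bound the summands and peel off the deterministic ones. A column with $p_i = 1$ contributes the constant $\norm{\bv{P}_{\setminus m}\bv{a}_i}_2^2$ to both $\norm{\bv{P}_{\setminus m}\bv{C}}_F^2$ and $\norm{\bv{A}_{\setminus m}}_F^2$, so such terms cancel in the deviation and may be discarded: it suffices to control $S := \sum_{i:\,p_i<1} X_i$ around its mean $\E S \le \norm{\bv{A}_{\setminus m}}_F^2$. For $i$ with $p_i < 1$ we have $1/p_i \le \frac{\epsilon^2}{c\log(k/\delta)\,\bar\tau_i(\bv{A})}$, so exactly the chain of inequalities from the proof of Lemma~\ref{trace:bound} --- in particular the inequality $\bv{U}_{\setminus m}\bv{\bar \Sigma}^{-2}\bv{U}_{\setminus m}^T \succeq \frac{k}{2\norm{\bv{A}_{\setminus k}}_F^2}\bv{P}_{\setminus m}$ for the indices beyond $m$, which forces $\bar\tau_i(\bv{A}) \ge \frac{k}{2\norm{\bv{A}_{\setminus k}}_F^2}\norm{\bv{P}_{\setminus m}\bv{a}_i}_2^2$ --- gives the deterministic bound $X_i \le b := \frac{2\epsilon^2}{c\log(k/\delta)}\norm{\bv{A}_{\setminus k}}_F^2$. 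Moreover $\E[X_i^2] \le b\,\norm{\bv{P}_{\setminus m}\bv{a}_i}_2^2$, hence $\sum_{i:\,p_i<1}\mathrm{Var}(X_i) \le b\,\E S \le b\,\norm{\bv{A}_{\setminus m}}_F^2 \le 2b\,\norm{\bv{A}_{\setminus k}}_F^2$, where the last step uses the estimate $\norm{\bv{A}_{\setminus m}}_F^2 \le 2\norm{\bv{A}_{\setminus k}}_F^2$ established in the proof of Lemma~\ref{trace:bound}.

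Finally I would apply a scalar Bernstein inequality to the independent, $[0,b]$-bounded variables $\{X_i : p_i < 1\}$ with the variance bound above. Taking the target deviation $a = \epsilon\norm{\bv{A}_{\setminus k}}_F^2$ and recalling $b = \frac{2\epsilon^2}{c\log(k/\delta)}\norm{\bv{A}_{\setminus k}}_F^2$, the Bernstein exponent is $\Theta(c\log(k/\delta))$ for $\epsilon < 1$, so $\Pr[\,|S - \E S| \ge \epsilon\norm{\bv{A}_{\setminus k}}_F^2\,] \le e^{-\Omega(c\log(k/\delta))} \le \delta$ once $c$ is a large enough constant --- which is precisely the assertion of the lemma. (A multiplicative Chernoff bound as in Lemma~\ref{trace:bound} also works, but Bernstein avoids casework on how large $\E S$ is.) I do not expect a genuine obstacle here; the only point that is new relative to the with-replacement argument is that independent sampling can produce columns with $p_i = 1$, and the key observation is simply that these enter $\norm{\bv{P}_{\setminus m}\bv{C}}_F^2$ and $\norm{\bv{A}_{\setminus m}}_F^2$ identically, so they contribute neither to the deviation nor to the variance that governs concentration.
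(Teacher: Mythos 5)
Your proposal is correct and follows essentially the same route as the paper: write $\norm{\bv{P}_{\setminus m}\bv{C}}_F^2$ as a sum of independent per-column bounded random variables, bound each term by $\frac{2\epsilon^2}{c\log(k/\delta)}\norm{\bv{A}_{\setminus k}}_F^2$ via $\bar\tau_i(\bv{A}) \ge \frac{k}{2\norm{\bv{A}_{\setminus k}}_F^2}\norm{\bv{P}_{\setminus m}\bv{a}_i}_2^2$, and apply a scalar concentration inequality (Bernstein in your case, a multiplicative Chernoff bound in the paper's). Your explicit peeling off of the deterministic $p_i = 1$ terms is a small improvement in care over the paper's write-up, which silently assumes $p_i < 1$ in its first equality.
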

\begin{proof}

We need to show $\tr(\bv{A}_{\setminus m}\bv{A}_{\setminus m}^T) -  \tr(\bv{U}_{\setminus m}\bv{U}_{\setminus m}^T\bv{B}\bv{B}^T\bv{U}_{\setminus m}\bv{U}_{\setminus m}^T) \ge -\epsilon \norm{\bv{A}_{\setminus m}}_F^2.$ Letting $\bv{P}_{\setminus m} = \bv{U}_{\setminus m}\bv{U}_{\setminus m}^T$, we can rewrite this as:
\begin{align*}
\norm{\bv{P}_{\setminus m} \bv{B}}_F^2 - \norm{\bv{A}_{\setminus m}}_F^2 \le \epsilon \norm{\bv{A}_{\setminus m}}_F^2.
\end{align*}
We can write $\norm{\bv{P}_{\setminus m} \bv{B}}_F^2$ as a sum over column norms:
\begin{align*}
\norm{\bv{P}_{\setminus m} \bv{B}}_F^2 = \sum_{i=1}^d \mathcal{I}_i \frac{1}{p_i} \norm{\bv{P}_{\setminus m}\bv{a}_i}_2^2,
\end{align*}
where $ \mathcal{I}_i$ is an indicator random variable equal to $1$ with probability $p_i$ and $0$ otherwise. 

We have:
\begin{align*}
\frac{1}{p_i} \norm{\bv{P}_{\setminus m}\bv{a}_i}_2^2 = \frac{\epsilon^2}{c\log(k/\delta)} \cdot\frac{\norm{\bv{P}_{\setminus m}\bv{a}_i}_2^2}{ \tilde  \tau_i} &\le \frac{\epsilon^2}{c\log(k/\delta)} \cdot \frac{\norm{\bv{P}_{\setminus m}\bv{a}_i}_2^2}{\bv{a}_i^T\left(\bv{U}\bv{\bar \Sigma}^{-2}\bv{U}^T\right)\bv{a}_i}\\ &\le\frac{\epsilon^2}{c\log(k/\delta)} \cdot \frac{\norm{\bv{P}_{\setminus m}\bv{a}_i}_2^2}{\bv{a}_i^T\bv{P}_{\setminus m}\left(\bv{U}\bv{\bar \Sigma}^{-2}\bv{U}^T\right)\bv{P}_{\setminus m}\bv{a}_i} \\
&= \frac{\epsilon^2}{c\log(k/\delta)} \cdot \frac{\norm{\bv{P}_{\setminus m}\bv{a}_i}_2^2}{\bv{a}_i^T\bv{P}_{\setminus m}\left(\bv{U}_{\setminus m}\bv{\bar \Sigma}^{-2}\bv{U}_{\setminus m}^T\right)\bv{P}_{\setminus m}\bv{a}_i} \\
&\le \frac{\epsilon^2}{c\log(k/\delta)} \cdot \frac{\norm{\bv{P}_{\setminus m}\bv{a}_i}_2^2}{\frac{k}{2\norm{\bv{A}_{\setminus k}}_F^2}\norm{\bv{P}_{\setminus m}\bv{a}_i}_2^2}\\
& \le \frac{2\epsilon^2}{c\log(k/\delta)} \cdot\norm{\bv{A}_{\setminus k}}_F^2,
\end{align*}
where the second to last inequality follows from the fact that 
$\bv{\bar \Sigma}^2_{i,i} = \sigma^2_i(\bv{A}) + \frac{\norm{\bv{A}_{\setminus k}}_F^2}{k} \le \frac{2\norm{\bv{A}_{\setminus k}}_F^2}{k}$ for $i \ge m$. Therefore, $\bv{U}_{\setminus m}\bv{\bar \Sigma}^{-2}\bv{U}_{\setminus m}^T \succeq \frac{k}{2\norm{\bv{A}_{\setminus k}}_F^2} \bv{P}_{\setminus m}$.

So $\frac{c\log(k/\delta)}{2\epsilon^2\norm{\bv{A}_{\setminus m}}_F^2} \cdot \frac{1}{p_i} \norm{\bv{P}_{\setminus m}\bv{a}_i}_2^2 \in [0,1]$ and by a Chernoff bound we have:
\begin{align*}
\Pr \left [ \norm{\bv{P}_{\setminus m}\bv B}_F^2 \ge (1+\epsilon) \norm{\bv{A}_{\setminus m}}_F^2 \right ]
&= \Pr \left [ \frac{c\log(k/\delta)}{2\epsilon^2\norm{\bv A_{\setminus m}}_F^2}\sum_{i=1}^d I_i \frac{1}{p_i} \norm{\bv{P}_{\setminus m}\bv{a}_i}_2^2 \ge (1+\epsilon) \frac{c\log(k/\delta)}{2\epsilon^2} \right ]\\ &\le e^{-c\log(k/\delta)/4} \le \delta/2,
\end{align*} 
if we set $c$ sufficiently large.
\end{proof}

\end{document}